\newtheorem{proposition}{Proposition}
\newtheorem{Theorem}{Theorem}
\newtheorem{Definition}{Definition}
\newtheorem{Lemma}{Lemma}
\newenvironment{proof}{\textit{Proof:}}{}
\newcommand{\qed}{\hfill$\Box$}
\newcommand{\mktype}[1]{\mathsf{#1}}
\newcommand{\Int}{\mktype{Int}}
\newcommand{\Qbit}{\mktype{Qbit}}
\newcommand{\NS}{\mktype{NS}}
\newcommand{\Bit}{\mktype{Bit}}
\newcommand{\Op}[1]{\mktype{Op}(#1)}
\newcommand{\mkterm}[1]{\mathsf{#1}}
\newcommand{\unit}{\mkterm{unit}}
\newcommand{\ctxt}[2]{{#1}[#2]}
\newcommand{\bnf}{::=}
\newcommand{\alt}{~|~}
\newcommand{\Prob}[1]{\boxplus_{#1}}
\newcommand{\ms}[1]{|{#1}|^{2}}
\newcommand{\newcnfig}[4]{({#1};{#2};{#3};{#4})}
\newcommand{\mkRrule}[1]{\mbox{\textsc{R-#1}}}
\newcommand{\Rplus}{\mkRrule{Plus}}
\newcommand{\Rmeasureq}{\mkRrule{Measure-QBIT}}
\newcommand{\Rnsmeasuret}{\mkRrule{Measure-NS-2}}
\newcommand{\Rpsmeasure}{\mkRrule{PS-Measure}}
\newcommand{\Rcontext}{\mkRrule{Context}}
\newcommand{\Rnstrans}{\mkRrule{Trans-NS}}
\newcommand{\Rifthen}{\mkRrule{IfThen}}
\newcommand{\Lps}{\mkLrule{PS}}
\newcommand{\typed}[2]{{#1}\mathrel{\!:\!}{#2}}
\newcommand{\ptyped}[2]{{#1}\vdash{#2}}
\newcommand{\ket}[1]{|#1\rangle}
\newcommand{\nil}{\mathbf{0}}
\renewcommand{\parallel}{\mathbin{\mid}}
\newcommand{\inp}[2]{{#1}?{[#2]}}
\newcommand{\outp}[2]{{#1}!{[#2]}}
\renewcommand{\vec}[1]{\widetilde{#1}}
\newcommand{\new}{\mathsf{new}\ }
\newcommand{\qbit}{\mathsf{qbit}\ }
\newcommand{\ns}{\mathsf{ns}\ }
\newcommand{\bit}{\mathsf{bit}}
\newcommand{\chant}[1]{\widehat{~}[#1]}
\newcommand{\tid}[2]{{#1}\mathrel{\!:\!}{#2}}
\newcommand{\qgate}[1]{\mathsf{#1}}
\newcommand{\pname}[1]{\mathit{#1}}
\newcommand{\action}[1]{\{#1\}}
\newcommand{\trans}{\mathbin{*\!\!=}}
\newcommand{\sep}{\,.\,}
\newcommand{\measure}{\mathsf{measure}\ }
\newcommand{\psmeasure}{\mathsf{psmeasure}\ }
\newcommand{\matr}[4]{\begin{pmatrix}{#1}&{#2}\\{#3}&{#4}\end{pmatrix}}
\newcommand{\qstore}[2]{[#1 \mapsto #2]}
\newcommand{\transition}[1]{\stackrel{#1}{\longrightarrow}}
\newcommand{\transitione}{\longrightarrow_e}
\newcommand{\transitionv}{\longrightarrow_v}
\newcommand{\transitionp}[1]{\stackrel{#1}{\longrightarrow_p}}
\newcommand{\ptrans}[1]{\stackrel{#1}\rightsquigarrow}
\newcommand{\weaktrans}[1]{\stackrel{#1}{\Longrightarrow}}
\newcommand{\opttrans}[1]{\stackrel{#1}{\longrightarrow}^{+}}
\newcommand{\cfgdistsum}{\oplus}
\newcommand{\Dist}[2]{\cfgdistsum_{#1}~#2~}
\newtheorem{example}{Example}
\newcommand{\subst}[2]{\{{#1}/{#2}\}}
\newcommand{\bra}[1]{\langle#1|}
\newcommand{\ketbra}[2]{\ket{#1}\bra{#2}}
\newcommand{\parcomp}{~\Vert~}
\newcommand{\rhoe}{\rho_E}
\newcommand{\ltrm}[3]{\lambda{#1}\bullet{#2}; {#3}}
\newcommand{\ltrmshort}[2]{\lambda{#1}\bullet{#2}}
\newcommand{\states}{\ensuremath{\mathcal{S}}}
\newcommand{\ntates}{\ensuremath{\mathcal{T}_n}}
\newcommand{\ptates}{\ensuremath{\mathcal{T}_p}}
\newcommand{\mkLrule}[1]{\mbox{\textsc{L-#1}}}
\newcommand{\mkPrule}[1]{\mbox{\textsc{P-#1}}}
\newcommand{\Lexpr}{\mkLrule{Expr}}
\newcommand{\Lcom}{\mkLrule{Com}}
\newcommand{\Lnew}{\mkLrule{Res}}
\newcommand{\Lqbit}{\mkLrule{Qbit}}
\newcommand{\Loutq}{\mkLrule{Out-Qbit}}
\newcommand{\Loutns}{\mkLrule{Out-Ns}}
\newcommand{\Lpar}{\mkLrule{Par}}
\newcommand{\Lns}{\mkLrule{Ns}}
\newcommand{\Lsum}{\mkLrule{Sum}}
\newcommand{\Lin}{\mkLrule{In}}
\newcommand{\Lprob}{\mkLrule{Prob}}
\newcommand{\Lact}{\mkLrule{Act}}
\newcommand{\Pout}{\mkPrule{Out}}
\newcommand{\Pin}{\mkPrule{In}}
\newcommand{\Ppar}{\mkPrule{Par}}
\newcommand{\Psum}{\mkPrule{Sum}}
\newcommand{\Pnew}{\mkPrule{Res}}
\newcommand{\Pps}{\mkPrule{PS}}
\newcommand{\pbsim}{\leftrightarroweq}
\newcommand{\fpbsim}{\leftrightarroweq^c}
\renewcommand{\vec}[1]{\widetilde{#1}}
\newcommand{\trace}{\mathrm{tr}}
\newcommand{\ppolse}{\pname{PolSe}}
\newcommand{\ppolsect}{\pname{PolSe_{CT}}}
\newcommand{\ppsm}{\pname{PSM}}
\newcommand{\pmmt}{\pname{MMT}}
\newcommand{\pbs}{\pname{BS}}
\newcommand{\pdet}{\pname{Det}}
\newcommand{\pdete}{\pname{PDet}}
\newcommand{\pmodel}{\pname{Model_1}}
\newcommand{\pmodelt}{\pname{Model_2}}
\newcommand{\pdetone}{\pname{Det_{1}}}
\newcommand{\pdettwo}{\pname{Det_{2}}}
\newcommand{\pcounter}{\pname{Counter}}
\newcommand{\pcnot}{\pname{CNOT}}
\newcommand{\poutp}{\pname{Output}}
\newcommand{\pop}{\pname{OP}}
\newcommand{\pspec}{\pname{Specification_1}}
\newcommand{\pspect}{\pname{Specification_2}}
\title{Verification of Linear Optical Quantum Computing using Quantum Process Calculus}
\author{Sonja Franke-Arnold
\institute{School of Physics and Astronomy\\
University of Glasgow, UK}
\email{sonja.franke-arnold@glasgow.ac.uk}
\and
Simon J. Gay
\institute{School of Computing Science\\
University of Glasgow, UK}
\email{Simon.Gay@glasgow.ac.uk}
\and
Ittoop Vergheese Puthoor \thanks{Supported by a Lord Kelvin / Adam Smith Scholarship from the University of Glasgow.}
\institute{School of Computing Science and\\
School of Physics and Astronomy\\
University of Glasgow, UK}
\email{ittoop@dcs.gla.ac.uk}
}
\begin{document}
\maketitle


\begin{abstract}
We explain the use of quantum process calculus to describe and analyse linear optical quantum computing (LOQC). The main idea is to define two \emph{processes}, one modelling a linear optical system and the other expressing a specification, and prove that they are \emph{behaviourally equivalent}. We extend the theory of \emph{behavioural equivalence} in the process calculus Communicating Quantum Processes (CQP) to include multiple particles (namely photons) as information carriers, described by \emph{Fock states} or \emph{number states}. We summarise the theory in this paper, including the crucial result that equivalence is a \emph{congruence}, meaning that it is preserved by embedding in any context. In previous work, we have used quantum process calculus to model LOQC but without verifying models against specifications. In this paper, for the first time, we are able to carry out verification. We illustrate this approach by describing and verifying two models of an LOQC CNOT gate.

\end{abstract}

\section{Introduction}
\label{sec-intro}
\label{sec:intro}
Quantum information processing (QIP) is a field of research, which involves the study of storing and manipulating information in systems that are governed by the laws of quantum mechanics. This provides huge potential in quantum computation, cryptography and communication \cite{Nielsen2000}, and first secure cryptography systems are already commercially available \cite{IDQ2001a}. Linear optical quantum computing (LOQC) is being pioneered for applications in scalable quantum computing \cite{Knill2001}.  LOQC is based on \emph{spatial encoding} where a quantum bit is encoded as a superposition of two spatial modes or the two optical paths that can be travelled by a single photon \cite{Brien2003}. The inherent weak interaction between photons as information carriers makes them highly suitable for communication applications. 

Quantum process calculus is a class of \emph{formal methods}, able to describe and analyse the behaviour of systems that combine quantum and classical elements. The success of formal methods in classical computer science has motivated the development of quantum process calculus called Communicating Quantum Processes (CQP) \cite{Gay2005}. CQP provides an abstract model of the quantum system, with the assumption that a qubit is considered as a localised unit of information. CQP verifies the correctness of a system by employing the theory of \emph{behavioural equivalence} \cite{DavidsonThesis} between processes. Also, the equivalence is a \emph{congruence}, meaning that it is preserved by inclusion in any environment. The theory has been applied to the analysis of a quantum error correcting code  \cite{Davidson2011}. 

\paragraph*{Contributions:}

This paper enhances from previous work \cite{Arnold2013} significantly in two different ways. First, we provide the theory of equivalence in CQP for LOQC, which has been extended from Davidson thesis \cite{DavidsonThesis}, in order to analyse and verify a realistic experimental system. The \emph{congruence} property of equivalence in CQP is applied to the LOQC CNOT gate, which provides us for the first time with a more physical understanding of the property of equivalence. Second, we present two models of an experimental system that demonstrates LOQC CNOT gate and prove that they are equivalent to their specification. These two models not only demonstrates the gate but uses two different measurement semantics which exhibits the flexibility of process calculus approach to work at different levels of abstraction. In our second model, we demonstrate \emph{post-selection}, which plays an important role in LOQC, where one considers only a subset of all experimental runs that fulfil predefined criteria.

The present paper begins in Section~\ref{sec:prelim} by recalling the basic concepts of quantum optics which are needed to understand LOQC. In Section~\ref{sec:CQP} we review the language of CQP, illustrated with a model of the experimental system that demonstrates LOQC CNOT gate. Section~\ref{sec:equivalence} summarises the extension of the theory of equivalence in CQP, which is applied to LOQC. In Section~\ref{sec:model} we describe the post-selection process and analyse a model of an experimental system demonstrating \emph{post-selective} LOQC CNOT gate. Finally, Section~\ref{sec:conclusion} concludes with an indication of directions for future work. 

\paragraph*{Related Work:}

All the quantum process calculi which have been developed so far considered a qubit as an abstract particle that can be sent or received through channels. Feng \emph{et al.} \cite{Feng2011} developed qCCS, a quantum extension of the classical value-passing CCS \cite{Milner1989} and proved that weak bisimilarity is a congruence. The result is applied to quantum teleporation, superdense coding and quantum key distribution protocols \cite{Kubota2012}. 

\section{Background}
\label{sec-prelim}
\label{sec:prelim}

We recall briefly the aspects of quantum theory and quantum optics relevant for this paper. For more detailed information we refer to the book by Nielsen and Chuang \cite{Nielsen2000} and research papers \cite{Knill2001,Brien2003,Ralph2002}.

A \emph{qubit} is an information unit comprising two states ($\ket{0}$ and $\ket{1}$) which are called the \emph{standard} basis. The \emph{state space} $\mathbb{H}$ (or Hilbert space) of a qubit consists of all \emph{superpositions} of the basis states: $\ket{\psi} = \alpha\ket{0} + \beta\ket{1}$ where $\alpha$ and $\beta$ are complex numbers such that $|\alpha|^{2} + |\beta|^{2} = 1$. A qubit is conventionally realised by an individual photon with the two basis states refering to orthogonal polarisation directions of the photon ($\ket{0} = \ket{H}$ and $\ket{1} = \ket{V}$). We refer to the qubit as a polarisation qubit where $H$ and $V$ denote horizontal and vertical polarisation, respectively.  We introduce the notation $\alpha\ket{H} + \beta\ket{V} = \alpha\ket{10}_{HV} + \beta\ket{01}_{HV}$, where the entries in the ket states represent the number of photons (photon number $n$) in the state basis indicated by the subscripts. This will allow us to generalise the notation to more than one photon. Two photons in the states $\alpha_{i}\ket{H} + \beta_{i}\ket{V}$ (where $i$ is 1,2 respectively for each photon) can then be encoded in the shorthand $\alpha_{1}\alpha_{2}\ket{20}_{HV} + \beta_{1}\beta_{2}\ket{02}_{HV} + (\alpha_{1}\beta_{2} + \alpha_{2}\beta_{1})\ket{11}_{HV}$, if they are indistinguishable in all other parameters. In LOQC \cite{Knill2001}, we consider qubits which are encoded in different optical paths  '$a$' and '$b$' rather than different polarisation states. This is referred to as \emph{dual rail logic}. Again, we denote the quantum states in the \emph{number state} basis, giving the number of photons travelling along the different paths. The basis states in dual rail logic are then $\ket{0} \rightarrow \ket{10}_{ab}$, and similarly for $\ket{1} \rightarrow \ket{01}_{ab}$. 
In experiments, the conversion of a \emph{polarisation} qubit into a \emph{dual rail} qubit is accomplished by the combination of a polarising beam splitter (PBS) and a phase shifter (PR) \cite{Brien2003}, which works as a unitary operation $\mathsf{PS}$.

\begin{Definition}[$\mathsf{PS}$ operator]
 A $\mathsf{PS}$ is an operator that transforms a polarisation qubit $\ket{\psi} \in \mathbb{H}_{q}$ to a dual rail qubit $\ket{\phi} \in \mathbb{H}_{s}$, where $\mathbb{H}_{q}$ and $\mathbb{H}_{s}$ are the respective Hilbert spaces for the polarisation and dual rail qubits. The action of $\mathsf{PS}$ is then defined by
 $\mathsf{PS}\ket{H} \equiv \mathsf{PS}\ket{10}_{HV} = \ket{10}_{ab}$ and
$\mathsf{PS}\ket{V} \equiv \mathsf{PS}\ket{01}_{HV} = \ket{01}_{ab}$
\label{def:ps}
\end{Definition}

Operations on number states (or \emph{Fock states} $\ket{n}$) are described in terms of the creation and annihilation operators $\hat{a}^{\dagger}$ and $\hat{a}$, which when acting on a state $\ket{n}$ increase or decrease the photon number ($n$) by one. Therefore, each Fock state can be built up from creation operators given by $\ket{n} = \frac{(\hat{a}^{\dagger})^{n}}{\sqrt{n!}}\ket{0}$. In LOQC, optical elements such as phase shifters and non polarising beam splitters perform  \emph{unitary transformations}, which describe the evolution of a closed quantum system. A unitary transformation in LOQC \cite{Myers2005} can be described by its effect on each photon path's creation operator. A non polarising beam splitter (BS) is defined by the transformation matrix 
\[
\begin{array}{rcl}
U(BS) & = & \matr{\cos\theta}{e^{i\phi}\sin\theta}{e^{-i\phi}\sin\theta}{-\cos\theta}
\end{array}
\]

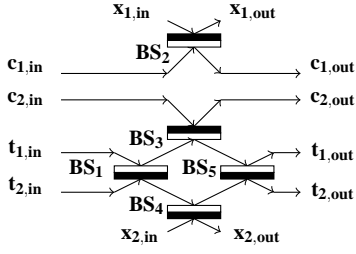
\begin{figure}
\begin{tikzpicture}[scale=0.35]
\draw [fill=black] (21,1.25) rectangle (23,1.5);\draw [fill=white] (21,1) rectangle (23,1.25);
\draw [->]  (22,1.5) -- (23,2); \node at (24.25,2.3){$\bold{_{x_{1,out}}}$};
\draw [->]  (21,2) -- (22,1.5); \node at (19.7,2.3){$\bold{_{x_{1,in}}}$};
\node at (15.7,-1){$\bold{_{c_{2,in}}}$};
\node at (15.7,0.25){$\bold{_{c_{1,in}}}$};
\node at (27.25,-1){$\bold{_{c_{2,out}}}$};
\node at (27.25,0.25){$\bold{_{c_{1,out}}}$};
\draw [->]  (17,0) -- (21,0); \draw [->]  (21,0) -- (22,1);\draw [->]  (22,1) -- (23,0);\draw [->]  (23,0) -- (26,0);
\draw [->]  (17,-1) -- (21,-1); \draw [->]  (21,-1) -- (22,-2);\draw [->]  (22,-2) -- (23,-1);\draw [->]  (23,-1) -- (26,-1);
\draw [fill=white] (21,-2.25) rectangle (23,-2.5);\draw [fill=black] (21,-2) rectangle (23,-2.25);
\draw [->]  (17,-3) -- (19,-3); \draw [->]  (19,-3) -- (20,-3.5);\draw [->]  (20,-3.5) -- (22,-2.5);
\draw [fill=black] (19,-3.75) rectangle (21,-4);\draw [fill=white] (19,-3.5) rectangle (21,-3.75);\node at (18,-3.6){$\bold{_{BS_{1}}}$};\node at (20.5,0.75){$\bold{_{BS_{2}}}$};\node at (20.25,-2.5){$\bold{_{BS_{3}}}$};\node at (20.25,-5){$\bold{_{BS_{4}}}$};\node at (22.25,-3.6){$\bold{_{BS_{5}}}$};
\draw [->]  (17,-4.5) -- (19,-4.5); \draw [->]  (19,-4.5) -- (20,-4);\draw [->]  (20,-4) -- (22,-5);
\draw [fill=black] (21,-5.25) rectangle (23,-5.5);\draw [fill=white] (21,-5) rectangle (23,-5.25);
\draw [->]  (22,-2.5) -- (24,-3.5); \draw [->]  (24,-3.5) -- (25,-3);\draw [->]  (25,-3) -- (26,-3);
\node at (15.7,-4.3){$\bold{_{t_{2,in}}}$};
\node at (15.7,-2.9){$\bold{_{t_{1,in}}}$};
\node at (27.25,-4.5){$\bold{_{t_{2,out}}}$};
\node at (27.25,-3){$\bold{_{t_{1,out}}}$};
\draw [->]  (22,-5) -- (24,-4); \draw [->]  (24,-4) -- (25,-4.5);\draw [->]  (25,-4.5) -- (26,-4.5);
\draw [fill=black] (23,-3.75) rectangle (25,-4);\draw [fill=white] (23,-3.5) rectangle (25,-3.75);
\draw [->]  (22,-5.5) -- (23,-6); \node at (24.4,-6.2) {$\bold{_{x_{2,out}}}$};
\draw [->]  (21,-6) -- (22,-5.5); \node at (20,-6.2) {$\bold{_{x_{2,in}}}$};
\end{tikzpicture}
\caption{
A schematic representation of the LOQC CNOT gate. $BS_{1}$ and $BS_{5}$ are beam splitters of reflectivity $\frac{1}{2}$ and the others are of reflectivity $\frac{1}{3}$. The dark side of the BS indicates the side from which a sign change occurs upon reflection.
}
\label{fig:cnot}
\end{figure}

The reflectivity and transmissivity of a BS are given by $ \eta = \cos^{2}\theta$ and $1-\eta=\sin^2\theta$, respectively, $\theta$ is the angle between the polarisation direction of the input photon and the crystal axis of the BS and $\phi$ is the relative phase between the light modes in the two output paths. Here we consider $\phi = 0$, which is the case for BSs in integrated circuits. If the state $\ket{mn}_{ab}$ is incident on a BS with $m$ photons along path $a$ and $n$ photons along path $b$, the transformation is:
\begin{equation}
\label{Eq1}
\begin{array}{rcl}
\ket{mn}_{ab} =  \frac{(\hat{a}^{\dagger}_{a})^{m}}{\sqrt{m!}}\frac{(\hat{a}^{\dagger}_{b})^{n}}{\sqrt{n!}}\ket{00}_{ab} 
\rightarrow\frac{1}{\sqrt{m!n!}}(\hat{a}^{\dagger}_{a} \cos\theta + \hat{a}^{\dagger}_{b} \sin\theta )^{m}( \hat{a}^{\dagger}_{a} \sin\theta - \hat{a}^{\dagger}_{b} \cos\theta )^{n}\ket{00}_{ab}
\end{array}
\end{equation}

The \emph{controlled Not} (or $\mathsf{CNOT}$) is a quantum gate that is a primary component in building a quantum computer. The operation of the gate is that it flips the second qubit (target qubit) if and only if the first qubit (control qubit) is $1$. On qubits, we have $\mathsf{CNOT}\ket{0x} = \ket{0x}$ and $\mathsf{CNOT}\ket{1x} = \ket{1y}$ where $x,y \in \{0,1\}$ and $y = x \oplus 1$ with $\oplus$ denoting addition modulo $2$. In dual rail logic, this becomes $\mathsf{CNOT}\ket{10yx} = \ket{10yx}$ and $\mathsf{CNOT}\ket{01yx} = \ket{01xy}$.

In the following we summarise the  theory and operation of the  LOQC CNOT gate \cite{Brien2003,Ralph2002}.  The BSs used in the LOQC CNOT gate \cite{Brien2003,Ralph2002} have reflectivities of  $\eta=\frac{1}{2}$ or $\frac{1}{3}$. 
The operation is specified by a control qubit, characterised by the number states $c_1$ and $c_2$, and a target qubit, characterised by $t_1$ and $t_2$, as well as two auxiliary vacuum states (absence of a qubit or photon)  $x_1$ and $x_2$, written as $\ket{c_{1}c_{2}t_{1}t_{2}}\ket{x_{1}x_{2}}$.
Consider the general input state 
\begin{equation}
\label{Eq2}
\ket{\psi}_{\rm in}\ket{00} =(\alpha\ket{1010}+\beta\ket{1001}+\gamma\ket{0110}+\delta\ket{0101})\ket{00} 
\end{equation}
The schematic representation of the LOQC CNOT gate is shown in Figure~\ref{fig:cnot}. Using the operators for each BS as discussed in Eq.~\ref{Eq1} and applying it to the input state, Eq.~\ref{Eq2} we get the output state of the CNOT gate as:
\begin{equation}
\label{Eq3}
\begin{array}{rcl}
\ket{\psi}_{\rm in}\ket{00} \rightarrow \frac{1}{3}\{(\alpha\ket{1010}+\beta\ket{1001}+\gamma\ket{0101}+\delta\ket{0110})\ket{00} +\sqrt{2}(\alpha+\beta)\ket{0100}\ket{10}\hspace{2mm}\\+ \sqrt{2}(\alpha-\beta)\ket{0000}\ket{11}+(\alpha+\beta)\ket{1100}\ket{00}+(\alpha-\beta)\ket{1000}\ket{01}+\sqrt{2}\alpha\ket{0010}\ket{10}\hspace{4mm}\\+\sqrt{2}\beta\ket{0001}\ket{10}-\sqrt{2}(\gamma+\delta)\ket{0200}\ket{00}-(\gamma-\delta)\ket{0100}\ket{01}+\sqrt{2}\gamma\ket{0020}\ket{00}\hspace{4mm}\\+(\gamma-\delta)\ket{0010}\ket{01}+(\gamma+\delta)\ket{0011}\ket{00}+(\gamma-\delta)\ket{0001}\ket{01}+\sqrt{2}\delta\ket{0002}\ket{00} \}\hspace{4mm}
\end{array}
\end{equation}

LOQC embeds qubits into the larger dual-rail space, to enable a particular physical realisation of unitary operators to be used. However, this introduces the possibility that the result of the final measurement may be outside the embedding and hence not interpretable as a computational result. \emph{Post-selection} compensates for this possibility by discarding the undesirable measurement results at the expense of introducing a non-zero probability that the overall computation fails. From these states we \emph{post-select} only those where one photon is found in the target and one in the control state, by discarding all terms apart from the first four terms in the first line of Eq.~\ref{Eq3}, giving
\begin{equation}
\label{Eq4}
\ket{\phi}_{ps} = \alpha\ket{1010}+\beta\ket{1001}+\gamma\ket{0101}+\delta\ket{0110}
\end{equation}
Successful \emph{post-selection} occurs only with a probability of one-ninth and the relationship between Eq.~\ref{Eq2} and Eq.~\ref{Eq4} is a controlled-NOT transformation.

\section{Communicating Quantum Processes (CQP)}
\label{sec:CQP}
\label{sec-CQP}
CQP \cite{Gay2005} is a quantum process calculus, which was established for formally defining the structure and behaviour of systems that comprise both quantum and classical communication and computation. The language is based on the $\pi$-calculus \cite{Milner1999} with primitives for quantum information. The general idea is that a system is considered to be made up of independent components or \emph{processes}. The \emph{processes} can communicate by sending and receiving data along \emph{channels} and these data are qubits, number states or classical values. A distinctive feature of CQP is its static type system \cite{Gay2006a}, the purpose of which is to classify classical and quantum data and also to enforce the no-cloning property of quantum information. We now present CQP including the extensions required for LOQC.

\subsection{Syntax of CQP}

\begin{figure*}[t]
\small
  \begin{eqnarray*}
    T & \bnf & \Int \alt \Qbit \alt \NS \alt \Bit \alt \chant{\tilde{T}} \alt \Op{1} \alt \Op{2} \alt \cdots \\
    v & \bnf & x \alt \mktype{0} \alt \mktype{1} \alt \cdots \alt \qgate{H} \alt \cdots \\
    e & \bnf & v \alt \measure{\tilde{e}} \alt \psmeasure{\tilde{e}} \alt {\tilde{e}}\trans{e} \alt e+e' \alt (e,e) \alt \text{if $e$ then $e$ else $e$} \alt x:\NS, y:\NS \trans\qgate{PS}{(z)}\\
    P & \bnf & \nil \alt (P | P) \alt P + P \alt \inp{e}{\tilde{x}:\tilde{T}}.P \alt  \outp{e}{\tilde{e}}.P \alt \{e\}.P \alt (\qbit x)P \alt (\ns x)P \alt (\new x:\chant{T})P
  \end{eqnarray*}
    \normalsize
  \caption{\label{fig:cqp_syntax}Syntax of CQP.}
\end{figure*}
The syntax of CQP is defined by the grammar as shown in Figure \ref{fig:cqp_syntax}. We use the notation $\tilde{e} = e_1,\ldots,e_n$, and write $|\tilde{e}|$ for the length of a tuple. The syntax consists of types $T$, values $v$, expressions $e$ (including quantum measurements and the conditional application of unitary operators $\tilde{e}\trans{e}$), and processes $P$. Values $v$ consist of variables ($x$,$y$,$z$ etc), literal values of data types (0,1,..), unitary operators such as the Hadamard operator $\mathsf{H}$. Expressions $e$ consist of values, measurements $\measure e_{1},\dots,e_{n}$, applications $e_{1},\dots,e_{n} \trans e$ of unitary operators and applications $x :\NS,y:\NS \trans\qgate{PS}{(z)}$ of $\qgate{PS}$ operator, expressions involving data operators such as $e + e'$ and a pair of values $(e,e)$. We have a new addition to the expression called \emph{post-selective} measurement $\psmeasure e_{1},\dots,e_{n}$. Processes include the nil process $\nil$, parallel composition $P|P$, inputs $\inp{e}{\tid{\vec{x}}{\vec{T}}}.P$, outputs $\outp{e}{\vec{e}}.P$, actions $\{e\}.P$ (typically a unitary operation or measurement), typed channel restriction $(\new x: \chant{\vec{T}})P$, qubit declaration $(\qbit x)P$ and number state declaration $(\ns x)P$.

In order to define the operational semantics we provide the \emph{internal syntax} in Figure \ref{fig:cqp_internal_syntax}. We assume a countably infinite set of qubit names, ranging over $q,r,\dots$, a countably infinite set of number state names $s,t,\dots$ and similarly channel names. Values are supplemented with either qubit names $q$ or number state names $s$, which are generated at run-time and substituted for the variables used in $\qbit$ and $\ns$ declarations respectively. Evaluation contexts for expressions ($E[]$) and processes ($F[]$) are used to define the operational semantics \cite{Wright1994}. Later in the paper, we also use parameterised process definitions.

\subsection{Linear Optical Elements in CQP}
\label{sec:LOCQP}

First, we define a process $\ppolse$ which provides the input to the LOQC CNOT gate by converting a polarisation qubit to a dual rail qubit.
\[
\begin{array}{rcl}
\pname{PolSe}(\typed{a}{\chant{\Qbit}},\typed{c}{\chant{\NS}},\typed{d}{\chant{\NS}})
  & = & 
\inp{a}{\typed{q_{0}}{\Qbit}}\sep\{\typed{s_{0}}{\NS},\typed{s_{1}}{\NS}\trans\qgate{PS}(q_{0})\} \sep\outp{c}{s_{0}}\sep\outp{d}{s_{1}}\sep\nil
\end{array}
\]
$\ppolse$ is parameterized by three channels, $a$,$c$ and $d$. The right hand side of the definition specifies the behaviour of the process $\ppolse$. The polarisation qubit (say $q_{0}$) is received as input through channel $a$ (whose type is $\chant{\Qbit}$) indicated as  $\inp{a}{\typed{q_{0}}{\Qbit}}$. The term $\{\typed{s_{0}}{\NS},\typed{s_{1}}{\NS}\trans\qgate{PS}(q_{0})\}$ specifies that the $\mathsf{PS}$ operation is applied to qubit $q_{0}$ thereby generating $s_{0}$ and $s_{1}$ of type number states ($\NS$). $\mathsf{PS}$ corresponds to the transformation produced by the combination of PBS and PR, introduced by Definition~\ref{def:ps}. The last two terms ($\outp{c}{s_{0}}$ and $\outp{d}{s_{1}}$) indicate that the respective values of the number states are sent through the respective output channels. The term $\nil$ simply indicates termination.
\begin{figure*}[t]
\small
  \begin{eqnarray*}
    v & \bnf & \ldots \alt q \alt s\alt c \\
    E & \bnf & [] \alt \measure{E,\tilde{e}} \alt \measure{v, E, \tilde{e}} \alt \dots \alt \measure{\tilde{v}, E} \alt E + e \alt v + E \alt  \text{if $E$ then $e$ else $e$}\\
    F & \bnf & \inp{[]}{\tilde{x}}.P \alt \outp{[]}{\tilde{e}}.P \alt \outp{v}{[].\tilde{e}}.P \alt \outp{v}{v,[],\tilde{e}}.P \alt \cdots \alt \outp{v}{\tilde{v},[]}.P \alt \{[]\}.P
  \end{eqnarray*}
    \normalsize
  \caption{\label{fig:cqp_internal_syntax}Internal syntax of CQP.}
\end{figure*}

Next, we define a non polarising beam splitter in CQP as $\pbs$, which is a primary component in the LOQC CNOT gate.
\[
\begin{array}{rcl}
\pname{BS}(\typed{e}{\chant{\NS}},\typed{f}{\chant{\NS}},\typed{h}{\chant{\NS}},\typed{i}{\chant{\NS}}, \eta)
  & = & 
  \inp{e}{\typed{s_{2}}{\NS}}\sep\inp{f}{\typed{s_{3}}{\NS}}\sep \action{s_{2},s_{3}\trans\qgate{B_{\eta}}}\sep \outp{h}{s_{2}}\sep\outp{i}{s_{3}}\sep\nil
\end{array}
\]
where $\eta$ is the reflectivity.
In a similar way, process $\pbs$ receives inputs $s_{2}$ and $s_{3}$ from $e$ and $f$. Then performs the unitary operation represented by $\action{s_{2},s_{3}\trans\qgate{B_{\eta}}}$ on the number states as defined by Eq.~\ref{Eq1}. Here $\mathsf{B_{\eta}}$ is the unitary operation represented by the matrix $U(BS)$ for $\phi = 0$. The number states are then output on $h$ and $i$. 

In this paper, we present two types of measurements. We define $\pdet$ and $\pdete$ which represent the detectors that performs measurement and $\pdete$ performs \emph{post-selective} measurement. 
\[
\begin{array}{rcl}
\pname{Det}(\typed{l}{\chant{\NS}},\typed{m}{\chant{\NS}},\typed{u}{\chant{\Int,\Int}})
  & = & 
  \inp{l}{\typed{s_{0}}{\NS}}\sep\inp{m}{\typed{s_{1}}{\NS}}\sep\outp{u}{\measure s_{0},s_{1}}\sep\nil\\
\pname{PDet}(\typed{l}{\chant{\NS}},\typed{m}{\chant{\NS}},\typed{u}{\chant{\Bit}})
  & = & 
  \inp{l}{\typed{s_{0}}{\NS}}\sep\inp{m}{\typed{s_{1}}{\NS}}\sep\outp{u}{\psmeasure s_{0},s_{1}}\sep\nil
\end{array}
\]
Here, the detectors measure a pair of number states. The expression $\measure s_{0},s_{1}$ probabilistically evaluates to a pair of positive integers which is the number of photons detected in the respective channels and $\psmeasure s_{0},s_{1}$ produces a zero or one which is a result of \emph{post-selection}. The different measurement semantics enables us to work at different levels of abstraction by showing the flexibility of the process calculus approach and is discussed in detail in later sections of the paper.

\subsection{The LOQC CNOT Gate in CQP : First Model}

The structure of the first model of the experimental system that demonstrates LOQC CNOT gate is shown in Figure~\ref{fig:mod}. The system receives two polarisation qubits (control and target) as inputs through the channels $a$ and $b$. The qubits are then converted to number states by the process $\ppolsect$, and these are provided as the input to the CNOT gate represented by process $\pcnot$. The output of $\pcnot$ is then measured by the process $\pmmt$. The whole model is then defined as a parallel composition of $\ppolsect\parallel\pcnot\parallel\pmmt$. The CQP definition of the model is
\[
\begin{array}{rcl}
\pname{Model_1}(\vec{X})  
=  (\new \vec{Y})(\pname{PolSe_{CT}}(\vec{U})\parallel\pname{CNOT}(\vec{V})\parallel\pname{MMT}(\vec{W}))
\end{array}
\]
where each process is parameterised by their respective list of the channels ($\vec{X},\vec{U},\vec{V}$ and $\vec{W}$) on which it interacts with other processes. $\vec{X}$ contains channels $a,b,out_{1},cnt$ and $out_{2}$. $\vec{U}$ contains $a,b,c,d,e,f$ and $\vec{W}$ contains $k,l,q,r,out_{1},cnt,out_{2}$.
The scope of the list of channels ($\vec{Y}$) is restricted, indicated by $\new$ in the definition. $\vec{Y}$ comprises of the channels $c,d,e,f,g,h,m,l,k,o,q,r,u$ and $v$. We have omitted the types from our definitions, for brevity. Also, the definitions include a list of channels rather than individual channel names. The CQP definition for $\ppolsect$ is
$
\pname{PolSe_{CT}}(\vec{U}) = \pname{PolSe}(a,c,d)\parallel\pname{PolSe}(b,e,f).
$
Recall from Section~\ref{sec:LOCQP} that $\ppolse$ represents the combination of a PBS and PR. 

\begin{figure}[t]
\begin{tikzpicture}[scale=0.35]
\draw (0,0) -- (2,0);
\draw[fill=white] (2,-1) rectangle (5,1);\draw (5,0.5) -- (9,0.5);\draw (9,0.5) -- (11,1.75);\draw (5,-0.5) -- (9,-0.5); \draw (9,-0.5) -- (11,-2.5); \node at (3.5,0){$\bold{_{PolSe}}$}; \node at (1,0.5){$\bold{_{a}}$}; \node at (6,1){$\bold{_{c}}$}; \node at (6,0){$\bold{_{d}}$};
\draw (0,-3) -- (2,-3);
\draw[fill=white] (2,-4) rectangle (5,-2);\draw (5,-3.5) -- (7,-3.5);\draw (5,-2.5) -- (7,-2.5); \node at (3.5,-3){$\bold{_{PolSe}}$};\node at (1,-2.5){$\bold{_{b}}$}; \node at (6,-2){$\bold{_{e}}$}; \node at (6,-3){$\bold{_{f}}$};
\draw[fill=white] (7,-4) rectangle (9,-2);\draw (9,-3.5) -- (11,-3.5);\draw (9,-2.5) -- (11,-0.5); \node at (8,-3){$\bold{_{BS_{1}}}$}; \node at (10.5,-0.5){$\bold{_{g}}$}; \node at (10,-3){$\bold{_{h}}$};
\draw[fill=white] (11,0.25) rectangle (13,2.25);\draw (13,1.75) -- (15,0.5);\draw (15,0.5) -- (19,0.5);\draw (13,0.5) -- (15,1.75); \node at (12,1.25){$\bold{_{BS_{2}}}$};\draw (9,1.75) -- (11,0.5); \node at (8.7,2){$\bold{_{i}}$}; \node at (15.5,2){$\bold{_{j}}$};\node at (13.5,1.95){$\bold{_{k}}$};
\draw[fill=white] (11,-0.25) rectangle (13,-2.75);\draw (13,-2.5) -- (15,-0.5);\draw (15,-0.5) -- (19,-0.5);\draw (13,-0.5) -- (15,-2.5); \node at (12,-1.5){$\bold{_{BS_{3}}}$};\node at (13.5,-2.5){$\bold{_{l}}$};\node at (13.75,-0.5){$\bold{_{m}}$};
\draw[fill=white] (11,-3.25) rectangle (13,-5.25);\draw (13,-3.5) -- (15,-3.5);\draw (13,-4.75) -- (15,-4.75); \node at (12,-4.25){$\bold{_{BS_{4}}}$};\draw (9,-4.75) -- (11,-4.75);\node at (10,-4.25){$\bold{_{n}}$};\node at (14.5,-3.1){$\bold{_{o}}$};\node at (14,-4.25){$\bold{_{p}}$};
\draw[fill=white] (15,-4) rectangle (17,-2);\draw (17,-3.5) -- (19,-3.5);\draw (17,-2.5) -- (19,-2.5); \node at (16,-3){$\bold{_{BS_{5}}}$};\node at (18,-2){$\bold{_{q}}$};\node at (18,-3){$\bold{_{r}}$};
\draw[fill=white] (19,-1) rectangle (21,1);\draw (21,0) -- (23,0);\draw (21,-0.25) -- (23,-0.25); \node at (20,0){$\bold{_{Det_{1}}}$};\node at (22,0.5){$\bold{_{u}}$};
\draw[fill=white] (19,-2) rectangle (21,-4);\draw (21,-3) -- (23,-3);\draw (21,-2.75) -- (23,-2.75); \node at (20,-3){$\bold{_{Det_{2}}}$};\node at (22,-2.4){$\bold{_{v}}$};
\draw[fill=white] (23,1) rectangle (27,-4);\node at (25,-1.5){$\bold{_{Counter}}$};\draw (27,-0.5) -- (29,-0.5);\draw (27,-0.25) -- (29,-0.25); \node at (28.5,0){$\bold{_{out_{1}}}$};\draw (27,-3.5) -- (29,-3.5);\draw (27,-3.25) -- (29,-3.25); \node at (28.5,-3){$\bold{_{out_{2}}}$};\draw (27,-2) -- (29,-2);\draw (27,-1.75) -- (29,-1.75); \node at (28.5,-1.5){$\bold{_{cnt}}$};
\draw [dashed, thick, black] (0.5,2.75) rectangle (6.35,-5.5);\node at (3.5,-6.5){$\bold{_{PolSe_{CT}}}$};
\draw [dashed, thick, blue] (6.5,2.75) rectangle (18.5,-5.5);\node at (12,-6.5){$\bold{_{CNOT}}$};
\draw [dashed, thick, red] (18.75,2.75) rectangle (27.5,-5.5);\node at (22,-6.5){$\bold{_{MMT}}$};
\end{tikzpicture}
\caption{Model of LOQC CNOT gate: The dashed lines enclose the subsystems which are defined in the text.}
\label{fig:mod}
\end{figure}
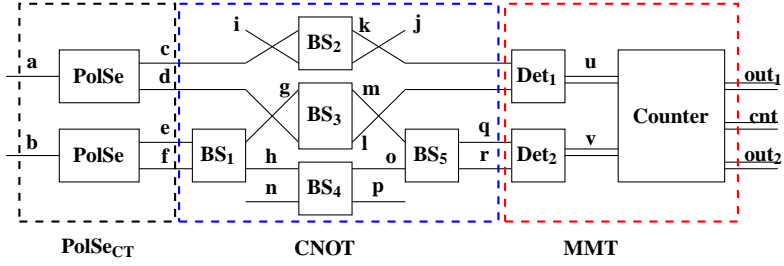

Each BS is represented by a process $\pbs$ and is annotated to show the correspondence with Figure~\ref{fig:mod}. $BS_{2}$ and $BS_{3}$ have their inputs crossed over, corresponding to their orientation \cite{Brien2003}. Vacuum states $y$ and $z$ are created by $(\ns y,z)$ and communicated to $BS_{2}$ and $BS_{4}$ respectively through the channels $i$ and $n$. These $BSs$ combine to form $\pcnot$ defined as:
\[
\begin{array}{rcl}
\pname{CNOT}(\vec{V}) =  (\new g,h,m,o,i,j,n,p)(\ns y,z)(\pname{BS_{1}}(e,f,g,h,\frac{1}{2})\parallel\outp{i}{y}\sep\nil\parallel\pname{BS_{2}}(i,c,k,j,\frac{1}{3})\parallel\inp{j}{y}\sep\nil\parallel\\\pname{BS_{3}}(g,d,m,l,\frac{1}{3})\parallel\outp{n}{z}\sep\nil\parallel\pname{BS_{4}}(h,n,o,p,\frac{1}{3})\parallel\inp{p}{z}\sep\nil\parallel\pname{BS_{5}}(m,o,q,r,\frac{1}{2}))
\end{array}
\]
Here $\vec{V}$ contains the channels $c,d,e,f,k,l,q$ and $r$. The outputs of $\pcnot$ are sent through the channels $k,l,q$ and $r$, to the process $\pmmt$. The unused $BS$ outputs $j$ and $p$ are absorbed by $\inp{j}{y}$ and $\inp{p}{z}$.
\[
\begin{array}{rcl}
\pname{MMT}(\vec{W}) =  (\new u,v)(\pname{Det_{1}}(k,l,u)\parallel\pname{Det_{2}}(q,r,v)\parallel\pname{Counter}(u,v,out_{1},cnt,out_{2},b))
\end{array}
\]
$\pmmt$ performs the measurement. Detectors $\pdetone, \pdettwo$ are annotated to match Figure~\ref{fig:mod} and measure the number states associated with the control and target qubits. The output of a detector are two classical values which represents the measurement outcome, that is the number of photons detected. The outcomes of the detector processes are given as inputs to the process  $\pcounter$.
\[
\begin{array}{rcl}
\pname{Counter}(u,v,out_{1},cnt,out_{2}, b : \Bit)
  =  \inp{u}{\typed{c_{0}}{\Int},\typed{c_{1}}{\Int}}\sep\inp{v}{\typed{t_{0}}{\Int},\typed{t_{1}}{\Int}}\sep\hspace{34mm}  
  \\\outp{out_{1}}{\textrm{$\bold{if}$ $(c_{0} + c_{1} = 1)$ $\bold{then}$ $ c_{1}$ $\bold{else}$ $0$}}\sep\outp{out_{2}}{\textrm{$\bold{if}$ $(t_{0} + t_{1} = 1)$ $\bold{then}$ $ t_{1}$ $\bold{else}$ $0$}}\sep\\\outp{cnt}{\textrm{$\bold{if}$ $(c_{0} + c_{1} = 1)$ $and$ $(t_{0} + t_{1} = 1)$ $\bold{then}$ $ b = 1$ $\bold{else}$ $b = 0$}}\sep\nil\hspace{20mm}
\end{array}
\]

$\pcounter$ represents the coincidence measurement in optical experiments. Coincidence is observed by detecting two photons, one at channel $u$ and the other at $v$. It also provides the correct output of the CNOT gate in terms of classical bits through the channels $out_{1}$ and $out_{2}$. The coincidence count ($b$) is recorded as 
 $1$ at the output of the channel $cnt$. The unsuccessful outcomes of the CNOT gate are recorded as $0$ at the three output channels. This is determined by the $\bold{if \dots else}$ conditions in the definition. When we consider the correctness of the system, we will prove that $\pmodel$ is equivalent to the following $\pspec$ process. We use the same process $\ppolsect$ as the input for $\pspec$. 
\[
\begin{array}{rcl}
\pname{Specification_1}(a,b,out_{1},cnt,out_{2}) 
=  (\new c,d,e,f,g)(\pname{PolSe_{CT}}(\vec{U})\parallel\pname{OP}(\vec{C})\parallel\pname{Output}(\vec{D}))
\end{array}
\]
There may be other ways of expressing the specification, for example without converting the polarisation qubit into the spatial encoding, but we do not investigate them in the present paper. Here, $\vec{C}$ is a list of channels containing $c,d,e,f,g,h,i,j,k$ and $\vec{D}$ consists $g,h,i,j,k,out_{1},cnt,out_{2}$. $\pop$ performs the CNOT operation with a certain probability and is defined by
\[
\begin{array}{rcl}
\pname{OP}(c,d,e,f,g,h,i,j,k)
=  (\qbit : q_{2})\sep\inp{c}{s_{0}}\sep\inp{d}{s_{1}}\sep\inp{e}{s_{2}}\sep\inp{f}{s_{3}}\sep\action{s_{2},s_{3}\trans\qgate{H}}\sep\action{q_{2}\trans\qgate{U_{\frac{1}{9}}}}\sep\hspace{9mm}\\\action{(s_{0},s_{1}),(s_{2},s_{3})\trans\qgate{CZ}}\sep\action{s_{2},s_{3}\trans\qgate{H}}\sep\outp{h}{s_{0}}\sep\outp{i}{s_{1}}\sep\outp{j}{s_{2}}\sep\outp{k}{s_{3}}\sep\outp{g}{\measure q_{2}}\sep\nil
\end{array}
\]
$\pop$ possesses a qubit $q_{2}$ (initialised to $\ket{0}$). A random bit is generated with certain probability ($\frac{8}{9}$ for bit $0$) by measuring $q_{2}$ after the unitary operation with $U_{\frac{1}{9}}$. This is followed by a series of unitary operations namely Hadamard operation ($\mathsf{H}$) which is applied twice on a pair of number states ($s_{2}$,$s_{3}$) and a controlled Z ($\mathsf{CZ}$) where $s_{0},s_{1}$ acts as the control pair and $s_{2},s_{3}$ is the target pair. The combination of a $\mathsf{H}$, $\mathsf{CZ}$ and another $\mathsf{H}$ constitutes a CNOT, which is an abstract version of the number state computation. The theory of these operators for number states are not discussed in this paper but are provided in \cite{Myers2005}. The data are then communicated to the process $\poutp$:
\[
\begin{array}{rcl}
\pname{Output}(g,h,i,j,k,out_{1},cnt,out_{2}) 
=  \inp{g}{\typed{x}{\bit}}\sep\inp{h}{s_{0}}\sep\inp{i}{s_{1}}\sep\inp{j}{s_{2}}\sep\inp{k}{s_{3}}\sep\outp{cnt}{x}\sep\hspace{36mm}\\\outp{out_{1}}{\textrm{$\bold{if}$ $(x = 1)$ $\bold{then}$ $ \measure s_{1}$ $\bold{else}$ $0$}}\sep\outp{out_{2}}{\textrm{$\bold{if}$ $(x = 1)$ $\bold{then}$ $ \measure s_{3}$ $\bold{else}$ $0$}}\sep\nil
\end{array}
\]
This gives the correct output in the form of classical bits of the CNOT operation when $x$ equals one, which is artificially making the specification work with a certain probability ($\frac{1}{9}$). When $x$ equals zero, the specification does not work and we get zero at all the output channels.

\subsection{Semantics of CQP}

In the previous section, we have described informally the behaviour of the processes which represent the linear optical elements of the CNOT gate model. In this section we will explain the formal semantics of CQP although without giving all of the definitions. Full definitions are in the Appendix. The execution of a system is not completely described by the process term (which is the case for classical process calculus) but also depends on the quantum state. Hence the operational semantics are defined using \emph{configurations}, which represent both the quantum state and the process term. 
\begin{Definition}[Configuration]
A configuration is defined as a tuple of the form $(\tilde{x}: \tilde{T};\sigma;\omega;P)$ where $\tilde{x}$ is a list of names (qubits $\tilde{q}$, number states $\tilde{s}$ or both) associated with their types $\tilde{T}$, $\sigma$ is a mapping from names ($\tilde{x}$) to the quantum state and $\omega$ is a list of names associated with the process P
\end{Definition}
We operate with configurations such as 
$(q_{1}: \Qbit, s_{0}: \NS, s_{1}: \NS; \qstore{q_{1},s_{0},s_{1}}{(\ket{0}\ket{10}+ \ket{1}\ket{01})};q_{1};\outp{c}{q_{1}}\sep P)$

This configuration means that the global quantum state consists of a qubit, $q_{1}$, number states $s_{0}$ and $s_{1}$, in the specified state; that the process term under consideration has access to qubit $q_{1}$ but not to the number states; and that the process itself is $\outp{c}{q_{1}}\sep P$. 
The semantics of CQP consists of labelled transitions between configurations which is essentially required for the equivalence of processes. We now present the complete \emph{labelled transition rules} of CQP that are extended from the previous work in order to verify LOQC, which is the focus of this paper.

$\bold{Expression}$ $\bold{Transition}$ $\bold{Rules:}$
\begin{figure}[ !t]
\scriptsize
  \begin{gather*}
    \tag{\Rplus}
    (\tilde{x}:\tilde{T};\sigma; \omega; u + v) \transitionv (\tilde{x}:\tilde{T};\sigma; \omega;w) \textrm {if $u$ and $v$ are integer literals and $w = u + v$}\\
    \tag{\Rnsmeasuret}
 (\tilde{x}:\tilde{T};\qstore{\tilde{x}}{\sum_{\tilde{s} \geq 0}\alpha_{\tilde{s}}\ket{\beta_{\vec{s}}}\ket{\vec{s}}}; \omega; \measure{s_{a},s_{b}})  \transitionv 
   \oplus_{k,l \ge 0}g_{kl}(\tilde{x}:\tilde{T}; \qstore{\tilde{x}}{\sum_{\tilde{s'}\geq 0}\frac{\alpha_{\tilde{s'}}}{\sqrt{g_{kl}}}\ket{\beta_{\vec{s'}}}\ket{\vec{s'}}}; \omega; \lambda yz\bullet (y,z); k,l)
 \textrm{$   $  where $g_{kl} =  \sum_{\tilde{i}}|\alpha_{\tilde{s'}}|^{2},$}\\ \textrm{$ \tilde{s} =  s_{0},\dots, s_{n-1}, \tilde{s'} = s_{0},\dots,s_{a-1},k,\dots,l,s_{b+1},\dots,s_{n-1}, \tilde{i} = s_{0},\dots,s_{n-1} \setminus (s_{a},s_{b})$}\textrm{and $(a ,b)\in\{0,\dots,n-1\}$ and $a \neq b$}\\
   \tag{\Rpsmeasure}
 (\tilde{x}:\tilde{T};\qstore{\tilde{x}}{\sum_{\tilde{s} \geq 0}\alpha_{\tilde{s}}\ket{\beta_{\vec{s}}}\ket{\vec{s}}}; \omega; \psmeasure{s_{a},s_{b}})  \transitionv 
   \oplus_{k,l \in \{0,1\}, k \neq l} h_{kl}(\tilde{x}:\tilde{T}; \qstore{\tilde{x}}{\sum_{\tilde{s'} \geq 0}\frac{\alpha_{\tilde{s''}}}{\sqrt{h_{kl}}}\ket{\beta _{\vec{s'}}}\ket{\vec{s'}}}; \omega; \lambda z\bullet z; l)
 \textrm{$   $  where $ h_{kl} = \sqrt{g_{op}}\frac{1}{\sum_{\tilde{j}}|\alpha_{\tilde{s''}}|^{2}}$} \\ \textrm{ and $g_{op} =  \sum_{\tilde{i}}|\alpha_{\tilde{s'}}|^{2}$, $o,p \ge 0,$ $ \tilde{s} =  s_{0},\dots, s_{n-1}, \tilde{s'} = s_{0},\dots,s_{a-1},o,\dots,p,s_{b+1},\dots,s_{n-1},$} \\ \textrm{ $ \tilde{i} = s_{0},\dots,s_{n-1} \setminus (s_{a},s_{b})$ $\tilde{s''} = s_{0},\dots,s_{a-1},k,\dots,l,s_{b+1},\dots,s_{n-1},$} \\ \textrm{and $\tilde{j} = s_{0},\dots,s_{a-1},k,\dots,l,s_{b+1},\dots,s_{n-1}$ and $(a ,b)\in\{0,\dots,n-1\}$ and $a \neq b$}\\
  \tag{\Rmeasureq}
      ( q_0,\dots,q_{n-1} = \alpha_0 \ket{\phi_0} + \dots + \alpha_{2^n-1} \ket{ \phi_{2^n-1}}; \omega; \measure{ q_0,\dots, q_{r-1} } ) \transitionv 
      \Dist{0 \le m < 2^r}{g_m} ( q_0, \dots, q_{n-1} = \frac{\alpha_{l_m}}{\sqrt{g_m}} \ket{ \phi_{l_m}} + \dots + \frac{ \alpha_{u_m}}{\sqrt{g_m}} \ket{ \phi_{u_m}}; \omega; \ltrm{x}{x}{m} )\\
    \textrm{where } l_m = 2^{n-r}m, u_m = 2^{n-r}(m+1)-1, g_m = |\alpha_{l_m}|^2 + \dots + |\alpha_{u_m}|^2 \\
   \tag{\Rnstrans}
    (\tilde{q}:\Qbit,\tilde{s}:\NS; \qstore{\tilde{q},\tilde{s}}{\ket{\psi}} ; \omega; s_{0},\dots,s_{2r-1}\trans{U}) \transitionv 
    (\tilde{q}:\Qbit, \tilde{s}:\NS; \qstore{\tilde{q},s_0,\dots,s_{n-1}}{(I_{|\vec{q}|}\otimes U \otimes I_{(n-2r)}) \ket{\psi}}; \omega; \unit)\\ 
        \tag{\Rifthen}
    (\tilde{x}:\tilde{T};\sigma; \omega; \textrm{if true then $e$ else $e'$}) \transitionv (\tilde{x}:\tilde{T};\sigma; \omega;e) \textrm{ or }  (\tilde{x}:\tilde{T};\sigma; \omega; \textrm{if false then $e'$ else $e$}) \transitionv (\tilde{x}:\tilde{T};\sigma; \omega;e')\\
      \tag{\Rcontext}
    \begin{prooftree}
      \forall i \in I. (\vec{x}:\vec{T}; \qstore{\vec{x}}{\ket{\psi_i}} ; \omega; e\{\vec{u_i}/\vec{y}\}) \transitionv \Dist{j \in J_i}{g_{ij}} (\vec{x}:\vec{T}; \qstore{\vec{x}}{\ket{\psi_{ij}}} ; \omega; \ltrm{\vec{z}}{e'\{\vec{u_i}/\vec{y}\}}{\vec{v}_{ij}} )
      \justifies
      \Dist{i \in I}{h_i} (\vec{x}:\vec{T}; \qstore{\vec{x}}{\ket{\psi_i}} ; \omega; \ltrm{\vec{y}}{E[e]}{\vec{u_i}}) \transitione \Dist{\substack{i \in I\\j \in J_i}}{h_ig_{ij}} (\vec{x}:\vec{T}; \qstore{\vec{x}}{\ket{\psi_{ij}}} ; \omega; \ltrm{\vec{y}\vec{z}}{E[e']}{\vec{u_i},\vec{v}_{ij}})
    \end{prooftree}
  \end{gather*}
  \normalsize
  \caption{Transition rules for values and expressions.}
\label{fig:trans_mixed_exp}
\end{figure}
For the evaluation of expressions we also introduce \emph{expression configurations} $(\tilde{x}: \tilde{T};\sigma;\omega;e)$, which are similar to configurations, but include an expression in place of the process. The semantics of expressions is defined by the reduction relations $\transitionv$ (on values) and $\transitione$ (on expressions), given in Figure~\ref{fig:trans_mixed_exp}. Rule $\Rplus$ deal with the evaluation of terms that result in values. Rules $\Rnsmeasuret$,$\Rpsmeasure$ and $\Rmeasureq$ are measurement rules which produces a mixed configuration. First two rules measure a pair of number states and the last rule measures qubit. $\Rnsmeasuret$ produces a mixed configuration over the possible measurement outcomes $k$ and $l$. The measurement outcomes are classical values which are the number of photons detected. $\Rpsmeasure$ is a \emph{post-selective} measurement rule which produces a mixed configuration over the possible measurement outcome $l$. Rule $\Rnstrans$ deals with unitary transformations which result in literal unit. The important aspect of $\Rnstrans$ and the measurement rules is the effect they have on the quantum state. 

The rule $\Rcontext$ has two primary purposes; it is used for the evaluation of expressions in an expression context $E$ and it is also used of the evaluation of expressions in mixed configurations. The evaluation of a mixed expression configuration $\Dist{i \in I}{h_i}(\vec{x}:\vec{T}; \qstore{\vec{x}}{\ket{\psi_i}} ; \omega; \ltrm{\vec{y}}{E[e]}{\vec{u_i}})$ is determined by the evaluation of each component. For a given component, the pure expression configuration is obtained by substitution of the respective values; $(\vec{x}:\vec{T}; \qstore{\vec{x}}{\ket{\psi_i}} ; \omega; E[e]\{\vec{u_i}/\vec{y}\})$. For this configuration we isolate the context and consider the evaluation of $e\{\vec{u_i}/\vec{y}\})$. The resulting configuration may be a mixed expression configuration with new variables $\vec{z}$ introduced; specifically we end up with a term $\ltrm{\vec{z}}{e'\{\vec{u_i}/\vec{y}\}}{\vec{v}_{ij}}$ where, due to the use of the substitution, $e'$ is constant across each $i$. The results for each $i$ are combined to give the final term $\ltrm{\vec{y}\vec{z}}{E[e']}{\vec{u_i},\vec{v}_{ij}}$ incorporating variables $\vec{z}$ and $\vec{y}$.
\begin{figure*}[t]
\scriptsize
 \begin{gather*}
  \tag{\Pout}
    (\vec{p},\vec{q}: \tilde{\Qbit}, \vec{r},\vec{s}: \tilde{\NS},\qstore{\vec{p}\vec{q}\vec{r}\vec{s}}{\ket{\psi}} ; \vec{p},\vec{q},\vec{r},\vec{s}; \outp{c}{\vec{v},\vec{q},\vec{s}}.P) \transitionp{\outp{c}{\vec{v},\vec{q},\vec{s}}} (\vec{p},\vec{q}: \tilde{\Qbit}, \vec{r},\vec{s}: \tilde{\NS},\qstore{\vec{p}\vec{q}\vec{r}\vec{s}}{\ket{\psi}} ; \vec{p},\vec{r}; P)\\
    \tag{\Pin}
    (\vec{q}: \tilde{\Qbit},\vec{s}: \tilde{\NS},\qstore{\vec{q}\vec{s}}{\ket{\psi}} ; \omega; \inp{c}{\vec{y},\vec{x}}.P ) \transitionp{\inp{c}{\vec{v},\vec{p},\vec{r}}} (\vec{q}: \tilde{\Qbit},\vec{s}: \tilde{\NS}, \qstore{\vec{q}\vec{s}}{\ket{\psi}} ; \omega,\vec{p},\vec{r}; P\{\vec{v},\vec{r}/ \vec{y},\vec{p}/ \vec{x}\} ) \\
    \tag{\Ppar}
    \begin{prooftree}
      (\vec{x}: \tilde{T}, \qstore{\vec{x}}{\ket{\psi}} ; \omega; P) \transitionp{\alpha} (\vec{x}: \tilde{T},\qstore{\vec{x}}{\ket{\psi}} ; \omega'; P')
      \justifies
      (\vec{x}: \tilde{T},\qstore{\vec{x}}{\ket{\psi}} ; \omega; P \parcomp Q) \transitionp{\alpha} (\vec{x}: \tilde{T},\qstore{\vec{x}}{\ket{\psi}} ; \omega'; P' \parcomp Q)
    \end{prooftree} \\
    \tag{\Psum}
    \begin{prooftree}
      (\vec{x}: \tilde{T}, \qstore{\vec{x}}{\ket{\psi}} ; \omega; P) \transitionp{\alpha} (\vec{x}: \tilde{T}, \qstore{\vec{x}}{\ket{\psi}} ; \omega'; P')
      \justifies
      (\vec{x}: \tilde{T}, \qstore{\vec{x}}{\ket{\psi}} ; \omega; P + Q) \transitionp{\alpha} (\vec{x}: \tilde{T}, \qstore{\vec{x}}{\ket{\psi}} ; \omega'; P')
    \end{prooftree} \\
    \tag{\Pnew}
    \qquad \begin{prooftree}
      (\vec{x}: \tilde{T}, \qstore{\vec{x}}{\ket{\psi}} ; \omega; P) \transitionp{\alpha} (\vec{x}: \tilde{T}, \qstore{\vec{x}}{\ket{\psi}} ; \omega; P')
      \justifies
      (\vec{x}: \tilde{T}, \qstore{\vec{x}}{\ket{\psi}} ; \omega; (\new c)P) \transitionp{\alpha} (\vec{x}: \tilde{T}, \qstore{\vec{x}}{\ket{\psi}} ; \omega; (\new c)P')
    \end{prooftree} \quad
     \textrm{ if $\alpha \notin \{\inp{c}{\cdot}, \outp{c}{\cdot}\}$}
   \\ \tag{\Pps}
       (\tilde{x},\tilde{y}: \tilde{\Qbit}, q:\Qbit, \tilde{z}: \tilde{\NS}; \qstore{\tilde{x},q,\tilde{y},\tilde{z}}{\ket{\phi}};\omega; \{s,t\trans\qgate{PS}(q)\}\sep P) \transition{\tau}(\tilde{x},\tilde{y} : \tilde{\Qbit},\tilde{z}: \tilde{\NS}, s :\NS, t :\NS;\qstore{\tilde{x},\tilde{y},\tilde{z},s,t}{\ket{\psi}};\omega';P)
   \\ \textrm{ where $\ket{\phi} = \ket{\alpha}\ket{0}\ket{\beta}\ket{\gamma} + \ket{\alpha'}\ket{1}\ket{\beta'}\ket{\gamma'}$ , $\ket{\psi} = \ket{\alpha}\ket{\beta}\ket{\gamma}\ket{10} + \ket{\alpha'}\ket{\beta'}\ket{\gamma'}\ket{01}$} \textrm{$q \in \omega$ and $s, t \notin \omega$, $q \notin \omega'$ and $s, t \in \omega'$ }
  \end{gather*}
  \normalsize
  \caption{Transition rules for pure process configurations.}
\label{fig:trans_pure}
\end{figure*}

$\bold{Pure}$ $\bold{Configuration}$ $\bold{Transition}$ $\bold{Rules:}$
The rules for pure process configurations are given in Figure~\ref{fig:trans_pure}. This defines the input and output transitions for pure configurations. It is used in the hypothesis of $\Loutq$, $\Loutns$ and $\Lcom$ to determine the actions of the individual components in a mixed configurations. The rules namely choice ($\Psum$), parallel ($\Ppar$) and restriction ($\Pnew$) are required to define input and output actions for arbitrary process constructions. These rules are applicable for both qubits and number states and $\Pps$ is for the conversion of polarisation qubit to the number states.

\begin{figure*}
\scriptsize
  \begin{gather*} 
    \tag{\Lprob}  
    \Prob{j}{p_j} (\Dist{i}{g_i} (\vec{x}:\tilde{T};\sigma_i; \omega; P_i ) ) \ptrans{p_i} \Dist{i}{g_i} ( \vec{x}:\tilde{T};\sigma_i; \omega; P_I ) \\[2mm]
   \tag{\Lin}  
    \Dist{i}{g_i} (\vec{x}:\vec{T};\sigma_i; \omega; \ltrm{\vec{z}}{\inp{c}{\vec{q},\vec{s}}.P}{\vec{v_i} ) \transition{\inp{c}{\vec{p},\vec{r}}} \Dist{i}{g_i} (\vec{x}:\vec{T}; \sigma_i; \omega,\vec{r},\vec{p}; \ltrm{\vec{z}}{P\{\vec{p}/\vec{q},\vec{r}/\vec{s}\}}{\vec{v_i}}} ) 
  \\    \forall i \in I. ((\vec{p},\vec{q}):\vec{\Qbit},\vec{s}:\vec{\NS};\qstore{\vec{p}\vec{q}\vec{s}}{\ket{\alpha_i}\ket{\beta}}; \vec{p},\vec{s}; P\subst{\vec{v}_i}{\vec{x}} ) \transitionp{\outp{c}{\vec{u}_i,\vec{r}}}((\vec{p},\vec{q}):\vec{\Qbit},\vec{s}:\vec{\NS};\qstore{\vec{p}\vec{q}\vec{s}}{\ket{\alpha_i}\ket{\beta}}; \vec{p}',\vec{s}; P'\subst{\vec{v}_i}{\vec{x}} )\\
      \rule{356pt}{1pt} \tag{\Loutq}\\
      \Dist{i \in I}{g_i} ((\vec{p},\vec{q}):\vec{\Qbit},\vec{s}:\vec{\NS};\qstore{\vec{p}\vec{q}\vec{s}}{\ket{\alpha_i}\ket{\beta}}; \vec{p},\vec{s}; \ltrm{\vec{x}}{P}{\vec{v}_i} ) \transition{\outp{c}{U,\vec{r}}} \Prob{j \in J}{p_j} (\Dist{i \in I_j}{\frac{g_i}{p_j}} ((\vec{p'},\vec{q}):\vec{\Qbit},\vec{s}:\vec{\NS};\qstore{\vec{p}'\vec{r}\vec{q}\vec{s}}{\Pi\ket{\alpha_i}\ket{\beta}}; \vec{p}',\vec{s}; \ltrm{\vec{x}}{P'}{\vec{v}_i} ) ) \\
    \textrm{where } U = \{\vec{u}_i ~|~ i \in I\} = \{\vec{w}_{j} ~|~ j \in J\} \textrm{ and } \forall j \in J, I_j = \{i | \vec{u}_i = \vec{w}_{j}\}, p_j = \sum_{i \in I_j} g_i \\
     \textrm{ and } \vec{r} \subseteq \vec{p}, \vec{p}' = \vec{p} \setminus \vec{r}, \Pi \textrm{ corresponds to the permutation } \pi:\vec{p}\vec{q}\vec{s} \mapsto \vec{p}'\vec{r}\vec{q}\vec{s} ~. 
   \\   \forall i,j \in I. (\vec{p}:\vec{\Qbit},(\vec{t},\vec{s}):\vec{\NS};\qstore{\vec{p}\vec{t}\vec{s}}{\ket{\alpha}\ket{\beta_{ij}}}; \vec{p},\vec{s}; P\{\vec{v}_{ij}/ \vec{x},\vec{w}_{ij}/ \vec{y}\} ) \transitionp{\outp{c}{\vec{u}_{ij},\vec{r}}} (\vec{p}:\vec{\Qbit},(\vec{t},\vec{s}):\vec{\NS};\qstore{\vec{p}\vec{t}\vec{s}}{\ket{\alpha}\ket{\beta_{ij}}}; \vec{p},\vec{s}'; P'\{\vec{v}_{ij}/ \vec{x},\vec{w}_{ij}/ \vec{y}\})\\
     \rule{356pt}{1pt}  \tag{\Loutns}\\ 
      \Dist{i,j \in I}{g_{ij}} (\vec{p}:\vec{\Qbit},(\vec{t},\vec{s}):\vec{\NS};\qstore{\vec{p}\vec{t}\vec{s}}{\ket{\alpha}\ket{\beta_{ij}}}; \vec{p},\vec{s}; \ltrm{\vec{x}\vec{y}}{P}{\vec{v}_{ij},\vec{w_{ij}}} ) \transition{\outp{c}{U,\vec{r}}}\\ \Prob{k \in J}{p_{k}} (\Dist{i,j \in I_k}{\frac{g_{ij}}{p_{k}}} (\vec{p}:\vec{\Qbit},(\vec{t},\vec{s}':\vec{\NS});\qstore{\vec{p}\vec{t}\vec{s}'\vec{r}}{\Pi\ket{\alpha}\ket{\beta_{ij}}}; \vec{p},\vec{s}'; \ltrm{\vec{x}\vec{y}}{P'}{\vec{v}_{ij},\vec{w}_{ij}} ) ) \\
   \textrm{where } U = \{\vec{u}_{ij} ~|~ i,j \in I\} = \{\vec{e}_{k} ~|~ k \in J\}, 
   \textrm{ and } \forall k \in J, I_{k} = \{i,j | \vec{u}_{ij} = \vec{e}_{k}\}, p_{k} = \sum_{i,j \in I_{k}} g_{ij} \\
    \textrm{ and } \vec{r} \subseteq \vec{s}, \vec{s}' = \vec{s} \setminus \vec{r}, \Pi \textrm{ corresponds to the permutation } \pi:\vec{p}\vec{t}\vec{s} \mapsto \vec{p}\vec{t}\vec{r}\vec{s}' ~. 
    \\[0mm]
    \tag{\Lcom} 
    \begin{prooftree}
      \begin{array}{l}
	\forall i \in I. (\vec{x}:\vec{T};\sigma_i; \omega,\vec{r}; P \subst{\vec{v_i}}{\vec{z}} ) \transitionp{\outp{c}{\vec{u}_i,\vec{r}}} (\vec{x}:\vec{T};\sigma_i; \omega; P'\subst{\vec{v_i}}{\vec{z}} ) \\
	\forall i \in I. (\vec{x}:\vec{T};\sigma_i; \omega; Q \subst{\vec{v_i}}{\vec{z}} ) \transitionp{\inp{c}{\vec{u}_i,\vec{r}}} (\vec{x}:\vec{T};\sigma_i; \omega,\vec{r}; Q'\subst{\vec{v_i}}{\vec{z}} )
      \end{array}
      \justifies
      \Dist{i \in I}{g_i} (\vec{x}:\vec{T};\sigma_i; \omega,\vec{r}; \ltrm{\vec{z}}{P \parcomp Q}{\vec{v_i}} ) \transition{\tau} \Dist{i \in I}{g_i} (\vec{x}:\vec{T};\sigma_i; \omega,\vec{r}; \ltrm{\vec{z}}{P' \parcomp Q'}{\vec{v_i}} )
    \end{prooftree} \\[2mm]
    \tag{\Lpar}
    \begin{prooftree}
      \Dist{i \in I}{g_i} (\vec{x}:\vec{T};\sigma_i; \omega; \ltrm{\vec{z}}{P}{\vec{v_i}}) \transition{\alpha} \Dist{\substack{i \in I\\j \in J_i}}{g_ih_{ij}} (\vec{x}:\vec{T};\sigma_{ij}; \omega'; \ltrm{\vec{z}\vec{y}}{P'}{\vec{v_i},\vec{w_{ij}}})
      \justifies
      \Dist{i \in I}{g_i} (\vec{x}:\vec{T};\sigma_i; \omega; \ltrm{\vec{z}}{P \parcomp Q}{\vec{v_i}}) \transition{\alpha} \Dist{\substack{i \in I\\j \in J_i}}{g_ih_{ij}} (\vec{x}:\vec{T};\sigma_{ij}; \omega'; \ltrm{\vec{z}\vec{y}}{P' \parcomp Q}{\vec{v_i},\vec{w_{ij}}})
    \end{prooftree} \\[2mm]
    \tag{\Lsum}
    \begin{prooftree}
      \Dist{i \in I}{g_i} (\vec{x}:\vec{T};\sigma_i; \omega; \ltrm{\vec{z}}{P}{\vec{v_i}}) \transition{\alpha} \Dist{\substack{i \in I\\j \in J_i}}{g_ih_{ij}} (\vec{x}:\vec{T};\sigma_{ij}; \omega'; \ltrm{\vec{z}\vec{y}}{P'}{\vec{v_i},\vec{w_{ij}}})
      \justifies
      \Dist{i \in I}{g_i} (\vec{x}:\vec{T};\sigma_i; \omega; \ltrm{\vec{z}}{P + Q}{\vec{v_i}}) \transition{\alpha} \Dist{\substack{i \in I\\j \in J_i}}{g_ih_{ij}} (\vec{x}:\vec{T};\sigma_{ij}; \omega'; \ltrm{\vec{z}\vec{y}}{P'}{\vec{v_i},\vec{w_{ij}}})
    \end{prooftree}\\
      \tag{\Lnew}
    \begin{prooftree}
      \Dist{i \in I}{g_i} (\vec{x}:\vec{T};\sigma_i; \omega; \ltrm{\vec{z}}{P}{\vec{v_i}}) \transition{\alpha} \Dist{\substack{i \in I\\j \in J_i}}{g_ih_{ij}} (\vec{x}:\vec{T};\sigma_{ij}; \omega'; \ltrm{\vec{z}\vec{y}}{P'}{\vec{v_i},\vec{w_{ij}}})
      \justifies
      \Dist{i \in I}{g_i} (\vec{x}:\vec{T};\sigma_i; \omega; \ltrm{\vec{z}}{(\new c)P}{\vec{v_i}}) \transition{\alpha} \Dist{\substack{i \in I\\j \in J_i}}{g_ih_{ij}} (\vec{x}:\vec{T};\sigma_{ij}; \omega'; \ltrm{\vec{z}\vec{y}}{(\new c)P'}{\vec{v_i},\vec{w_{ij}}})
    \end{prooftree} \\ 
    \hspace{60mm} \textrm{if $\alpha \notin \{\inp{c}{\cdot}, \outp{c}{\cdot}\}$}  \\[2mm]
   \Dist{i \in I}{g_i} (\vec{q}:\vec{\Qbit},\vec{s}:\vec{\NS}; \qstore{\vec{q}\vec{s}}{\ket{\beta_i}\ket{\gamma_i}} ; \omega; \ltrm{\vec{z}}{(\qbit : y)P}{\vec{v_i}} ) \transition{\tau} \Dist{i \in I}{g_i} (\vec{q}:\vec{\Qbit},q:\Qbit,\vec{s}:\vec{\NS}; \qstore{\vec{q},q,\vec{s}}{\ket{\beta_i}\ket{\phi_j}\ket{\gamma_i}} ; \omega,q; \ltrm{\vec{z}}{P\subst{q}{y}}{\vec{v_i}} ) \\
    \tag{\Lqbit} \hspace{70mm} \textrm{where $q$ is fresh} \\[2mm]
       \Dist{i \in I}{g_i} (\vec{q}:\vec{\Qbit},\vec{s}:\vec{\NS}; \qstore{\vec{q}\vec{s}}{\ket{\beta_i}\ket{\gamma_i}} ; \omega; \ltrm{\vec{z}}{(\ns : y)P}{\vec{X}} ) \transition{\tau}\Dist{i \in I}{g_i} (\vec{q}:\vec{\Qbit},r:\NS,\vec{s}:\vec{\NS}; \qstore{\vec{q},r,\vec{s}}{\ket{\beta_i}\ket{\psi_j}\ket{\gamma_i}} ; \omega,r; \ltrm{\vec{m}}{P\subst{r}{y}}{\vec{v_i}} ) \\
    \tag{\Lns} \hspace{70mm} \textrm{where $r$ is fresh} \\[2mm]
    \tag{\Lact} \Dist{i \in I}{g_i} (\vec{x}:\vec{T};\sigma_i; \omega; \ltrm{\vec{z}}{\action{u}.P_i}{\vec{v_i}} ) \transition{\tau} \Dist{i \in I}{g_i} (\vec{x}:\vec{T};\sigma_i; \omega; \ltrm{\vec{z}}{P}{\vec{v_i}} ) \\[2mm]
    \tag{\Lps}
      \Dist{i \in I}{g_i} (\tilde{p},\tilde{q}: \tilde{\Qbit}, q_{c}:\Qbit, \tilde{r}: \tilde{\NS}; \qstore{\tilde{p},q_{c},\tilde{q},\tilde{r}}{\ket{\phi}};\omega; \ltrm{\vec{z}}{\action{s_{a},s_{b}\trans\qgate{PS}(q_{c})}}\sep P, \vec{v_i})\\ \transition{\tau} \Dist{i \in I}{g_i}(\tilde{p},\tilde{q} : \tilde{\Qbit},\tilde{r}: \tilde{\NS}, s_{a} :\NS, s_{b} :\NS;\qstore{\tilde{p},\tilde{q},\tilde{r},s_{a},s_{b}}{\ket{\psi}};\omega'; \ltrm{\vec{z}}P, \vec{v_i})\\
    \tag{\Lexpr} 
    \begin{prooftree}
      \Dist{i \in I}{h_i} (\vec{x}:\vec{T};\sigma_i; \omega; \ltrm{\vec{y}}{e}{\vec{v_i}} ) \transitione \Dist{\substack{i \in I\\j \in J_i}}{h_i g_{ij}} (\vec{x}:\vec{T};\sigma_{ij}; \omega; \ltrm{\vec{y}\vec{z}}{e'}{\vec{v_i},\vec{w_{ij}}} )
      \justifies
      \Dist{i \in I}{h_i} (\vec{x}:\vec{T};\sigma_i; \omega; \ltrm{\vec{y}}{F[e]}{\vec{v_i}} ) \transition{\tau} \Dist{\substack{i \in I\\j \in J_i}}{h_i g_{ij}} (\vec{x}:\vec{T};\sigma_{ij}; \omega; \ltrm{\vec{y}\vec{z}}{F[e']}{\vec{v_i},\vec{w_{ij}}} )
    \end{prooftree}
  \end{gather*}
    \normalsize
\caption{Transition rules for mixed process configurations.}
\label{fig:trans_mixed}
\end{figure*}

$\bold{Mixed}$ $\bold{Configuration}$ $\bold{Transition}$ $\bold{Rules:}$
The transition relation on mixed configurations is defined by the rules in Figure~\ref{fig:trans_mixed}. The rule $\Lprob$ is a probabilistic transition in which $p_{i}$ is the probability of the transition. The rules $\Lin$, $\Loutq$ and $\Loutns$ represent the input and output actions respectively, which are the visible interactions with the environment. $P\{\vec{v_{i}}/\vec{z}\}$ indicates that $P$ with a list of values $v_{i}$ is substituted for the list of variables $\vec{z}$. When the two processes of input and output actions are put in parallel then each has a partner for its potential interaction, and the input and output can synchronise, resulting in a $\tau$ transition which is given by the rule $\Lcom$. The rule $\Lact$ just removes actions. This is a reduction of the action expression to $v$ which would involve effects like measurement or transformation of the quantum state. Rules $\Lqbit$ and $\Lns$ are for introducing additional $\Qbit$ and $\NS$ variables respectively. $\mathsf{ns}$ declarations represents vacuum states. Since the values associated with the an input action are determined by the environment, this action is identical across all components in a mixed configuration. $\Lpar$, $\Lsum$ and $\Lnew$ can then be used to define inputs on other process constructions in a mixed configuration.

The rule $\Loutq$ and $\Loutns$ is the point at which mixed configurations are combined with probabilistic branching. Branching must occur when and only when there is information to distinguish the components. This information is represented by the classical values that are outputs, which may vary between the components. Some values may be the same, thereby requiring the relevant components to remain in a mixed configuration after the output. The purpose of $\Loutq$ and $\Loutns$ is to distribute the components according to the different values, and to assign an action label that represents the combined action of \emph{all} components. For example in transition $\Loutq$, each component has a pure transition $\transitionp{\outp{c}{\vec{u}_i,\vec{r}}}$ representing the channel and qubit names that are common to all components, and the values $\vec{u}_i$ that are specific to that component. The combined action label $\transition{\outp{c}{U,\vec{r}}}$ consists of these common elements and the set $U$ of all the value tuples.

\begin{example}
 \label{ex:cfg_measurement}
   $
  (q,s,t :\vec{T}; \qstore{q,s,t}{\alpha_{10}\ket{0}\ket{10} + \alpha_{01}\ket{1}\ket{01} +\alpha_{20}\ket{0}\ket{20}}; q,s,t; \outp{c}{\measure s,t}\sep P)$\\
$\transition{\tau}  \Dist{i,j \in \{0,1,2\}} |\alpha_{ij}|^{2}(q,s,t :\vec{T};\qstore{q,s,t}{\ket{\beta}\ket{ij}}; q,s,t; \ltrm{yz}{\outp{c}{y,z}\sep P}i,j).
 $
\end{example}
This transition represents the effect of a measurement of a pair of number states ($s$, $t$), within a process which is going to output the result of the measurement. The configuration on the left is a \emph{pure configuration}, as described before.  On the right we have a \emph{mixed configuration} in which the $\oplus$ ranges over the possible outcomes of the measurement and the $\ms{\alpha_{ij}}$ are the weights of the components in the mixture. The quantum state $\qstore{q,s,t}{\ket{\beta}\ket{ij}}$ corresponds to the measurement outcome. The expression $\ltrmshort{yz}{\outp{c}{y,z}.P}$ represents the fact that the components of the mixed configuration have the same process structure and differ only in the values corresponding to measurement outcomes. The final terms in the configuration, $i$ and $j$, shows how the abstracted variables $y$ and $z$ should be instantiated in each component. Thus the $\lambda yz$ represents a term into which expressions may be substituted, which is the reason for the $\lambda$ notation. 
The next transition ($\Rpsmeasure$) represents \emph{post-selective} measurement which filters out the measurement values that satisfies a predefined criteria. 

\begin{example}
 \label{ex:ps_cfg_measurement}
  $
  (q,s,t :\vec{T}; \qstore{q,s,t}{\alpha_{10}\ket{0}\ket{10} + \alpha_{01}\ket{1}\ket{01} +\alpha_{20}\ket{0}\ket{20}}; q,s,t; \outp{c}{\psmeasure s,t}\sep P)$
  \\
$ \transition{\tau}  \Dist{i,j \in \{0,1\}, i \ne j} |\beta_{ij}|^{2}(q,s,t :\vec{T};\qstore{q,s,t}{\ket{\delta}\ket{ij}}; q,s,t ; \ltrm{y}{\outp{c}{y}\sep P}j).
 $
\end{example}

Unlike  Example~\ref{ex:cfg_measurement}, here $i$ and $j$ can have values either $0$ or $1$ and $i \ne j$. This is the criterion for post-selection and the weights of the components in the mixture are now $\ms{\beta_{ij}}$ (where $\ms{\beta_{ij}} = \frac{\ms{\alpha_{ij}}}{\sum_{ij \in \{0,1\}}\ms{\alpha_{ij}}}$). Also, here we measure two number states $s$ and $t$, which results in one classical value.
Example~\ref{ex:ps_cfg_output} shows the effect of the output from the final configuration of Example~\ref{ex:ps_cfg_measurement}. 

\begin{example}
 \label{ex:ps_cfg_output}
 $
 \Dist{i,j \in \{0,1\}, i \ne j} |\beta_{ij}|^{2}(\vec{x}:\vec{T};\qstore{\vec{x}}{\ket{\delta}\ket{ij}}; \vec{x}; \ltrm{y}{\outp{c}{y}\sep P}i)\transition{\outp{c}{j}} \Prob{ij \in \{0,1\}, i \ne j} |\beta_{ij}|^{2}$\\
$(\vec{x}:\vec{T};\qstore{\vec{x}}{\ket{\delta}\ket{ij}}; \vec{x}; \ltrm{y}{P}j)\ptrans{\ms{\beta_{01}}}(\vec{x}:\vec{T};\qstore{\vec{x}}{\ket{1}\ket{01}}; \vec{x}; \ltrm{y}{P}1)
 $
\end{example}

Here $\vec{x}$ is a list of names consisting $q$, $s$ and $t$. The output transition produces the intermediate configuration, which is a probability distribution over pure configurations (in contrast to a mixed configuration; note the change from $\oplus$ to $\boxplus$). Because it comes from a mixed configuration, the output transition contains a \emph{set} of possible values. From this intermediate configuration there are two possible probabilistic transitions, of which one is shown ($\ptrans{\ms{\beta_{01}}}$). 

\begin{example}
 \label{ex:cfg_communication}
$
  \Dist{i,j \ge 0} g_{ij}(\vec{x}:\vec{T};\qstore{\vec{x}}{\ket{\beta}\ket{ij}}; \vec{x}; \ltrm{yz}{(\outp{c}{y}\sep P \parallel \inp{c}{y}\sep Q)}i,j)\transition{\tau}$\\ $ \Dist{i,j \ge 0} g_{ij}(\vec{x}:\vec{T};\qstore{\vec{x}}{\ket{\beta}\ket{ij}}; \vec{x}; \ltrm{yz}{( P \parallel Q)}i,j)
 $
\end{example}

Measurement outcomes may be communicated between processes without creating a probability distribution. In these cases an observer must still consider the system to be in a mixed configuration as the outcomes are communicated internally and not to the environment.

\begin{example}
 \label{ex:ps}
 $
  (q: \Qbit, r :\Qbit, p :\NS, t :\NS; \qstore{q,r,p,t}{\alpha\ket{00}\ket{10} + \beta\ket{11}\ket{01}}; q,r,p,t;
   \{u :\NS,v :\NS \trans\qgate{PS}(q)\}\sep P)\transition{\tau} (r :\Qbit,\tilde{s}':\tilde{\NS};\qstore{r,\tilde{s}'}{\alpha\ket{0}\ket{1010} + \beta\ket{1}\ket{0101}}; r,\tilde{s}'; P)
 $
\end{example}

Example~\ref{ex:ps} represents the transition $\Pps$, which is the conversion of a polarisation qubit ($q$) to the number states ($u$ and $v$). $\tilde{s}'$ indicates that it is a list of names comprising $p,t,u$ and $v$ of type $\NS$. 

\subsection{Execution of $Model_1$}

Let $t = (\emptyset;\emptyset;\emptyset;\pmodel)$ be the initial configuration. The semantics of CQP is non-deterministic and hence the transitions can proceed in different order. In the first few steps, the process $\ppolsect$ receives qubits $q_{1}$ and $q_{2}$ from the environment, constructing a global quantum state $\ket{\phi}_{q} = \alpha\ket{00} + \beta\ket{01} + \gamma\ket{10} + \delta\ket{11}$. We get the configuration,
$(q_{1} : \Qbit, q_{2} : \Qbit, q_{1}q_{2} = \ket{\phi}_{q};q_{1},q_{2};(\ppolsect ' \parallel\pcnot\parallel\pmmt))$.
After some $\tau$ transitions corresponding to $\ppolsect$ operations, the qubits are converted to the respective number states $s_{0},s_{1},s_{2}$ and $s_{3}$ by $\mathsf{PS}$ operator giving the quantum state $\ket{\phi}_{s} = \alpha\ket{1010} + \beta\ket{1001} + \gamma\ket{0110} + \delta\ket{0101}$. The configuration is now 
$(\vec{s} : \vec{\NS};\vec{s} = \ket{\phi}_{s};s_{0},s_{1},s_{2},s_{3};(\ppolsect '' \parallel\pcnot\parallel\pmmt))$.
After another set of $\tau$ transitions corresponding to the $\pcnot$ process, we get the state $\ket{\phi}_{out}$ which is given by Eq.~\ref{Eq3}. The configuration now becomes
$(\vec{s} : \vec{\NS};\vec{s} = \ket{\phi}_{out};s_{0},s_{1},s_{2},s_{3};(\pcnot' \parallel\pmmt))$.
After the measurement by both detectors, the outcomes are communicated to the $\pcounter$. This happens internally and hence, we get the mixed configuration:
 \[
 \begin{array}{l}
\Dist{\substack{ij \geq 0\\kl \geq 0}}{g_{ij}h_{ijkl}}(\vec{s} : \vec{\NS};\vec{s} = \ket{\phi_{ijkl}};s_{0},s_{1},s_{2},s_{3};\ltrm{\vec{y}}{\pcounter'}i,j,k,l)
\end{array}
\]
Here $\vec{y}$ is a list of measurement outcomes ($c_{0},c_{1},t_{0}$ and $t_{1}$). The output transitions produces the configuration below, which is a mixed state given by 
$\Dist{i,j,k,l,m \in \{0,1\}}{g_{ijm}h_{ijklm}}(\vec{s} : \vec{\NS};\vec{s} = \ket{\phi_{ijkl}};\vec{s};\ltrm{\vec{z}}{\nil}i,j,k,l,m)$
where $\vec{z}$ is $c_{1},t_{1},b$. The mixture contains both the successful and unsuccessful outcomes of $\pmodel$.

\section{Behavioural Equivalence of CQP Processes}
\label{sec-equivalence}
\label{sec:equivalence}
We now extend the theory of equivalence in CQP to apply it for LOQC. The process calculus approach to verification is to define a process $\pname{Model}$ which models the system of interest, another process $\pname{Specification}$ which expresses the specification that $\pname{Model}$
should satisfy, and then prove that $\pname{Model}$ and $\pname{Specification}$ are equivalent. 
We begin with the definition of \emph{probabilistic branching bisimilarity}, which is a congruence for CQP. 

\subsection{Probabilistic Branching Bisimilarity}

There are several types of probabilistic bisimilarity
for classical probabilistic process calculi, including
\emph{probabilistic branching bisimilarity} \cite{Trcka2008}. The equivalence
for CQP defined by Davidson \cite{DavidsonThesis}, which turns out to be a
congruence, is a form of probabilistic branching bisimilarity, adapted
to the situation in which probabilistic behaviour comes from quantum
measurement. A key point is that when considering matching of input or
output transitions involving qubits, it is the reduced density
matrices of the transmitted qubits that are required to be equal.
We will now define probabilistic branching bisimilarity in full. The 
definitions in the remainder of this section are an extension from Davidson's
thesis \cite{DavidsonThesis}.

\textbf{Notation:} Let $\opttrans{\tau}$ denote zero
or one $\tau$ transitions; let $\weaktrans{ }$ denote zero or more $\tau$
transitions; and let $\weaktrans{\alpha}$ be equivalent to $\weaktrans{ }
\transition{\alpha} \weaktrans{ }$. We write $\vec{q}$ for a list of
qubit names, and similarly for other lists.

\begin{Definition}[Density Matrix of Configurations]
  \label{def:density_matrix_configs}
  Let $\sigma_{ij} = \qstore{\vec{x}}{\ket{\psi_{ij}}}$ and $\vec{y}
  \subseteq \vec{x}$ and $t_{ij} = (\vec{x}:\vec{T};\sigma_{ij}; \omega;
  \ltrm{\vec{w}}{P}{\vec{v_{ij}}})$ and $t = \Dist{ij}{g_{ij}} t_{ij}$. Then
\[
\begin{array}{llcll}
1. & \rho(\sigma_{ij}) = \ketbra{\psi_{ij}}{\psi_{ij}} & ~~~~~~ & 4. &
\rho^{\vec{y}}(t_{ij}) = \rho^{\vec{y}}(\sigma_{ij})  \\
2. & \rho^{\vec{y}}(\sigma_{ij}) =
\trace_{\vec{x}\setminus\vec{y}}(\ketbra{\psi_{ij}}{\psi_{ij}}) & & 5. &
\rho(t) = \sum_{ij} g_{ij} \rho(t_{ij}) \\
3. & \rho(t_{ij}) = \rho(\sigma_{ij}) & & 6. & \rho^{\vec{y}}(t) = \sum_{ij} g_{ij} \rho^{\vec{y}}(t_{ij})
\end{array}
\]
\end{Definition}

We also introduce the notation $\rhoe$ to denote the reduced density
matrix of the \emph{environment} qubits or number states. Formally, if $t =
(\vec{x}:\vec{T};\qstore{\vec{x}}{\ket{\psi}}; \vec{y}; P)$ then $\rhoe(t) =
\rho^{\vec{r}}(t)$ where $\vec{r} = \vec{x} \setminus \vec{y}$. The
definition of $\rhoe$ is extended to mixed configurations in the same
manner as $\rho$. The probabilistic function $\mu: \states \times \states \rightarrow
[0,1]$ is defined in the style of \cite{Trcka2008}. It allows
non-deterministic transitions to be treated as transitions with
probability $1$, which is necessary when calculating the total
probability of reaching a terminal state.
$\mu(t,u) = \delta$ if $t \ptrans{\delta} u$; $\mu(t,u) = 1$ if $t = u$ and
$t \in \ntates$; $\mu(t,u) = 0$ otherwise.

\begin{Definition}[Probabilistic Branching Bisimulation]
  \label{def:pbb}
  An equivalence relation $\mathcal{R}$
  on configurations is a \emph{probabilistic branching bisimulation} on configurations if whenever $(t,u) \in \mathcal{R}$ the following conditions are satisfied.
  \begin{enumerate}[I.]
	    \item If $t \in \ntates$ and $t \transition{\tau} t'$ 
	      then $\exists u', u''$ such that $u \weaktrans{ } u'
              \opttrans{\tau} u''$ with $(t, u') \in \mathcal{R}$ and $(t', u'') \in \mathcal{R}$.
	    \item If $t \transition{\outp{c}{V,\vec{X}_1}} t'$ where $t' = \Prob{j \in \{1\dots m\}}{p_j} t_j'$ and $V = \{\vec{v}_1,\dots,\vec{v}_m\}$ and $\vec{X}_1$ is either $\vec{q}_1$ or $\vec{s}_1$ then $\exists u', u''$ such that $u \weaktrans{ } u' \transition{\outp{c}{V,\vec{X}_2}} u''$ with
	      \begin{enumerate}[a)]
		\item $(t, u') \in \mathcal{R}$,
		\item $u'' = \Prob{j \in \{1\dots m\}}{p_j} u_j''$,
                \item for each $j \in \{1,\dots,m\}$, $\rhoe(t_j') = \rhoe(u_j'')$.
		\item for each $j \in \{1,\dots,m\}$, $(t_j', u_j'') \in \mathcal{R}$.
	      \end{enumerate}
        \item If $t \transition{\inp{c}{\vec{v}}} t'$ then $\exists
          u', u''$ such that $u \weaktrans{ } u'
          \transition{\inp{c}{\vec{v}}} u''$ with $(t, u') \in
          \mathcal{R}$ and $(t', u'') \in \mathcal{R}$.
	    \item If $s \in \ptates$ then $\mu(t, D) = \mu(u, D)$ for all classes $D \in \mathcal{T}/\mathcal{R}$.
    \end{enumerate}
\end{Definition}
This relation follows the standard definition of branching
bisimulation \cite{Glabbeek1996} with additional conditions for
probabilistic configurations and matching quantum information. In
condition II we require that the distinct set of values $V$ must match
and although the names ($\vec{X}_1$ and $\vec{X}_2$) need not be
identical which is either the qubit names ($\vec{q}_1$ and $\vec{q}_2$) or number state names ($\vec{s}_1$ and $\vec{s}_2$), their respective reduced density matrices
($\rho^{\vec{X}_1}(t)$ and $\rho^{\vec{X}_2}(u')$) must.
Condition IV provides the matching on probabilistic configurations
following the approach of \cite{Trcka2008}. 
It is necessary to include the latter condition to ensure that the
probabilities are paired with their respective configurations. This leads to the following definitions. The essential definitions are presented in this paper and the others are provided in the appendix.

\begin{Definition}[Probabilistic Branching Bisimilarity]
Configurations $t$ and $u$ are \emph{probabilistic branching bisimilar}, denoted $t \pbsim u$, if there exists a probabilistic branching bisimulation $\mathcal{R}$ such that $(t,u) \in \mathcal{R}$.
\end{Definition}

\begin{Definition}[Probabilistic Branching Bisimilarity of Processes]
Processes $P$ and $Q$ are \emph{probabilistic branching bisimilar}, denoted $P \pbsim Q$, if and only if for all $\sigma$, $(\vec{x}:\vec{T};\sigma; \emptyset; P) \pbsim (\vec{x}:\vec{T};\sigma; \emptyset; Q)$.
\end{Definition}

\begin{Definition}[Full probabilistic branching bisimilarity]
  Processes $P$ and $Q$ are \emph{full probabilistic branching
    bisimilar}, denoted $P \fpbsim Q$, if for all substitutions $\kappa$
and all quantum states
  $\sigma$, $(\vec{x}:\vec{T};\sigma; \vec{q},\vec{s}; P\kappa) \pbsim (\vec{x}:\vec{T};\sigma; \vec{q},\vec{s}; Q\kappa)$. 
\end{Definition}

In order to state the \emph{congruence} theorem, we need an assumption that processes are typable. Its essential idea is to associate each qubit or number state with a unique owning component of the process. In particular this means that when we consider a process P in a context, $C[P]$, the context cannot manipulate quantum state that is owned by P. The full type system is a straightforward extension of the system from CQP, taking account of number states.
\begin{Theorem}[Full probabilistic branching bisimilarity is a congruence]
  \label{thm:congruence}
  If $P \fpbsim Q$ then for any context $\ctxt{C}{}$, if $\ctxt{C}{P}$ and $\ctxt{C}{Q}$ are typable then $\ctxt{C}{P} \fpbsim \ctxt{C}{Q}$. 
\end{Theorem}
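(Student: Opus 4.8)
The plan is to prove the theorem by structural induction on the one-hole context $\ctxt{C}{}$. Since any such context is built by nesting the syntactic operators of CQP around the hole, it suffices to establish that $\fpbsim$ is preserved by each operator in isolation: that $P \fpbsim Q$ implies $P \parallel R \fpbsim Q \parallel R$, $(\new c)P \fpbsim (\new c)Q$, $\inp{e}{\vec{x}}.P \fpbsim \inp{e}{\vec{x}}.Q$, $\outp{e}{\vec{e}}.P \fpbsim \outp{e}{\vec{e}}.Q$, $\action{e}.P \fpbsim \action{e}.Q$, $(\qbit x)P \fpbsim (\qbit x)Q$, $(\ns x)P \fpbsim (\ns x)Q$, and $P + R \fpbsim Q + R$ for guarded choice, in each case under the hypothesis that both instantiated contexts are typable. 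The general statement then follows by composing these along the structure of $\ctxt{C}{}$ using transitivity of $\pbsim$, with recursive process definitions handled by the standard argument for guarded recursion. Because $\fpbsim$ already ranges over all substitutions $\kappa$ (and over all quantum states $\sigma$ with the hole's process owning all names), the usual $\pi$-calculus complication of closing under the substitutions introduced by input prefixes is absorbed into the definition: for fixed $\kappa,\sigma$ we may write $\ctxt{C}{P}\kappa$ as $\ctxt{D}{P\kappa}$ for a suitably renamed context $\ctxt{D}{}$, so nothing beyond the per-operator lemmas is needed.

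The crux is the parallel case. I would take $\mathcal{R}$ to be the smallest equivalence relation on (mixed) configurations that contains $\pbsim$, contains every pair obtained by placing $\pbsim$-related configurations in a common one-sided parallel context $(-)\parallel R$ (with the result typable), and is closed under forming $\oplus$- and $\boxplus$-combinations of related families, i.e.\ $t_i \mathrel{\mathcal R} u_i$ for all $i$ gives $\Dist{i}{g_i} t_i \mathrel{\mathcal R} \Dist{i}{g_i} u_i$ and similarly for $\boxplus$. To show $\mathcal R$ is a probabilistic branching bisimulation, one analyses a transition of a configuration over $P \parallel R$ by the rule that produced it. Via $\Lpar$ it may be a move of the $P$-side alone, matched on the $Q$-side using the corresponding clause (I, II, or III) of $t \pbsim u$ with $R$ carried along unchanged; here the requirement in those clauses that the intermediate weak transition $u \weaktrans{ } u'$ keep $(t,u') \in \mathcal R$ is precisely what is needed to keep $R$ synchronised while $Q$ stutters. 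Symmetrically it may be a move of $R$ or of the environment alone, matched identically since $R$ is literally the same on both sides. Or, via $\Lcom$, it may be a synchronisation between $P$ and $R$: for an output from $P$, clause II supplies a matching output with the same value set and equal reduced density matrices on the transmitted qubits/number states, so $R$'s receiving continuation --- being the same function of those values, and acting only on names it owns --- is identical on both sides; dually, an input of $P$ from $R$ carries values determined by $R$, hence identical, and clause III applies. The branching introduced by $\Loutq$/$\Loutns$ when a measurement inside $P$ is observed is matched branch-by-branch using clause II, and closure of $\mathcal R$ under $\oplus$/$\boxplus$ reassembles the mixture with $R$ replicated across branches.

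The remaining operators are comparatively routine. For $\action{e}.(-)$ the only available move fires the action; if $e$ is a unitary application, a measurement ($\Rmeasureq$, $\Rnsmeasuret$), a post-selective measurement ($\Rpsmeasure$), or a $\qgate{PS}$-conversion ($\Lps$, $\Pps$), the quantum-state update and the shape of the resulting (mixed) configuration depend only on $\sigma$ and $e$, not on the contents of the hole, so the two sides reach mixtures with identical weights whose components differ only by having $P$ versus $Q$ under a common substitution --- related by $P \fpbsim Q$ and closure under $\oplus$. For $\inp{e}{\vec x}.(-)$ and $\outp{e}{\vec e}.(-)$ the prefix yields a single visible (or, after evaluating $e$, $\tau$) step to a configuration differing from $P$ resp.\ $Q$ only by a substitution. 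For $(\new c)(-)$ the transitions of $(\new c)P$ are exactly those of $P$ not mentioning $c$, matched via $\Lnew$. For $(\qbit x)(-)$ and $(\ns x)(-)$ the declaration step ($\Lqbit$, $\Lns$) extends $\sigma$ by a fresh qubit resp.\ number state in a fixed initial state on both sides, after which one invokes $P \fpbsim Q$ with the substitution $\subst{q}{x}$ resp.\ $\subst{r}{x}$. The guarded-sum case uses $\Lsum$ and is immediate.

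\textbf{The main obstacle.} The point requiring the most care is the synchronisation and measurement cases above: one must justify that after $P$ emits a qubit or number state into the context, or performs a (post-selective) measurement while embedded, the reduced density matrix seen by \emph{everything outside $P$} --- the immediate partner $R$, the outer context, and the environment --- coincides on the $P$- and $Q$-sides. Clause II only guarantees equality of $\rhoe$ and of $\rho^{\vec X}$ on the transmitted names; the bridge to ``equal as seen by the whole surrounding context'' is the typing invariant that the hole's subprocess owns a fixed set of names no context can touch, so that every observation the context can make of the global state factors through exactly the data clause II controls. I expect this to be where the assumption of typability does its real work, and where the argument most closely follows, and must adapt, Davidson's treatment, the genuinely new ingredients being the number-state analogues ($\Loutns$, $\Lps$, $\Pps$) and post-selection ($\Rpsmeasure$). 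A secondary, inherited subtlety is the branching-bisimulation stuttering bookkeeping noted in the parallel case; and, specific to post-selection, one must check that the weight renormalisation performed by $\Rpsmeasure$ does not break condition IV after embedding --- which it does not, since the renormalising denominators are functions of the quantum state alone, and clauses II and IV already force the relevant quantum data to agree.
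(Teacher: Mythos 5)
Your proposal follows essentially the same route as the paper: per-operator preservation lemmas with parallel composition as the crux (an equivalence relation containing the $\pbsim$-related configurations placed in a common parallel context and closed under transitions/mixtures), supported by exactly the lemmas the paper relies on — external state independence of non-owned qubits and number states, independence of context transitions via reduced density matrices, and the typing/ownership invariant — with input prefixes and qubit/number-state declarations recovered only at the level of full (substitution-closed) bisimilarity, matching the paper's non-input congruence theorem for $\pbsim$ plus closure under substitutions for $\fpbsim$. The obstacle you single out (bridging clause II's guarantees on transmitted names to what the whole context can observe) is precisely what the paper's Lemmas on external state independence and independence of context transitions are for, so your plan and the paper's proof structure coincide.
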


\subsection{Correctness of $Model_{1}$}

We now sketch the proof that $\pmodel \fpbsim \pspec$, which by
Theorem~\ref{thm:congruence} implies that the LOQC CNOT gate
works in any context. 
\begin{proposition}
  $\pmodel \fpbsim \pspec$.
\end{proposition}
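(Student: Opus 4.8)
The plan is to exhibit, for each substitution $\kappa$ and quantum state $\sigma$, a probabilistic branching bisimulation (Definition~\ref{def:pbb}) relating the configurations $(\vec{x}:\vec{T};\sigma;\vec{q},\vec{s};\pmodel\kappa)$ and $(\vec{x}:\vec{T};\sigma;\vec{q},\vec{s};\pspec\kappa)$; since neither process has free qubit or number-state variables and both obtain their quantum data only by input on $a$ and $b$, the construction is uniform in $\kappa$ and $\sigma$, and by Theorem~\ref{thm:congruence} the conclusion then extends to every typable context. I would build the relation $\mathcal{R}$ by following the two executions in lockstep, using the fact that in a branching bisimulation an internal ($\tau$) step on one side may be answered by zero or more $\tau$ steps on the other while the sides remain $\mathcal{R}$-related; this lets the entire ``internal computation'' stay inside a single $\mathcal{R}$-class, so that only the input phase and the final visible outputs have to be matched with care.

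I would organise the pairing into four phases. \emph{Phase~1}: the input of the control and target polarisation qubits on $a$ and $b$. Both $\pmodel$ and $\pspec$ run the identical subprocess $\ppolsect$, so these input transitions match exactly and yield literally identical configurations, in particular with equal reduced density matrices for the transmitted qubits. \emph{Phase~2}: the internal $\tau$-transitions of $\ppolsect$---the two $\qgate{PS}$ conversions via rule $\Lps$---after which both sides hold the same number state $\ket{\phi}_{s}=\alpha\ket{1010}+\beta\ket{1001}+\gamma\ket{0110}+\delta\ket{0101}$ on $s_0,\dots,s_3$. \emph{Phase~3}: the remaining internal computation. On the model side these are the $\tau$-transitions of $\pcnot$ (vacuum creation by $\Lns$, the five $\qgate{B_{\eta}}$ transformations by $\Lact$, and the internal communications by $\Lcom$), which drive the state to $\ket{\phi}_{out}$ of Eq.~\ref{Eq3}; on the specification side they are the $\tau$-transitions of $\pop$ applying $\qgate{U_{\frac{1}{9}}}$ and then $\qgate{H},\qgate{CZ},\qgate{H}$, which by the number-state CNOT identity of \cite{Myers2005} leave the number states in $\mathsf{CNOT}\ket{\phi}_{s}=\alpha\ket{1010}+\beta\ket{1001}+\gamma\ket{0101}+\delta\ket{0110}$ with the ancilla $q_2$ still in $\ket{0}$. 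Every phase-3 configuration on each side is placed in the common $\mathcal{R}$-class of its phase-2 partner, so condition~I of Definition~\ref{def:pbb} is satisfied here by letting the other side idle.

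\emph{Phase~4}---where the probabilistic branching and the visible outputs occur---is the step I expect to be the main obstacle. On the model side the two detector measurements (rule $\Rnsmeasuret$) and the communications to $\pcounter$ produce a mixed configuration over the photon-count outcomes of Eq.~\ref{Eq3}, and the outputs on $out_1,out_2,cnt$ then split it, via $\Loutns$ and $\Lprob$, into a probability distribution over pure configurations. On the specification side, measuring $q_2$ (rule $\Rmeasureq$) yields outcome $1$ with probability $\tfrac{1}{9}$ and $0$ with probability $\tfrac{8}{9}$---exactly what $\qgate{U_{\frac{1}{9}}}$ is chosen to arrange---after which $\poutp$ emits its outputs. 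What must be verified is: (i) the number-state calculation that the four ``coincidence'' terms of Eq.~\ref{Eq3} carry total weight $\tfrac{1}{9}\bigl(|\alpha|^2+|\beta|^2+|\gamma|^2+|\delta|^2\bigr)=\tfrac{1}{9}$ and that, conditioned on coincidence, the normalised number state is exactly $\mathsf{CNOT}\ket{\phi}_{s}$, so that $\pcounter$ reads off the triple $(out_1,cnt,out_2)$ with precisely the distribution of $(\text{logical control},\,1,\,\text{logical target})$ of $\mathsf{CNOT}\ket{\phi}_{s}$, while in every non-coincidence case it reads off $(0,0,0)$; and (ii) the bookkeeping that this matches, tuple for tuple, the behaviour of $\poutp$ for $x=1$ and for $x=0$ respectively, that the distinguishing value-sets $V$ on each output channel agree, that the residual environment reduced density matrices agree componentwise (the transmitted data being classical, this reduces to comparing the collapsed number states, the spare ancilla $q_2$ being in a definite basis state), and that the residual configurations---all eventually $\nil$---are again $\mathcal{R}$-related. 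Collecting the phase-1 to phase-4 pairings, together with their inverses and the identity relation to form an equivalence, yields the required probabilistic branching bisimulation, and hence $\pmodel \fpbsim \pspec$.
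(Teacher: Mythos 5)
Your construction is essentially the paper's own: the paper also proves the proposition by exhibiting an explicit equivalence relation whose classes are indexed by the prefix of observable transitions performed so far (its classes $F_1(\sigma,q_1),\dots,F_5(\sigma)$ correspond exactly to your phases, with all $\tau$-reachable configurations of both processes absorbed into the class of the last visible action), it likewise dismisses substitution closure by noting that $\pmodel$ and $\pspec$ have no free variables, and it invokes Theorem~\ref{thm:congruence} only afterwards, as you do, to conclude robustness under contexts. Where you genuinely diverge is your Phase~4. The paper's proof states that transitions from $F_2,\dots,F_4$ ``are matched similarly'' and that no probability calculation (condition~IV of Definition~\ref{def:pbb}) is needed because probabilistic configurations do not arise, the measurement results being communicated internally; it never carries out the $\tfrac{1}{9}$ computation. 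You instead treat the visible outputs on $out_1$, $out_2$, $cnt$ as splitting the mixed configuration into a probability distribution via $\Loutns$ and $\Lprob$, and you make the quantitative matching --- the $\tfrac{1}{9}$ coincidence weight of Eq.~\ref{Eq3}, the conditional state $\mathsf{CNOT}\ket{\phi}_s$, the value sets and probabilities against the specification's $\qgate{U_{\frac{1}{9}}}$ measurement --- the central obligation. Your reading is the more demanding one and is arguably closer to the output rules as stated (distinct classical values in a mixture do trigger probabilistic branching under $\Loutns$, as in Example~\ref{ex:ps_cfg_output}), so what you propose to check subsumes what the paper checks; the paper's shortcut buys a much shorter argument at the cost of leaving the agreement between Eq.~\ref{Eq3} and the specification's artificial success probability implicit. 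Two small slips in your write-up, neither fatal: after $\qgate{U_{\frac{1}{9}}}$ the ancilla $q_2$ is not ``still in $\ket{0}$'' but in the superposition whose measurement yields $1$ with probability $\tfrac{1}{9}$ (your Phase~4 uses the correct statistics, so this is only a wording inconsistency); and since neither process ever transmits qubits or number states to the environment, condition II(c) on $\rhoe$ is vacuous here, so there is no need to compare collapsed number states or the ancilla at all.
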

 \begin{proof}
 First we prove that $\pmodel \pbsim \pspec$,
   by defining an equivalence relation $\mathcal{R}$ that contains the pair
   $(\newcnfig{\vec{x}:\vec{T}}{\sigma}{\emptyset}{\pmodel},\newcnfig{\vec{x}:\vec{T}}{\sigma}{\emptyset}{\pspec})$
   for all $\sigma$ and is closed under their transitions. 
   $\mathcal{R}$ is defined by taking its equivalence classes to be the
   $F_i(\sigma)$ defined below, for all states $\sigma$, which
   group configurations according to the sequences of observable
   transitions leading to them.
 \[
 \begin{array}{rcl}
 \pname{F_{1}(\sigma,q_{1})} & = & \{f \mid
 (\vec{x}:\vec{T};\sigma;\emptyset;P)\weaktrans{\inp{a}{q_{1}}}f ~\mbox{and}~ P \in E\}\\
 \pname{F_{2}(\sigma,q_{1},q_{2})} & = & \{f \mid
 (\vec{x}:\vec{T};\sigma;\emptyset;P)\weaktrans{\inp{a}{q_{1}}}\weaktrans{\inp{b}{q_{2}}}f ~\mbox{and}~ P \in E\}\\
 \pname{F_{3}(\sigma,q_{2})} & = & \{f  \mid  (\vec{x}:\vec{T};\sigma;\emptyset;P)
 \weaktrans{\inp{a}{q_{1}}} \weaktrans{\inp{b}{q_{2}}}\weaktrans{\outp{out_1}{c_{1}}}f ~\mbox{and}~ P \in E\}\\
  \pname{F_{4}(\sigma)} & = & \{f  \mid  (\vec{x}:\vec{T};\sigma;\emptyset;P)
 \weaktrans{\inp{a}{q_{1}}} \weaktrans{\inp{b}{q_{2}}}\weaktrans{\outp{out_1}{c_{1}}}\weaktrans{\outp{out_2}{c_{2}}}f ~\mbox{and}~ P \in E\}\\
   \pname{F_{5}(\sigma)} & = & \{f  \mid  (\vec{x}:\vec{T};\sigma;\emptyset;P)
 \weaktrans{\inp{a}{q_{1}}} \weaktrans{\inp{b}{q_{2}}}\weaktrans{\outp{out_1}{c_{1}}}\weaktrans{\outp{out_2}{c_{2}}}\weaktrans{\outp{cnt}{y}}f ~\mbox{and}~ P \in E\}\\
 \end{array}
 \]
Here $E$ is $\{\pmodel, \pspec \}$ and we now prove that $\mathcal{R}$ is a probabilistic branching
 bisimulation. It suffices to consider transitions between $F_i$
 classes, as transitions within classes must be $\tau$ and are matched
 by $\tau$.  If $f, g \in F_{1}(\sigma)$ and $f\transition{\inp{a}{q_{1}}}f'$ then $f'\in F_{2}(\sigma)$ and we find $g',g''$ such that $g\weaktrans{}g'\transition{\inp{a}{q_{1}}}g''$ with $g'\in F_{1}(\sigma)$ and $g''\in F_{2}(\sigma)$, so $(f,g')\in\mathcal{R}$ and $(f',g'')\in\mathcal{R}$ as required. 
Transitions from $F_{2}(\sigma)$,$F_{3}(\sigma)$ and $F_{4}(\sigma)$ are matched similarly. There are no transitions from $F_{5}(\sigma)$.
There is no need for a probability calculation (case IV of Definition~\ref{def:pbb}) because the probabilistic configurations do not arise as the measurement results are communicated internally. Finally, because $\pmodel$ and $\pspec$ have no free variables, their equivalence is trivially preserved by substitutions.\qed
\end{proof}

\section{LOQC CNOT Gate: A Second Model}
\label{sec-model}
\label{sec:model}
The first model includes an explicit implementation of the \emph{post-selection} procedure, meaning that the specification process has to include the success probability of $\frac{1}{9}$. We now consider a more abstract model, by introducing a new measurement operator which includes \emph{post-selection} and restricts attention to the successful outcomes. This is achieved by replacing the process $\pmmt$ of our first model by the process $\ppsm$ which performs \emph{post-selective} measurement and enables a simpler specification to be used. The CQP definition of $\pmodelt$ is given as
$\pname{Model_2}(\vec{A}) 
=  (\new \vec{B})(\pname{PolSe_{CT}}(\vec{C})\parallel\pname{CNOT}(\vec{D})\parallel\pname{PSM}(\vec{E}))$.
Processes $\ppolsect$ and $\pcnot$ are defined in the previous model. The process $\ppsm$ is defined as $
\pname{PSM}(\vec{E}) =  \pname{PDet_1}(\vec{F})\parallel\pname{PDet_2}(\vec{G}).
$
We prove that $\pmodelt$ is equivalent to $\pspect$:
\[
\begin{array}{rcl}
\pname{OPCNOT}(\vec{C}) 
=  \inp{c}{s_{0}}\sep\inp{d}{s_{1}}\sep\inp{e}{s_{2}}\sep\inp{f}{s_{3}}\sep\action{s_{2},s_{3}\trans\qgate{H}}\sep\\\action{(s_{0},s_{1}),(s_{2},s_{3})\trans\qgate{CZ}}\sep\action{s_{2},s_{3}\trans\qgate{H}}\sep\outp{h}{s_{0}}\sep\outp{i}{s_{1}}\sep\outp{j}{s_{2}}\sep\outp{k}{s_{3}}\sep\nil\\
\pname{Output}(\vec{D}) 
= \inp{h}{s_{0}}\sep\inp{i}{s_{1}}\sep\inp{j}{s_{2}}\sep\inp{k}{s_{3}}\sep\outp{out_{1}}{\measure s_{1}}\sep\outp{out_{2}}{\measure s_{3}}\sep\nil\\
\pname{Specification_2}(\vec{A})
=  (\new \vec{E})(\pname{PolSe_{CT}}(\vec{B})\parallel\pname{OPCNOT}(\vec{C})\parallel\pname{Output}(\vec{D}))
\end{array}
\]
The analysis of $\pmodelt$ and the proof of its correctness are provided in the Appendix.

\section{Conclusion and Future Work}
\label{sec-conclusion}
\label{sec:conclusion}
The main contribution of this paper is the extension of theory of equivalence of CQP to verify linear optical quantum computing. This is the first work in using quantum process calculus to verify a physical realisation of quantum computing. We have defined the linear optical elements in CQP, and have described and analysed two models of the linear optical experimental system that demonstrates a CNOT gate. Using our second model, we have also described and verified post-selection in CQP.

These two models use different measurement semantics in order to work at different levels of abstraction. This shows that the process calculus is flexible enough to support a range of descriptions, from detailed hardware implementations up to more abstract specifications. The importance of process calculus is that it provides a systematic methodology for verification of quantum systems. The essential property that the equivalence is a congruence guarantees that equivalent processes remain equivalent in any context, and supports equational reasoning. The fact that CQP can also express classical behaviour means that we have a uniform framework in which to analyze classical and quantum computation and communication. 

Shor's algorithm operating on four qubits using the basic linear optical elements has been demonstrated \cite{Brien2009a}. In this paper, we present the modelling of these elements with a future aim to formally analyse quantum algorithms in CQP using LOQC. 
This provides a platform to learn about quantum complexity in LOQC using CQP and also to verify it.
The long-term goal is to develop software for automated analysis of CQP models,  following the established work in classical process calculus and recent work on automated equivalence checking of concurrent quantum programs \cite{Ebrahim2014}.


\bibliographystyle{eptcs}

\bibliography{report_main}

\begin{thebibliography}{10}
\providecommand{\bibitemdeclare}[2]{}
\providecommand{\surnamestart}{}
\providecommand{\surnameend}{}
\providecommand{\urlprefix}{Available at }
\providecommand{\url}[1]{\texttt{#1}}
\providecommand{\href}[2]{\texttt{#2}}
\providecommand{\urlalt}[2]{\href{#1}{#2}}
\providecommand{\doi}[1]{doi:\urlalt{http://dx.doi.org/#1}{#1}}
\providecommand{\bibinfo}[2]{#2}

\bibitemdeclare{inproceedings}{Ebrahim2014}
\bibitem{Ebrahim2014}
\bibinfo{author}{E.~\surnamestart Ardeshir-Larijani\surnameend},
  \bibinfo{author}{S.~J. \surnamestart Gay\surnameend} \&
  \bibinfo{author}{R.~\surnamestart Nagarajan\surnameend}
  (\bibinfo{year}{2014}): \emph{\bibinfo{title}{Verification of Concurrent
  Quantum Protocols by Equivalence Checking}}.
\newblock In: {\sl \bibinfo{booktitle}{Proceedings of the 20th International
  Conference on Tools and Algorithms for the Construction and Analysis of
  Systems (TACAS)}}, \bibinfo{volume}{8413}, \bibinfo{publisher}{LNCS}, pp.
  \bibinfo{pages}{500--514}, \doi{10.1007/978-3-642-54862-8_42}.

\bibitemdeclare{inproceedings}{Davidson2011}
\bibitem{Davidson2011}
\bibinfo{author}{T.~A.~S. \surnamestart Davidson\surnameend},
  \bibinfo{author}{S.~J. \surnamestart Gay\surnameend},
  \bibinfo{author}{R.~\surnamestart Nagarajan\surnameend} \&
  \bibinfo{author}{I.~V. \surnamestart Puthoor\surnameend}
  (\bibinfo{year}{2011}): \emph{\bibinfo{title}{Analysis of a Quantum Error
  Correcting Code using Quantum Process Calculus}}.
\newblock In: {\sl \bibinfo{booktitle}{Proceedings of the International
  Workshop on Quantum Physics and Logic (QPL)}}, \bibinfo{volume}{95},
  \bibinfo{publisher}{EPTCS}, pp. \bibinfo{pages}{67--80},
  \doi{10.4204/EPTCS.95.7}.

\bibitemdeclare{phdthesis}{DavidsonThesis}
\bibitem{DavidsonThesis}
\bibinfo{author}{Timothy A.~S. \surnamestart Davidson\surnameend}
  (\bibinfo{year}{2011}): \emph{\bibinfo{title}{Formal Verification Techniques
  using Quantum Process Calculus}}.
\newblock Ph.D. thesis, \bibinfo{school}{University of Warwick}.

\bibitemdeclare{inproceedings}{Feng2011}
\bibitem{Feng2011}
\bibinfo{author}{Yuan \surnamestart Feng\surnameend}, \bibinfo{author}{Runyao
  \surnamestart Duan\surnameend} \& \bibinfo{author}{Mingsheng \surnamestart
  Ying\surnameend} (\bibinfo{year}{2011}): \emph{\bibinfo{title}{Bisimulation
  for quantum processes}}.
\newblock In: {\sl \bibinfo{booktitle}{Proceedings of the 38th Annual ACM
  Symposium on Principles of Programming Languages}}, \bibinfo{publisher}{ACM},
  pp. \bibinfo{pages}{523--534}, \doi{10.1145/1926385.1926446}.

\bibitemdeclare{inproceedings}{Arnold2013}
\bibitem{Arnold2013}
\bibinfo{author}{S.~\surnamestart Franke-Arnold\surnameend},
  \bibinfo{author}{S.~J. \surnamestart Gay\surnameend} \&
  \bibinfo{author}{I.~V. \surnamestart Puthoor\surnameend}
  (\bibinfo{year}{2013}): \emph{\bibinfo{title}{Quantum process calculus for
  linear optical computing}}.
\newblock In: {\sl \bibinfo{booktitle}{Proceedings of the 5th Conference on
  Reversible Computation (RC)}}, \bibinfo{volume}{7948},
  \bibinfo{publisher}{LNCS}, pp. \bibinfo{pages}{234--246},
  \doi{10.1007/978-3-642-38986-3_19}.

\bibitemdeclare{inproceedings}{Gay2005}
\bibitem{Gay2005}
\bibinfo{author}{Simon~J. \surnamestart Gay\surnameend} \&
  \bibinfo{author}{Rajagopal \surnamestart Nagarajan\surnameend}
  (\bibinfo{year}{2005}): \emph{\bibinfo{title}{Communicating {Q}uantum
  {P}rocesses}}.
\newblock In: {\sl \bibinfo{booktitle}{Proceedings of the 32nd Annual ACM
  Symposium on Principles of Programming Languages}}, \bibinfo{publisher}{ACM},
  pp. \bibinfo{pages}{145--157}, \doi{10.1145/1040305.1040318}.

\bibitemdeclare{article}{Gay2006a}
\bibitem{Gay2006a}
\bibinfo{author}{Simon~J. \surnamestart Gay\surnameend} \&
  \bibinfo{author}{Rajagopal \surnamestart Nagarajan\surnameend}
  (\bibinfo{year}{2006}): \emph{\bibinfo{title}{{Types and Typechecking for
  Communicating Quantum Processes}}}.
\newblock {\sl \bibinfo{journal}{Mathematical Structures in Computer Science}}
  \bibinfo{volume}{16}(\bibinfo{number}{3}), pp. \bibinfo{pages}{375--406},
  \doi{10.1017/S0960129506005263}.

\bibitemdeclare{article}{Glabbeek1996}
\bibitem{Glabbeek1996}
\bibinfo{author}{Rob~J. \surnamestart van Glabbeek\surnameend} \&
  \bibinfo{author}{W.~Peter \surnamestart Weijland\surnameend}
  (\bibinfo{year}{1996}): \emph{\bibinfo{title}{Branching time and abstraction
  in bisimulation semantics}}.
\newblock {\sl \bibinfo{journal}{Journal of the ACM}}
  \bibinfo{volume}{43}(\bibinfo{number}{3}), pp. \bibinfo{pages}{555--600},
  \doi{10.1145/233551.233556}.

\bibitemdeclare{misc}{IDQ2001a}
\bibitem{IDQ2001a}
\bibinfo{author}{\surnamestart IDQ\surnameend}: \emph{\bibinfo{title}{ID
  Quantique}}.
\newblock \urlprefix\url{http://www.idquantique.com/company/presentation.html}.

\bibitemdeclare{article}{Knill2001}
\bibitem{Knill2001}
\bibinfo{author}{E.~\surnamestart Knill\surnameend},
  \bibinfo{author}{R.~\surnamestart Laflamme\surnameend} \&
  \bibinfo{author}{G.~J. \surnamestart Milburn\surnameend}
  (\bibinfo{year}{2001}): \emph{\bibinfo{title}{A scheme for efficient quantum
  computation with linear optics}}.
\newblock {\sl \bibinfo{journal}{Nature}} \bibinfo{volume}{409},
  p.~\bibinfo{pages}{46}, \doi{10.1038/35051009}.

\bibitemdeclare{inproceedings}{Kubota2012}
\bibitem{Kubota2012}
\bibinfo{author}{T.~\surnamestart Kubota\surnameend},
  \bibinfo{author}{Y.~\surnamestart Kakutani\surnameend},
  \bibinfo{author}{G.~\surnamestart Kato\surnameend},
  \bibinfo{author}{Y.~\surnamestart Kawano\surnameend} \&
  \bibinfo{author}{H.~\surnamestart Sakurada\surnameend}
  (\bibinfo{year}{2012}): \emph{\bibinfo{title}{Application of a process
  calculus to security proofs of quantum protocols}}.
\newblock In: {\sl \bibinfo{booktitle}{Proceedings of WORLDCOMP/FCS2012}}.

\bibitemdeclare{book}{Milner1989}
\bibitem{Milner1989}
\bibinfo{author}{Robin \surnamestart Milner\surnameend} (\bibinfo{year}{1989}):
  \emph{\bibinfo{title}{Communication and {C}oncurrency}}.
\newblock \bibinfo{publisher}{Prentice-Hall}.

\bibitemdeclare{book}{Milner1999}
\bibitem{Milner1999}
\bibinfo{author}{Robin \surnamestart Milner\surnameend} (\bibinfo{year}{1999}):
  \emph{\bibinfo{title}{Communicating and Mobile Systems: the Pi-Calculus}}.
\newblock \bibinfo{publisher}{Cambridge University Press}.

\bibitemdeclare{article}{Myers2005}
\bibitem{Myers2005}
\bibinfo{author}{C.~R. \surnamestart Myers\surnameend} \&
  \bibinfo{author}{R.~\surnamestart Laflamme\surnameend}
  (\bibinfo{year}{2005}): \emph{\bibinfo{title}{Linear Optics Quantum
  Computation: an Overview}}.
\newblock {\sl \bibinfo{journal}{arXiv: quant-ph/0512104v1}}.

\bibitemdeclare{book}{Nielsen2000}
\bibitem{Nielsen2000}
\bibinfo{author}{M.~A. \surnamestart Nielsen\surnameend} \&
  \bibinfo{author}{I.~L. \surnamestart Chuang\surnameend}
  (\bibinfo{year}{2000}): \emph{\bibinfo{title}{Quantum Computation and Quantum
  Information}}.
\newblock \bibinfo{publisher}{Cambridge University Press}.

\bibitemdeclare{article}{Brien2003}
\bibitem{Brien2003}
\bibinfo{author}{J.~L. \surnamestart O'Brien\surnameend},
  \bibinfo{author}{G.~J. \surnamestart Pryde\surnameend},
  \bibinfo{author}{A.~G. \surnamestart White\surnameend},
  \bibinfo{author}{T.~C. \surnamestart Ralph\surnameend} \&
  \bibinfo{author}{D.~\surnamestart Branning\surnameend}
  (\bibinfo{year}{2003}): \emph{\bibinfo{title}{Demonstration of an all-optical
  quantum controlled-NOT gate}}.
\newblock {\sl \bibinfo{journal}{Nature}} \bibinfo{volume}{426}, p.
  \bibinfo{pages}{264}, \doi{10.1038/nature02054}.

\bibitemdeclare{article}{Brien2009a}
\bibitem{Brien2009a}
\bibinfo{author}{A.~\surnamestart Politi\surnameend}, \bibinfo{author}{J.~C.~F.
  \surnamestart Matthews\surnameend} \& \bibinfo{author}{J.~L. \surnamestart
  O'Brien\surnameend} (\bibinfo{year}{2009}): \emph{\bibinfo{title}{Shor's
  Quantum Factoring Algorithm on a Photonic Chip}}.
\newblock {\sl \bibinfo{journal}{Science}} \bibinfo{volume}{325}, p.
  \bibinfo{pages}{1221}, \doi{10.1126/science.1173731}.

\bibitemdeclare{article}{Ralph2002}
\bibitem{Ralph2002}
\bibinfo{author}{T.~C. \surnamestart Ralph\surnameend}, \bibinfo{author}{N.~K.
  \surnamestart Lanford\surnameend}, \bibinfo{author}{T.~B. \surnamestart
  Bell\surnameend} \& \bibinfo{author}{A.~G. \surnamestart White\surnameend}
  (\bibinfo{year}{2002}): \emph{\bibinfo{title}{Linear optical controlled-NOT
  gate in the coincidence basis}}.
\newblock {\sl \bibinfo{journal}{Physical Review Letters A}}
  \bibinfo{volume}{65}, pp. \bibinfo{pages}{062324--1},
  \doi{10.1103/PhysRevA.65.062324}.

\bibitemdeclare{article}{Trcka2008}
\bibitem{Trcka2008}
\bibinfo{author}{Nikola \surnamestart Tr\v{c}ka\surnameend} \&
  \bibinfo{author}{Sonja \surnamestart Georgievska\surnameend}
  (\bibinfo{year}{2008}): \emph{\bibinfo{title}{Branching Bisimulation
  Congruence for Probabilistic Systems}}.
\newblock {\sl \bibinfo{journal}{Electronic Notes in Theoretical Computer
  Science}} \bibinfo{volume}{220}(\bibinfo{number}{3}), pp. \bibinfo{pages}{129
  -- 143}, \doi{10.1016/j.entcs.2008.11.023}.

\bibitemdeclare{article}{Wright1994}
\bibitem{Wright1994}
\bibinfo{author}{Andrew~K. \surnamestart Wright\surnameend} \&
  \bibinfo{author}{Matthias \surnamestart Felleisen\surnameend}
  (\bibinfo{year}{1994}): \emph{\bibinfo{title}{A syntactic approach to type
  soundness}}.
\newblock {\sl \bibinfo{journal}{Information and Computation}}
  \bibinfo{volume}{115}(\bibinfo{number}{1}), pp. \bibinfo{pages}{38--94},
  \doi{10.1006/inco.1994.1093}.

\end{thebibliography}

\newpage

\section{Appendix}
\label{sec:app}
\label{sec-app}
\subsection{Definitions and Lemmas for Equivalence}
 \begin{Definition}[Context]
   \label{def:cong_context}
   A \emph{context} $C$ is a process with a non-degenerate occurrence of $\nil$ replaced by a hole, $[\cdot]$. Formally,
   \[ C \bnf [ ] \mid (C \parcomp P) \alt \alpha.C + P \alt \alpha.C  \alt (\new x\chant{T})C \] 
   for $\alpha \in \{\inp{e}{\tilde{x}:\tilde{T}}, \outp{e}{\tilde{e}}, \{e\}, (\qbit x), (\ns r) \}$.
 \end{Definition}
 \begin{Definition}[Congruence]
   An equivalence relation $\mathcal{R}$ on processes is a
   \emph{congruence} if $(C[P], C[Q]) \in \mathcal{R}$ whenever $(P, Q)
   \in \mathcal{R}$ and $C$ is a context. 
 \end{Definition}
 \begin{Definition}[Non-input, non-qubit or non-number state context]
   A \emph{non-input, non-qubit or non-number state context} is a context in which the hole
   does not appear under an input or qubit  and number state declaration.
 \end{Definition}
 \begin{Definition}[Non-input, non-qubit or non-number state congruence]
   An equivalence relation $\mathcal{R}$ on processes is a
   \emph{non-input, non-qubit or non-number state congruence} if $(C[P], C[Q]) \in
   \mathcal{R}$ whenever $(P, Q) \in \mathcal{R}$ and $C$ is a
   non-input, non-qubit or non-number state context. 
 \end{Definition}
 The first lemma provides a general form for representing mixed configurations related by internal transitions. Due to space constraints the proofs of all lemmas and theorems are not provided in this paper.
 \begin{Lemma}[General form of internal transitions]
  If $t = \Dist{\substack{ab \in I_{kl}\\kl \in J}}{g_{abkl}}(\vec{x}:\vec{T};\sigma_{abkl};\vec{q},\vec{s};\ltrm{\vec{y}\vec{z}}{P}{\vec{w}_{abkl}})$ and $t\weaktrans{}t'$ then there exist sets $I'_{kl}$ such that $t' = \Dist{\substack{ab \in I'_{kl}\\kl \in J}}{g'_{abkl}}(\vec{x}:\vec{T};\sigma'_{abkl};\vec{q}',\vec{s}';\ltrm{\vec{y}'\vec{z}'}{P'}{\vec{w}'_{abkl}})$.
  \end{Lemma}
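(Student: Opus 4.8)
The plan is to argue by induction on the length $n$ of the sequence of $\tau$-transitions that realises $t \weaktrans{} t'$. For $n = 0$ we have $t' = t$, and the statement holds trivially with $I'_{kl} = I_{kl}$, $P' = P$, and all remaining data unchanged. For the inductive step, decompose $t \weaktrans{} t'$ as $t \transition{\tau} t_{1} \weaktrans{} t'$ with the tail of length $n-1$. It then suffices to prove the \emph{one-step} claim: a single $\tau$-transition out of a mixed configuration of the shape in the hypothesis yields a configuration of the same shape, with the \emph{same} outer index set $J$ and a single process term common to all components. Granting this, $t_{1}$ has the required form and the induction hypothesis applied to $t_{1} \weaktrans{} t'$ concludes the argument.

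The one-step claim is proved by case analysis on the derivation of $t \transition{\tau} t_{1}$. The key observation is that the probabilistic-branching rules $\Loutq$ and $\Loutns$ (and $\Lprob$) never conclude a $\tau$-labelled transition, so no information visible to the environment is ever produced by such a step; the rules that can conclude a $\tau$-transition of a mixed configuration are $\Lcom$, $\Lact$, $\Lexpr$, $\Lqbit$, $\Lns$, $\Lps$, together with the structural rules $\Lpar$, $\Lsum$, $\Lnew$. For $\Lcom$, $\Lact$, $\Lqbit$, $\Lns$ and $\Lps$ the two-level index structure of the mixture is untouched: only the common process term (uniformly, through the $\lambda$-abstraction), the name list, the quantum state $\sigma_{abkl}$ and the ownership lists change, so the resulting configuration is immediately of the required form. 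The only rule that can \emph{enlarge} the mixture is $\Lexpr$, which (via $\Rcontext$ and a measurement occurring inside an action) turns $\Dist{i}{h_{i}}(\cdots;\ltrm{\vec{y}}{F[e]}{\vec{v}_{i}})$ into $\Dist{\substack{i\\ j \in J_{i}}}{h_{i}g_{ij}}(\cdots;\ltrm{\vec{y}\vec{z}}{F[e']}{\vec{v}_{i},\vec{w}_{ij}})$; here the fresh index $j$ ranges over \emph{internal} measurement outcomes, so it is absorbed into a refined \emph{inner} index set by the relabelling $ab \mapsto (ab,j)$, with new weights obtained by multiplying through, while $J$ is unchanged and $e'$ (hence the common term $F[e']$) is independent of the component by the very construction of $\Lexpr$.

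For the structural rules $\Lpar$, $\Lsum$, $\Lnew$ one proceeds by a routine sub-induction on the derivation tree: each of these simply wraps a $\tau$-transition of a mixed configuration of the stated shape inside $\cdot \parcomp Q$, $\cdot + Q$, or $(\new c)\cdot$, which preserves both the index structure and the uniformity of the process term. Combining all the cases gives $t_{1}$ in the required form, with the outer index set still $J$, completing the one-step claim and hence the outer induction.

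The main obstacle I anticipate is the index bookkeeping in the $\Lexpr$ case: one must check that after repeatedly folding the internally-generated indices $j$ into the inner sets the families $I'_{kl}$ remain well-defined, that the accumulated products of weights sum correctly within each outer class $kl$, and --- the real content of the lemma --- that nothing a $\tau$-transition can do ever migrates information from the inner level to the environment-visible outer level $J$. This last point is exactly what the design of the semantics guarantees by separating $\transition{\tau}$ from $\Loutq$, $\Loutns$ and $\Lprob$, so the verification, while detailed, should go through without surprises.
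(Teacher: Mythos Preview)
The paper explicitly omits the proof of this lemma (``Due to space constraints the proofs of all lemmas and theorems are not provided in this paper''), so there is no proof in the paper to compare against directly. That said, your approach --- induction on the length of the $\tau$-sequence, with the one-step claim established by case analysis on the last rule used --- is the standard and natural one for a structural result of this kind about an operational semantics, and is almost certainly what the authors intend.

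Your case analysis is well organised and the crucial point is correctly identified: the only rule that refines the mixture is $\Lexpr$ (via $\Rcontext$ applied to a measurement), and the new indices it introduces are absorbed into the \emph{inner} index sets $I_{kl}$ while $J$ is untouched; the rules that produce genuine probabilistic branching ($\Loutq$, $\Loutns$, $\Lprob$) are never $\tau$-labelled. One small refinement: the rules $\Lqbit$, $\Lns$ and $\Lps$ alter the list of declared names (adding a fresh qubit or number state, or replacing a qubit by two number states), so strictly speaking the $\vec{x}:\vec{T}$ in the conclusion of the lemma should also be allowed to change; this is a harmless imprecision in the lemma statement rather than a flaw in your argument, but it is worth flagging when you write up the $\Lqbit$/$\Lns$/$\Lps$ cases.
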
  
  The following $3$ lemmas prove that the state of qubits and number states that are not owned by a particular process is unaffected by any transitions of that process.
 \begin{Lemma}[External state independence for $\transitionv{}$]
    \label{lem:lemma2}
  If $\Gamma$; $\vec{s} \vdash e : T$ and $t\transitionv{}t'$ where $t = (\vec{s}:\vec{\NS}, \vec{q}:\vec{\Qbit},\vec{r}:\vec{\Qbit};\qstore{\vec{s}\vec{q}\vec{r}}{\ket{\psi}};\vec{q},\vec{s};e)$ then $\rho^{\vec{q}\vec{r}}(t) = \rho^{\vec{q}\vec{r}}(t')$
  \end{Lemma}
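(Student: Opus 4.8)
The plan is to prove the statement by case analysis on the rule used to derive $t\transitionv{}t'$. The typing hypothesis $\Gamma;\vec{s}\vdash e:T$ is what makes this tractable: the only quantum names visible to $e$ are the number states $\vec{s}$, so $e$ is a redex whose quantum effect (if any) is confined to $\vec{s}$, and the only value-transition rules that can fire are $\Rplus$, $\Rifthen$, $\Rnstrans$, $\Rnsmeasuret$ and $\Rpsmeasure$. In particular $\Rmeasureq$ is excluded, since it needs qubit names among the redex arguments, contradicting $\Gamma;\vec{s}\vdash e:T$. For $\Rplus$ and $\Rifthen$ the quantum state is unchanged and $t'$ is again a pure configuration, so $\rho^{\vec{q}\vec{r}}(t')=\rho^{\vec{q}\vec{r}}(t)$ is immediate from item~2 of Definition~\ref{def:density_matrix_configs}.

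The two genuinely quantum cases both rest on the fact that the partial trace is cyclic in the factor being traced out. For $\Rnstrans$ the redex is $s_{0},\dots,s_{2r-1}\trans U$ and the new state is $(I\otimes V)\ket{\psi}$, where $V$ acts as $U$ on the $\vec{s}$-block and as the identity on $\vec{q}\vec{r}$; since $\rho^{\vec{q}\vec{r}}=\trace_{\vec{s}}(-)$, cyclicity together with $V^{\dagger}V=I$ yields $\trace_{\vec{s}}\!\big((I\otimes V)\ketbra{\psi}{\psi}(I\otimes V^{\dagger})\big)=\trace_{\vec{s}}(\ketbra{\psi}{\psi})$, i.e.\ $\rho^{\vec{q}\vec{r}}(t')=\rho^{\vec{q}\vec{r}}(t)$. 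For $\Rnsmeasuret$, $t'$ is a mixed configuration $\oplus_{k,l\ge 0}g_{kl}\,t'_{kl}$, so by items~4 and~6 of Definition~\ref{def:density_matrix_configs} we must compute $\sum_{k,l}g_{kl}\,\rho^{\vec{q}\vec{r}}(\sigma'_{kl})$; writing $P_{kl}$ for the projector of the two measured modes onto $\ket{kl}$, and noting that the branch renormalisation by $\sqrt{g_{kl}}$ cancels the weight $g_{kl}$, this equals $\sum_{k,l}\trace_{\vec{s}}\!\big((I\otimes P_{kl})\ketbra{\psi}{\psi}(I\otimes P_{kl})\big)$. Using $P_{kl}^{2}=P_{kl}$, cyclicity of $\trace_{\vec{s}}$ and completeness $\sum_{k,l\ge 0}P_{kl}=I$, this collapses to $\trace_{\vec{s}}(\ketbra{\psi}{\psi})=\rho^{\vec{q}\vec{r}}(t)$; in words, a non-selective measurement of number states leaves the reduced state of all other names untouched.

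The step I expect to be the main obstacle is the post-selective case $\Rpsmeasure$. It has the same shape as $\Rnsmeasuret$, and the branch renormalisations again cancel the weights $h_{kl}$, so $\rho^{\vec{q}\vec{r}}(t')$ comes out as $\sum_{k\ne l,\,k,l\in\{0,1\}}\trace_{\vec{s}}\!\big((I\otimes P_{kl})\ketbra{\psi}{\psi}(I\otimes P_{kl})\big)$ --- but the sum now ranges only over the two ``in-code'' outcomes, so $P_{01}+P_{10}$ is a strict sub-identity and the completeness step used for $\Rnsmeasuret$ is unavailable. The crux is therefore to show that the discarded contribution $\trace_{\vec{s}}\!\big((I\otimes Q)\ketbra{\psi}{\psi}(I\otimes Q)\big)$, with $Q=I-P_{01}-P_{10}$, vanishes; being the partial trace of a positive operator it vanishes exactly when $(I\otimes Q)\ket{\psi}=0$, i.e.\ when the two measured modes of $\ket{\psi}$ have support entirely in the single-photon (dual-rail) subspace --- equivalently, when on the configurations to which the lemma is applied $\psmeasure{}$ acts as a non-selective measurement. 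I would discharge this either from an invariant of well-typed executions or as an explicit side condition; pinning it down is the delicate part. With that in hand the $\Rpsmeasure$ case closes exactly as $\Rnsmeasuret$ does, which completes the case analysis and hence the lemma.
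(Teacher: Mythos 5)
A preliminary remark: the paper states that, owing to space constraints, the proofs of all lemmas and theorems are omitted, so there is no proof of Lemma~\ref{lem:lemma2} in the paper to compare against; your attempt has to be judged on its own terms. Your overall route --- case analysis on the $\transitionv$ rules, with the typing hypothesis $\Gamma;\vec{s}\vdash e:T$ used to rule out \Rmeasureq{} --- is the natural one, and the cases \Rplus, \Rifthen, \Rnstrans{} and \Rnsmeasuret{} are handled correctly: a unitary confined to the $\vec{s}$ factor, or a non-selective number-state measurement whose outcomes are retained with their Born weights (Definition~\ref{def:density_matrix_configs}, items 5--6), leaves $\trace_{\vec{s}}$ of the global state unchanged, exactly by the projector/completeness computation you give. (Your appeal to ``cyclicity'' of the partial trace is only valid for operators acting on the traced-out factor, but that is precisely how you use it, so this is a phrasing issue, not a gap.)

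The genuine gap is the one you flag but do not close: the \Rpsmeasure{} case, and as the lemma is stated it cannot be closed, because the claim is false there for general $\ket{\psi}$. Concretely, take one environment qubit and two number states with $\ket{\psi}=\tfrac{1}{\sqrt{2}}\bigl(\ket{0}\ket{01}+\ket{1}\ket{20}\bigr)$, so $\rho^{\vec{q}\vec{r}}(t)=\tfrac12\bigl(\ketbra{0}{0}+\ketbra{1}{1}\bigr)$; the rule \Rpsmeasure{} (cf.\ Example~\ref{ex:ps_cfg_measurement}, where the weights are $\ms{\beta_{ij}}=\ms{\alpha_{ij}}/\sum_{ij\in\{0,1\}}\ms{\alpha_{ij}}$) discards the $\ket{20}$ branch and renormalises, giving $\rho^{\vec{q}\vec{r}}(t')=\ketbra{0}{0}\neq\rho^{\vec{q}\vec{r}}(t)$. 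So the condition you identify --- $(I\otimes Q)\ket{\psi}=0$ with $Q=I-P_{01}-P_{10}$, i.e.\ the measured modes lie in the dual-rail subspace --- is not a technicality to be discharged later; it is exactly the missing hypothesis, and nothing available in the paper supplies it: the type system tracks ownership of names, not photon-number support, and the lemma (as it is used for the congruence machinery) is quantified over arbitrary quantum states $\sigma$. Hence the proof attempt is incomplete at precisely its crux; your diagnosis of what is needed is correct, but whether the intended proof adds such a side condition, excludes $\psmeasure$ from the scope of this lemma, or accounts for the weights differently cannot be determined from the paper, since no proof is given there.
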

    \begin{Lemma}[External state independence for $\transitione{}$]
    \label{lem:lemma3}
 If $\Gamma$; $\vec{s} \vdash e : T$ and $t\transitione{}t'$ where $t = \Dist{kl \in I} g_{kl}(\vec{s}:\vec{\NS}, \vec{q}:\vec{\Qbit},\vec{r}:\vec{\Qbit};\qstore{\vec{s}\vec{q}\vec{r}}{\ket{\psi_{kl}}};\vec{q},\vec{s};\ltrm{\vec{y}}{e}{\vec{w}_{kl}})$ then $\rho^{\vec{q}\vec{r}}(t) = \rho^{\vec{q}\vec{r}}(t')$
  \end{Lemma}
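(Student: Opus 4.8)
The plan is to reduce the statement to Lemma~\ref{lem:lemma2}, the corresponding property for $\transitionv$, exploiting the fact that $\transitione$ is generated by the single rule $\Rcontext$. Throughout write $\vec{x} = \vec{s},\vec{q},\vec{r}$. First I would invert the derivation of $t \transitione t'$: since $\Rcontext$ is the only rule whose conclusion is an $\transitione$ transition, there are an evaluation context $E$ and a redex $e_0$ with $e = E[e_0]$, an expression $e_0'$, index sets $J_{kl}$, and a family of $\transitionv$ transitions
\[
(\vec{x}:\vec{T};\qstore{\vec{x}}{\ket{\psi_{kl}}};\vec{q},\vec{s};e_0\{\vec{w}_{kl}/\vec{y}\}) \transitionv \Dist{j \in J_{kl}}{g_{klj}}(\vec{x}:\vec{T};\qstore{\vec{x}}{\ket{\psi_{klj}}};\vec{q},\vec{s};\ltrm{\vec{z}}{e_0'\{\vec{w}_{kl}/\vec{y}\}}{\vec{v}_{klj}}) \qquad (kl \in I)
\]
with $t' = \Dist{\substack{kl \in I\\ j \in J_{kl}}}{g_{kl}g_{klj}}(\vec{x}:\vec{T};\qstore{\vec{x}}{\ket{\psi_{klj}}};\vec{q},\vec{s};\ltrm{\vec{y}\vec{z}}{E[e_0']}{\vec{w}_{kl},\vec{v}_{klj}})$.

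Next I would transfer the typing hypothesis to the redex. By the standard decomposition lemma for evaluation contexts in the CQP type system, $\Gamma;\vec{s}\vdash E[e_0]:T$ yields $\Gamma'';\vec{s}\vdash e_0:T_0$ for some extended classical environment $\Gamma''$ and type $T_0$ (using weakening to keep the number-state context $\vec{s}$); in particular $e_0$ contains no qubit names and only number-state names among $\vec{s}$. The abstraction values $\vec{w}_{kl}$ substituted for $\vec{y}$ are classical (recorded measurement results, not qubit or number-state names), so $e_0\{\vec{w}_{kl}/\vec{y}\}$ is still typable using only $\vec{s}$. Hence Lemma~\ref{lem:lemma2} applies to each transition above and, recalling from Definition~\ref{def:density_matrix_configs} that $\rho^{\vec{q}\vec{r}}$ of a configuration depends only on its quantum-state component and that $\rho^{\vec{q}\vec{r}}$ of a mixed configuration is the $g$-weighted sum of the $\rho^{\vec{q}\vec{r}}$ of its components, it gives
\[
\rho^{\vec{q}\vec{r}}(\qstore{\vec{x}}{\ket{\psi_{kl}}}) = \sum_{j \in J_{kl}} g_{klj}\,\rho^{\vec{q}\vec{r}}(\qstore{\vec{x}}{\ket{\psi_{klj}}}) \qquad (kl \in I).
\]

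Finally I would recombine: by Definition~\ref{def:density_matrix_configs},
\[
\rho^{\vec{q}\vec{r}}(t') = \sum_{kl \in I}\sum_{j \in J_{kl}} g_{kl}g_{klj}\,\rho^{\vec{q}\vec{r}}(\qstore{\vec{x}}{\ket{\psi_{klj}}}) = \sum_{kl \in I} g_{kl}\,\rho^{\vec{q}\vec{r}}(\qstore{\vec{x}}{\ket{\psi_{kl}}}) = \rho^{\vec{q}\vec{r}}(t),
\]
which is the desired equality. The only step that is not pure bookkeeping is the middle paragraph: extracting from $E[e_0]$ a typing of the redex $e_0$ that still mentions only the number states $\vec{s}$, and checking that substituting the classical abstraction values preserves this so that Lemma~\ref{lem:lemma2} genuinely applies. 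Inverting $\Rcontext$ once and summing the per-component identities linearly over the mixed configuration is routine; an entirely analogous argument, appealing to the process-level external-state-independence lemma in place of Lemma~\ref{lem:lemma2}, would cover the $\tau$-transitions produced by $\Lexpr$ if required elsewhere.
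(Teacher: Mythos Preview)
The paper omits the proof of this lemma entirely (``Due to space constraints the proofs of all lemmas and theorems are not provided in this paper''), so there is nothing to compare against directly. Your argument is correct and is the expected one: invert the single rule $\Rcontext$ defining $\transitione$, apply Lemma~\ref{lem:lemma2} componentwise, and recombine using the linearity of $\rho^{\vec{q}\vec{r}}$ in Definition~\ref{def:density_matrix_configs}. The only point worth tightening is the claim that the abstraction values $\vec{w}_{kl}$ are classical: this is indeed the case, since inspection of the rules that introduce $\lambda$-binders ($\Rnsmeasuret$, $\Rpsmeasure$, $\Rmeasureq$) shows the recorded values are always integer measurement outcomes, never qubit or number-state names; you might state this explicitly rather than assert it.
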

  \begin{Lemma}[External state independence for $\transition{\tau}$]
    \label{lem:lemma4}
 If $\Gamma$; $\vec{s} \vdash P$ and $t\transition{\tau}t'$ where $t = \Dist{kl \in I} g_{kl}(\vec{s}:\vec{\NS}, \vec{q}:\vec{\Qbit},\vec{r}:\vec{\Qbit};\qstore{\vec{s}\vec{q}\vec{r}}{\ket{\psi_{kl}}};\vec{q},\vec{s};\ltrm{\vec{y}}{P}{\vec{w}_{kl}})$ then $\rho^{\vec{q}\vec{r}}(t) = \rho^{\vec{q}\vec{r}}(t')$
  \end{Lemma}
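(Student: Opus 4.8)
The plan is to prove the statement by induction on the derivation of the transition $t \transition{\tau} t'$, using the rules for mixed process configurations in Figure~\ref{fig:trans_mixed} whose conclusion carries a $\tau$ label: the base rules $\Lcom$, $\Lqbit$, $\Lns$, $\Lact$, $\Lps$ and $\Lexpr$, together with the $\tau$-instances of the structural rules $\Lpar$, $\Lsum$ and $\Lnew$. The running invariant is that, since $\Gamma;\vec{s}\vdash P$, the process $P$ owns only the names in $\vec{s}$, so every name it can read, write, measure, transform or convert lies in $\vec{s}$, while the names $\vec{q}$ and $\vec{r}$ are owned by the ambient context or by the environment and are left untouched. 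I would phrase the induction hypothesis a little more generally than the statement, allowing the ``external'' list onto which we reduce to be an arbitrary list of names disjoint from the process's resources; this is what lets the structural cases re-apply the hypothesis after a parallel partner, a discarded summand, or a freshly declared name has been folded into the external part. It is also worth noting at the outset that $\rho^{\vec y}$ of a (mixed) configuration depends only on the \emph{set} $\vec y$ and that iterated partial traces commute, so re-partitioning the global name list is always harmless; I would record this once and reuse it.

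For the base cases: in $\Lcom$ and $\Lact$ the quantum state $\sigma_i$ is literally unchanged (communication only moves a classical value or an ownership token, and $\Lact$ merely discards an already-evaluated action), so $\rho^{\vec{q}\vec{r}}$ is trivially preserved; in $\Lqbit$ and $\Lns$ a fresh name is appended in tensor product with everything else, and since $\vec{q}\vec{r}$ is a sub-list of the old names, $\rho^{\vec{q}\vec{r}}$ is obtained by tracing out that extra product factor along with the rest and is unchanged; in $\Lps$ the $\qgate{PS}$ operator acts as the local isometry $\ket{0}_{q_c}\mapsto\ket{10}_{s_a s_b}$, $\ket{1}_{q_c}\mapsto\ket{01}_{s_a s_b}$ on the tensor slot of the converted qubit $q_c\in\vec{s}$, which is disjoint from $\vec q\vec r$, so the partial trace onto $\vec q\vec r$ commutes with it. The one base case that does real work is $\Lexpr$: here $t\transition{\tau}t'$ arises from a reduction $\Dist{kl\in I}{g_{kl}}(\dots;\ltrm{\vec y}{e}{\vec w_{kl}})\transitione\Dist{}{}(\dots;\ltrm{\vec y\vec z}{e'}{})$ of the expression $e$ sitting in a process evaluation context $F[\cdot]$ (and $\transitione$ may itself branch into a mixture, which Lemma~\ref{lem:lemma3} already accommodates). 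By inversion on the typing of $F[e]$ the sub-expression $e$ is typed $\Gamma';\vec{s}'\vdash e:T$ with $\vec{s}'\subseteq\vec{s}$, so Lemma~\ref{lem:lemma3} applies with its external list taken to be $\vec q\,\vec r\,(\vec s\setminus\vec s')$, giving $\rho^{\vec q\vec r(\vec s\setminus\vec s')}(t)=\rho^{\vec q\vec r(\vec s\setminus\vec s')}(t')$; tracing out $\vec s\setminus\vec s'$ yields the claim. For the structural cases $\Lpar$, $\Lsum$, $\Lnew$ with $\alpha=\tau$, the premise is a $\tau$-transition of a subprocess $P_0$ of $P$; inversion on the typing rule for the relevant constructor gives $\Gamma_0;\vec{s}_0\vdash P_0$ with $\vec s_0\subseteq\vec s$, and I would apply the induction hypothesis to this sub-transition with external list $\vec q\,\vec r\,(\vec s\setminus\vec s_0)$, then trace out $\vec s\setminus\vec s_0$ to conclude.

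I expect the main obstacle to be the $\Lexpr$ case, and within it the exact form of the typing-inversion statement feeding Lemma~\ref{lem:lemma3}: one has to check, uniformly over the clauses of the $F$ grammar (input subject, output subject, output arguments, action body), that the hole always sits in expression position and that the extracted sub-expression is typable with resources contained in those of the enclosing process. The remaining subtlety, which is shared with Lemmas~\ref{lem:lemma2}--\ref{lem:lemma4}, is the partial-trace bookkeeping required to re-partition the global name list each time the induction hypothesis or the earlier lemma is invoked; this is routine given the two facts noted above (dependence of $\rho^{\vec y}$ only on the set $\vec y$, and commutation of iterated partial traces), but it is the step most prone to index errors, so I would isolate it as a single auxiliary observation and appeal to it in every case rather than redo it inline.
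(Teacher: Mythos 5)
The paper does not actually print a proof of this lemma (the appendix proofs are explicitly omitted for space), but your proposal follows precisely the intended argument: induction on the derivation of the $\tau$-transition over the rules of Figure~\ref{fig:trans_mixed}, with all genuine state change funnelled through \Lexpr{} and hence Lemma~\ref{lem:lemma3}, and only partial-trace/ownership bookkeeping in the cases \Lcom, \Lact, \Lqbit, \Lns, \Lps{} and the structural rules. I see no gap; the only slip is describing the converted qubit $q_c$ in the \Lps{} case as an element of $\vec{s}$ --- it is a process-owned qubit rather than an $\NS$-typed name, so that case needs the slightly generalised ownership reading you already adopt --- which does not affect the argument.
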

The next lemma proves that the action of a context on the quantum state is independent of the quantum subsystem owned by a process.
  \begin{Lemma}[Independence of context transitions]
  \label{lem:lemma4b}
  Assume that $\Gamma$; $\vec{s}_{R} \vdash R$. Let $t$ and $u$ be configurations where
\[
t = \Dist{kl \in I} g_{kl}(\vec{x}:\vec{T};\qstore{\vec{q}_{P}\vec{q}_{R}\vec{q}_{E}\vec{s}_{P}\vec{s}_{R}\vec{s}_{E}}{\ket{\psi_{kl}}};\vec{q}_{P},\vec{q}_{R},\vec{s}_{P},\vec{s}_{R};\ltrm{\vec{y}}{R}{\vec{w}_{R}})
\]
\[
u = \Dist{mn \in J} h_{mn}(\vec{x}:\vec{T};\qstore{\vec{q}_{Q}\vec{q}_{R}\vec{q}_{E}\vec{s}_{Q}\vec{s}_{R}\vec{s}_{E}}{\ket{\phi_{mn}}};\vec{q}_{Q},\vec{q}_{R},\vec{s}_{Q},\vec{s}_{R};\ltrm{\vec{y}}{R}{\vec{w}_{R}})
\]
If $\rho^{\vec{q}_{P}\vec{q}_{E}\vec{s}_{P}\vec{s}_{E}}(t) = \rho^{\vec{q}_{Q}\vec{q}_{E}\vec{s}_{Q}\vec{s}_{E}}(u)$ and $t \transition{\tau}t'$ where $t = \Dist{\substack{kl \in I'_{ab}\\ab \in K}} g'_{klab}(\vec{x}:\vec{T};\qstore{\vec{q}_{P}\vec{q}'_{R}\vec{q}_{E}\vec{s}_{P}\vec{s}'_{R}\vec{s}_{E}}{\ket{\psi_{klab}}};\omega_{P},\omega'_{R};\ltrm{\vec{y}'}{R'}{\vec{w}_{R_{ab}}})$ then there exists $u = \Dist{\substack{mn \in J'_{ab}\\ab \in K}} h'_{mnab}(\vec{x}:\vec{T};\qstore{\vec{q}_{Q}\vec{q}'_{R}\vec{q}_{E}\vec{s}_{Q}\vec{s}'_{R}\vec{s}_{E}}{\ket{\phi_{mnab}}};\omega_{Q},\omega'_{R};\ltrm{\vec{y}'}{R'}{\vec{w}_{R_{ab}}})$ such that $u \transition{\tau} u'$ and $\rho^{\vec{q}_{P}\vec{q}_{E}\vec{s}_{P}\vec{s}_{E}}(t') = \rho^{\vec{q}_{Q}\vec{q}_{E}\vec{s}_{Q}\vec{s}_{E}}(u')$
\end{Lemma}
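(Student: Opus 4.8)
The plan is to induct on the derivation of the internal transition $t \transition{\tau} t'$, using the fact that the process component $R$ occurs \emph{unchanged} in both $t$ and $u$ and, by the typing hypothesis $\Gamma;\vec{s}_{R} \vdash R$, may touch only the quantum names it owns, i.e.\ those among $\vec{q}_{R},\vec{s}_{R}$ together with any fresh names it allocates. The structural rules $\Lpar$, $\Lsum$, $\Lnew$ and $\Lprob$ carry no quantum effect, so they can be peeled off one at a time, re-normalising the mixed configuration into the indexed shape of the statement at each stage by the general form of internal transitions lemma; this reduces the problem to the base cases $\Lcom$, $\Lexpr$, $\Lact$, $\Lqbit$, $\Lns$ and $\Lps$.

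First I would dispatch the base cases by observing that each acts on the global register only through the subregister owned by $R$: $\Lcom$ transmits classical values and names, leaving $\ket{\psi}$ (hence every reduced matrix) untouched; $\Lexpr$ reduces the underlying expression via $\Rnstrans$, which applies $U\otimes I$ with $U$ acting on some of $\vec{q}_{R}$ or $\vec{s}_{R}$, or via $\Rnsmeasuret$, $\Rmeasureq$ or $\Rpsmeasure$, which apply a family of measurement operators on $R$'s subregister and refine the mixture by an index set $K$ fixed by the syntax of the measured expression; $\Lact$ merely removes the action braces; and $\Lqbit$, $\Lns$, $\Lps$ tensor a fresh $\ket{0}$ or vacuum into $R$'s subregister, or apply the $\qgate{PS}$ isometry on it. In every case the residual process $R'$, the abstraction pattern, the index set $K$ and the value maps $\vec{w}_{R_{ab}}$ depend only on $R$ and the names it owns, not on the global state, so $u$ performs the matching transition $u \transition{\tau} u'$ with $u'$ of precisely the displayed form, choosing the same fresh names by $\alpha$-conversion. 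The branch weights $h'_{mnab}$ are in general different from $g'_{klab}$, being fixed by the local states of $R$'s qubits in $u$ versus $t$, but that is harmless for what remains.

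It then remains to verify the reduced-density-matrix equality. By Lemma~\ref{lem:lemma4} (external state independence for $\transition{\tau}$), in its evident adaptation to a process that may touch both qubits and number states and that may also own names it does not touch, the transition of $R$ leaves the reduced state of its complement unchanged, so $\rho^{\vec{q}_{P}\vec{q}_{E}\vec{s}_{P}\vec{s}_{E}}(t') = \rho^{\vec{q}_{P}\vec{q}_{E}\vec{s}_{P}\vec{s}_{E}}(t)$ and likewise $\rho^{\vec{q}_{Q}\vec{q}_{E}\vec{s}_{Q}\vec{s}_{E}}(u') = \rho^{\vec{q}_{Q}\vec{q}_{E}\vec{s}_{Q}\vec{s}_{E}}(u)$. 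Chaining these with the hypothesis $\rho^{\vec{q}_{P}\vec{q}_{E}\vec{s}_{P}\vec{s}_{E}}(t) = \rho^{\vec{q}_{Q}\vec{q}_{E}\vec{s}_{Q}\vec{s}_{E}}(u)$ gives the required equality for $t'$ and $u'$.

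The main obstacle I anticipate is the measurement case, where one might worry that $u$ cannot reproduce the branching structure of $t'$ because the hypothesis constrains only the reduced state of the complement and says nothing about the local state of $\vec{q}_{R}\vec{s}_{R}$. The resolution is to read the measurement rules with $K$ ranging over all \emph{syntactically possible} outcomes, permitting zero-weight branches; then the process-level data are genuinely state-independent, and the only state-dependent quantities, namely the branch weights and the post-measurement local states, live entirely inside $R$'s subregister, which is exactly what the final $\rho^{\cdot}$-equality averages over. A secondary nuisance is the bookkeeping needed to keep the configuration in the exact indexed form of the statement through the structural rules, which is the role of the general form lemma; and one should check that $\Lcom$ internal to a composite $R$, which may pass qubit or number-state names, only reshuffles ownership and leaves $\ket{\psi}$ intact.
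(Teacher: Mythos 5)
Your proposal is correct and takes essentially the approach the paper intends: although the paper omits the proof of this lemma for space, it sets up Lemmas~\ref{lem:lemma2}--\ref{lem:lemma4} and the general-form lemma precisely as the ingredients you use, namely a case analysis on the derivation of the $\tau$-transition showing that $R$'s residual process, index structure and values are determined by $R$ alone (with measurement weights allowed to differ between $t'$ and $u'$), followed by chaining the Lemma~\ref{lem:lemma4}-style invariance of the complement's reduced density matrix with the hypothesis. Your handling of the measurement case via syntactically determined outcome sets (allowing zero-weight branches) is exactly the right resolution, and the only slips are cosmetic, e.g.\ \Lprob{} yields a probabilistic transition rather than a $\tau$ and so does not occur in the derivations being analysed.
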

The next two lemmas prove some simple results which are used in the proof of Theorem~\ref{thm:cong_parallel_preservation_configurations}.
\begin{Lemma}
  \label{lem:lemma5}
Let $t = \Dist{kl \in I} g_{kl}t_{kl}$ and $t' = \Dist{kl \in I} g_{kl}t'_{kl}$ then $t \transition{\alpha} t'$ if and only if $\forall_{kl \in I}(t_{kl} \transition{\alpha} t'_{kl})$ for $ \alpha \in \{\inp{.}\cdot,\tau\}$
\end{Lemma}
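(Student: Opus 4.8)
The plan is to prove both implications by a case analysis on the transition rules, exploiting the fact that the label $\alpha$ is restricted to internal or input actions. A mixed configuration $t = \Dist{kl \in I}{g_{kl}} t_{kl}$ obeys the standing invariant that all of its components share a single process term $P$ and a single owner-list $\omega$, differing only in the quantum states $\sigma_{kl}$ and in the tuples $\vec{v}_{kl}$ substituted for the abstracted variables; consequently a transition of a given component is forced to be an instance of the same syntax-directed rule as for every other component. Since the chosen $\alpha$ is never an output label and never a probabilistic step $\ptrans{p}$, the rules $\Loutq$, $\Loutns$ and $\Lprob$ are irrelevant, and the only rules that can justify $t \transition{\alpha} t'$ are $\Lin$, $\Lcom$, $\Lact$, $\Lqbit$, $\Lns$, $\Lps$, $\Lexpr$ together with the structural rules $\Lpar$, $\Lsum$, $\Lnew$.

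For the direction $(\Leftarrow)$, assume $t_{kl} \transition{\alpha} t'_{kl}$ for all $kl \in I$. Because the $t_{kl}$ share process structure these are all instances of one and the same rule, and each of $\Lin$, $\Lcom$, $\Lact$, $\Lqbit$, $\Lns$, $\Lps$, $\Lexpr$ is already formulated uniformly over an index set; so one may take the collection of component transitions as the hypotheses of that rule applied to the whole mixture $t$, obtaining $t \transition{\alpha} \Dist{kl \in I}{g_{kl}} t'_{kl} = t'$, with the pairs $(kl, g_{kl})$ carried through unchanged. For the structural cases $\Lpar$, $\Lsum$, $\Lnew$ one invokes the induction hypothesis on the common sub-configuration and re-attaches the surrounding constructor. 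The only subtle point, addressed below, is $\Lexpr$ in the presence of a measurement.

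For the direction $(\Rightarrow)$, proceed by induction on the derivation of $t \transition{\alpha} t'$. In each base case the premises and the conclusion of the rule are stated over the same index set $I$ with the same weights $g_{kl}$, so one simply reads off that each $t_{kl}$ makes the matching transition to the $kl$-th summand of $t'$; in the case of $\Lcom$ the pure-configuration premises $\transitionp{}$ give, via the singleton instance of $\Lcom$, exactly $t_{kl} \transition{\tau} t'_{kl}$. The inductive cases $\Lpar$, $\Lsum$, $\Lnew$ follow immediately from the induction hypothesis, since the relevant sub-derivation concerns a mixture over the same $I$.

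The main obstacle is the interplay of $\tau$ with quantum measurement, i.e. a $\tau$-step derived by $\Lexpr$ whose inner expression evaluates, through $\Rcontext$ feeding a measurement rule such as $\Rnsmeasuret$, to a mixed expression configuration that refines the index set. Here the statement must be read with the understanding that each $t'_{kl}$ is itself allowed to be a mixed configuration — the flattening of $\oplus$ over $\oplus$ — so that the $kl$-level weights remain exactly $g_{kl}$; with that bookkeeping fixed the equivalence is immediate, because the measurement applied to the whole mixture is by definition the $g_{kl}$-weighted sum of the measurements applied to the individual components. No quantum-information reasoning (reduced density matrices, matching of transmitted qubits, and so on) is required, since the lemma concerns only the transition structure and not the bisimulation conditions.
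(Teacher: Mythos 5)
The paper never gives a proof of this lemma: it states explicitly that, for space reasons, the proofs of all lemmas and theorems are omitted (the development is deferred to the extension of Davidson's thesis), so there is no paper proof to match yours against. On its own merits, your rule-by-rule case analysis is the natural argument and is essentially sound: the restriction $\alpha \in \{\inp{c}{\cdot}, \tau\}$ does exclude $\Loutq$, $\Loutns$ and $\Lprob$; the remaining rules are stated uniformly over the index set, so the ($\Rightarrow$) direction reads off componentwise (with the singleton instance of $\Lcom$ for communications, and induction through $\Lpar$, $\Lsum$, $\Lnew$); and your flattening convention for the $\Lexpr$/$\Rcontext$/measurement case, reading each $t'_{kl}$ as possibly mixed so that the outer weights $g_{kl}$ are preserved, is exactly the bookkeeping the statement needs.

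One step in the ($\Leftarrow$) direction is too quick. You justify assembling the component transitions into a single rule application by claiming that, because the $t_{kl}$ share a process term, their transitions ``are all instances of one and the same rule.'' Shared structure alone does not give this: the common term may be nondeterministic (two distinct internal communications available, or a conditional whose test value differs across components), so different components could realise $t_{kl} \transition{\tau} t'_{kl}$ via different redexes, and then no single instance of $\Lcom$ or $\Lexpr$ (whose premise in $\Rcontext$ demands a uniform residual expression $e'$) covers them all. What rescues the statement is a hypothesis you have not invoked: $t'$ is itself given as a mixed configuration $\Dist{kl \in I}{g_{kl}} t'_{kl}$ over the same index set and weights, and well-formedness of a mixed configuration forces all of its components to share one $\lambda$-abstracted process term, which rules out components resolving the nondeterminism differently. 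Make that appeal explicit — deriving uniformity from the shape of $t'$ rather than from the shape of $t$ — and your assembly step goes through.
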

\begin{Lemma}
    \label{lem:lemma6}
Let $t_{mn} = \Dist{kl \in I_{mn}} g_{klmn}(\vec{x}:\vec{T};\sigma_{klmn};\omega;\ltrm{\vec{y}}{P}{\vec{w}_{klmn}})$ and $t_{klmn} = (\vec{x}:\vec{T};\sigma_{klmn};\omega;P\{\vec{w}_{klmn}/\vec{y}\})$ then $\forall_{mn \in J, kl \in I_{mn}}.(t_{klmn}\transitionp{\inp{c}{\vec{u}_{mn},\vec{q},\vec{s}}} t'_{klmn})$ if and only if $\forall_{mn \in J}.(t_{mn} \transitionp{\inp{c}{\vec{u}_{mn},\vec{q},\vec{s}}}  t'_{mn})$
  \end{Lemma}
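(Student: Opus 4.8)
The plan is to prove the biconditional one outer index at a time. Since $J$ is a fixed set and the configurations $t_{mn}$ for distinct $mn$ are entirely independent, the statement $\bigl(\forall mn,kl.\,t_{klmn}\transitionp{\inp{c}{\vec{u}_{mn},\vec{q},\vec{s}}}t'_{klmn}\bigr)\iff\bigl(\forall mn.\,t_{mn}\transition{\inp{c}{\vec{u}_{mn},\vec{q},\vec{s}}}t'_{mn}\bigr)$ reduces, by conjoining over $mn$, to showing for each fixed $mn$ that $\bigl(\forall kl\in I_{mn}.\,t_{klmn}\transitionp{\inp{c}{\vec{u}_{mn},\vec{q},\vec{s}}}t'_{klmn}\bigr)$ is equivalent to $t_{mn}$ performing the corresponding input. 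So I would fix $mn$ and write $r_{klmn}=(\vec{x}:\vec{T};\sigma_{klmn};\omega;\ltrm{\vec{y}}{P}{\vec{w}_{klmn}})$ for the single-component lambda-configuration occurring in the mixture, so that $t_{mn}=\Dist{kl\in I_{mn}}{g_{klmn}}r_{klmn}$ and $t_{klmn}$ is exactly the realisation of $r_{klmn}$, obtained by substituting $\vec{w}_{klmn}$ for the abstracted variables $\vec{y}$ in $P$.

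The first step is to discharge the aggregation over the inner index $kl$ using Lemma~\ref{lem:lemma5}. Within a fixed $mn$ the input label $\inp{c}{\vec{u}_{mn},\vec{q},\vec{s}}$ does not depend on $kl$, so Lemma~\ref{lem:lemma5} applies with $\alpha=\inp{c}{\vec{u}_{mn},\vec{q},\vec{s}}$ and gives that $t_{mn}\transition{\inp{c}{\vec{u}_{mn},\vec{q},\vec{s}}}t'_{mn}$ holds if and only if $r_{klmn}\transition{\inp{c}{\vec{u}_{mn},\vec{q},\vec{s}}}r'_{klmn}$ holds for every $kl\in I_{mn}$, with $t'_{mn}=\Dist{kl\in I_{mn}}{g_{klmn}}r'_{klmn}$. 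This isolates all remaining work to a single component and, being itself a biconditional, handles both directions at once.

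The second step bridges the single-component lambda-configuration transition to the pure transition of its realisation: I claim $r_{klmn}\transition{\inp{c}{\vec{u}_{mn},\vec{q},\vec{s}}}r'_{klmn}$ iff $t_{klmn}\transitionp{\inp{c}{\vec{u}_{mn},\vec{q},\vec{s}}}t'_{klmn}$, where $t'_{klmn}$ realises $r'_{klmn}$. I would prove this by structural induction on the shared process shape $P$, restricted to the forms that can enable an input on $c$: the base case is an input prefix, where the mixed rule $\Lin$ and the pure rule $\Pin$ coincide once the abstraction is realised; the inductive cases $P\parcomp Q$, $P+Q$ and $(\new d)P$ follow by pairing $\Lpar$ with $\Ppar$, $\Lsum$ with $\Psum$, and $\Lnew$ with $\Pnew$ and invoking the induction hypothesis. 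Note that substituting values for $\vec{y}$ cannot alter the outermost process constructor nor the input channel $c$, which is why the shape $P$ is genuinely common across components. Composing the two steps and conjoining back over $kl$ and $mn$ yields the stated equivalence in both directions.

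The step I expect to be the main obstacle is checking that this bridge respects substitution. In the pure transition one substitutes $\vec{w}_{klmn}$ for $\vec{y}$ first and only then performs the input substitution of the received value $\vec{u}_{mn}$ and the fresh names $\vec{q},\vec{s}$; in the lambda-configuration the rule $\Lin$ performs the input substitution on the common shape $P$ while the $\vec{y}$-abstraction is carried along and realised afterwards. These two orders must agree, which requires that the measurement-outcome variables $\vec{y}$ be disjoint from the input-bound variables and that the component-specific values $\vec{w}_{klmn}$ neither fix the channel $c$ nor block the transition. Both facts hold by the usual freshness conventions and $\alpha$-conversion, together with the observation that whether an input fires depends only on the shared shape $P$ and not on the values instantiating $\vec{y}$; once this commutation is verified, the correspondence is a routine unfolding of the transition-rule definitions.
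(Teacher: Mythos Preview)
The paper does not supply a proof of this lemma: immediately before the block of lemmas in the appendix it states that ``due to space constraints the proofs of all lemmas and theorems are not provided in this paper.'' There is therefore no paper argument to compare yours against.

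Your argument is sound. Reducing to a fixed $mn$, invoking Lemma~\ref{lem:lemma5} to peel off the inner $kl$-mixture, and then bridging the singleton mixed-configuration input transition to the pure-configuration input transition by structural induction on $P$ (pairing $\Lin/\Pin$, $\Lpar/\Ppar$, $\Lsum/\Psum$, $\Lnew/\Pnew$) is the natural route, and your observation that the substitution $\{\vec{w}_{klmn}/\vec{y}\}$ must commute with the input substitution performed by $\Lin$---justified by freshness of the $\lambda$-bound variables relative to the input-bound ones---is the right hygiene check. One small point worth making explicit: the lemma as printed uses $\transitionp{}$ on both sides, but you read the right-hand transition on the mixed configuration $t_{mn}$ as the mixed relation $\transition{}$. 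That is almost certainly the intended reading, since the pure rules of Figure~\ref{fig:trans_pure} are not defined on $\oplus$-sums, and it is precisely what makes your appeal to Lemma~\ref{lem:lemma5} go through; you might flag this reinterpretation in your write-up rather than leaving it implicit.
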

We are now in a position to prove that bisimilarity is preserved by parallel composition. To prove this, we define an equivalence relation that contains the pair $((\vec{x}:\vec{T};\sigma;\emptyset;P\parallel R),(\vec{x}:\vec{T};\sigma;\emptyset;Q\parallel R))$ and that is closed under transitions from these configurations.
 \begin{Theorem}[Parallel preservation for configurations]
   \label{thm:cong_parallel_preservation_configurations}
   Assume that $\ptyped{\Gamma}{P}$, $\ptyped{\Gamma}{Q}$,
   $\ptyped{\Gamma}{P \parallel R}$, and $\ptyped{\Gamma}{Q \parallel
     R}$. If $(\vec{x}:\vec{T};\sigma; \emptyset; P) \pbsim (\vec{x}:\vec{T};\sigma; \emptyset; Q)$
   then $(\vec{x}:\vec{T};\sigma; \emptyset; P \parallel R) \pbsim (\vec{x}:\vec{T};\sigma; \emptyset; Q\parallel R)$. 
 \end{Theorem}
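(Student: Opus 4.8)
The plan is to build a probabilistic branching bisimulation $\mathcal{R}$ witnessing $(\vec{x}:\vec{T};\sigma;\emptyset;P\parallel R)\pbsim(\vec{x}:\vec{T};\sigma;\emptyset;Q\parallel R)$ out of a bisimulation for $P$ and $Q$ run ``in context $R$''. Fix a probabilistic branching bisimulation $\mathcal{S}$ with $((\vec{x}:\vec{T};\sigma;\emptyset;P),(\vec{x}:\vec{T};\sigma;\emptyset;Q))\in\mathcal{S}$, and let $\mathcal{R}$ be the least equivalence relation containing every pair of (mixed) configurations reachable by running $P\parallel R$ and $Q\parallel R$ in lock-step: a $P$-component $t$ related by $\mathcal{S}$ to a $Q$-component $u$, composed in parallel with a common $R$-part whose process term and owned names are identical on both sides, and such that the reduced density matrix of everything \emph{not} owned by $P$ (resp.\ $Q$) agrees on the two sides. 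This last invariant is exactly the shape propagated by clause~II of Definition~\ref{def:pbb} for $\mathcal{S}$, together with the hypothesis/conclusion pattern of Lemma~\ref{lem:lemma4b}. The typing hypotheses $\ptyped{\Gamma}{P\parallel R}$ and $\ptyped{\Gamma}{Q\parallel R}$ ensure that ownership partitions cleanly into $P$- (resp.\ $Q$-), $R$-, and environment-parts, which is what makes these reduced density matrices well defined, and that $R$ cannot touch $P$- or $Q$-owned quantum state.

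One then checks the four clauses of Definition~\ref{def:pbb} for $(v,w)\in\mathcal{R}$ by case analysis on a transition $v\transition{\alpha}v'$, according to whether it is produced by $\Lpar$ acting on the $P$-side, by $\Lpar$/$\Lsum$/$\Lnew$ acting on the $R$-side, or by $\Lcom$ synchronising $P$ with $R$. If the step is internal to $R$, it is matched by the identical step of the $R$-part on the $w$-side: Lemma~\ref{lem:lemma4b} gives that the common quantum invariant is preserved (the action of $R$ depends only on $R$- and environment-owned state, which agree), and Lemmas~\ref{lem:lemma5}--\ref{lem:lemma6} push this through the mixed-configuration bookkeeping. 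If the step is internal to $P$, I use $t\mathrel{\mathcal{S}}u$: for a $\tau$-step, clause~I of Definition~\ref{def:pbb} gives a matching $u\weaktrans{}u'\opttrans{\tau}u''$, and I replay each of those $\tau$-steps inside $Q\parallel R$ via $\Lpar$ with $R$ standing still, using Lemmas~\ref{lem:lemma2}--\ref{lem:lemma4} to see $R$'s owned state is untouched so the invariant survives; for an input, clause~III gives the matching weak transition; for an output, clause~II gives both the matching weak output transition and the equalities $\rhoe(t'_j)=\rhoe(u''_j)$ of the transmitted quantum data, which is precisely what lets the resulting $P\parallel R$ and $Q\parallel R$ components be placed back in $\mathcal{R}$.

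The substantive case is $\Lcom$: a communication between $P$ and $R$. Suppose $P$ outputs (names and/or classical values) on a private channel $c$ to $R$. On the $w$-side, $\mathcal{S}$ supplies $u\weaktrans{}u'\transition{\outp{c}{V,\vec{X}_2}}u''$ matching the set $V$ of transmitted classical values, with equal reduced density matrices of the transmitted qubits/number states and $(t'_j,u''_j)\in\mathcal{S}$; the weak prefix is replayed inside $Q\parallel R$ with $R$ idle as above, and then $R$'s rule $\Lin$ fires on both sides receiving the same data --- because the reduced density matrices of what $R$ receives coincide, $R$ evolves to the same process term and a quantum state keeping the invariant, so the composites land back in $\mathcal{R}$; the probabilistic distribution of outcomes produced by $\Loutq$/$\Loutns$ is inherited directly from the $\mathcal{S}$-matching. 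The symmetric sub-case ($P$ inputting from $R$) is easier, since $R$'s output is deterministic on both sides. Clause~IV in general then follows because probabilistic configurations of $P\parallel R$ arise only from $\Loutq$/$\Loutns$ applied to a $P$- or $R$-output, whose weights are exactly those already matched by $\mathcal{S}$ (for $P$) or reproduced via Lemma~\ref{lem:lemma4b} (for $R$), so $\mu(v,D)=\mu(w,D)$ on each class $D\in\mathcal{T}/\mathcal{R}$.

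The main obstacle I anticipate is the $\Lcom$ case with transmission of qubit or number-state names together with classical measurement outcomes: one must simultaneously check (i) that the reduced density matrix equality delivered by clause~II of $\mathcal{S}$ at the level of $P$ versus $Q$ genuinely lifts to the environment-reduced matrix $\rhoe$ at the level of $P\parallel R$ versus $Q\parallel R$ --- this rests on $R$'s owned state being literally identical on both sides and hence factoring out of the partial trace --- and (ii) that the probabilistic branching introduced by $\Loutq$/$\Loutns$ is partitioned identically on the two sides, so $\mathcal{R}$ really is closed under these transitions. Getting the ownership/partial-trace bookkeeping exactly right, and confirming that the stuttering replay of $Q$'s weak $\tau$-moves inside $Q\parallel R$ leaves $R$ (and the environment's view of it) undisturbed, are where the external-state-independence lemmas and the context-independence lemma do all the real work; the remainder is the standard parallel-composition argument for branching bisimulation.
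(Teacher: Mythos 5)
Your proposal takes essentially the same route as the paper: the paper's (omitted-for-space) proof is declared to proceed by defining an equivalence relation containing the pair $((\vec{x}:\vec{T};\sigma;\emptyset;P\parallel R),(\vec{x}:\vec{T};\sigma;\emptyset;Q\parallel R))$ and closed under their transitions, supported by exactly the lemmas you invoke in exactly the roles you assign them --- external state independence (Lemmas~\ref{lem:lemma2}--\ref{lem:lemma4}), independence of context transitions (Lemma~\ref{lem:lemma4b}), and the mixed-configuration bookkeeping of Lemmas~\ref{lem:lemma5} and~\ref{lem:lemma6}. Your case analysis (matching $R$'s moves identically on both sides, replaying $P$'s moves through the underlying bisimulation with $R$ idle, and handling \Lcom{} by combining clause~II of Definition~\ref{def:pbb} with $R$'s input) is the standard argument the paper intends, so I see no substantive divergence.
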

  Using this result, we prove that the bisimilarity of processes is preserved by parallel composition.
 \begin{Theorem}[Parallel Preservation]
   \label{thm:cong_parallel_preservation}
   If $P \pbsim Q$ then for any process $R$ such that
   $\ptyped{\Gamma}{P \parallel R}$ and $\ptyped{\Gamma}{Q \parallel R}$ then $P \parallel R \pbsim Q \parallel R$.
 \end{Theorem}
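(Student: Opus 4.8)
The plan is to reduce the statement directly to its configuration-level counterpart, Theorem~\ref{thm:cong_parallel_preservation_configurations}, which has already done the substantive work of exhibiting a probabilistic branching bisimulation closed under parallel composition. At the process level almost nothing new is required: one only has to unfold the definition of $\pbsim$ on processes, discharge the typing side-conditions, and quantify over quantum states.

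Concretely I would proceed as follows. First, recall that $P \pbsim Q$ means that for every quantum state $\sigma$ we have $(\vec{x}:\vec{T};\sigma;\emptyset;P) \pbsim (\vec{x}:\vec{T};\sigma;\emptyset;Q)$, where $\vec{x}:\vec{T}$ is the typed domain of $\sigma$. Fix an arbitrary such $\sigma$. Next, from the hypotheses $\ptyped{\Gamma}{P \parallel R}$ and $\ptyped{\Gamma}{Q \parallel R}$ apply inversion to the typing rule for parallel composition to recover the component judgements $\ptyped{\Gamma}{P}$, $\ptyped{\Gamma}{R}$ and $\ptyped{\Gamma}{Q}$, $\ptyped{\Gamma}{R}$, keeping track of how ownership of qubits and number states is partitioned between the two branches (in particular, of which part $R$ receives). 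These are exactly the premises demanded by Theorem~\ref{thm:cong_parallel_preservation_configurations}, so instantiating it at $\sigma$ with the bisimilar pair $(\vec{x}:\vec{T};\sigma;\emptyset;P) \pbsim (\vec{x}:\vec{T};\sigma;\emptyset;Q)$ yields $(\vec{x}:\vec{T};\sigma;\emptyset;P \parallel R) \pbsim (\vec{x}:\vec{T};\sigma;\emptyset;Q \parallel R)$. Since $\sigma$ was arbitrary, re-folding the definition of bisimilarity of processes gives $P \parallel R \pbsim Q \parallel R$, which is the claim.

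The step that needs the most care is the typing bookkeeping in the reduction: Theorem~\ref{thm:cong_parallel_preservation_configurations} is stated with a \emph{single} context $\Gamma$ appearing in all four judgements $\ptyped{\Gamma}{P}$, $\ptyped{\Gamma}{Q}$, $\ptyped{\Gamma}{P\parallel R}$, $\ptyped{\Gamma}{Q\parallel R}$, whereas the natural reading of the parallel rule splits ownership of the quantum names disjointly between the two sides. I would therefore check that the type system's parallel rule can indeed be presented in the ``shared typing environment, split ownership'' form the theorem expects, so that the $\Gamma$ in all four judgements is genuinely the same and inversion is legitimate; this is precisely where one invokes the remark that the full type system is a straightforward extension of the CQP system accommodating number states. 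Everything else is immediate: the universal quantifier over $\sigma$ transfers unchanged, and the empty owner list $\emptyset$ in the process-level definition of $\pbsim$ matches the empty owner lists in the configuration-level theorem.
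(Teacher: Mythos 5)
Your proposal is correct and follows essentially the same route as the paper: the paper derives Theorem~\ref{thm:cong_parallel_preservation} from the configuration-level result (Theorem~\ref{thm:cong_parallel_preservation_configurations}) by unfolding the definition of process bisimilarity over all quantum states, exactly as you do. Your extra care about recovering $\ptyped{\Gamma}{P}$ and $\ptyped{\Gamma}{Q}$ from the typing of the parallel compositions is a reasonable elaboration of bookkeeping the paper leaves implicit, not a deviation in method.
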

 We now consider preservation with respect to other process constructions and can be shown that probabilistic branching bisimilarity is preserved by all process constructs except input and qubit or number state declarations. 
 \begin{Lemma}
 \label{lem:lemma7}
Probabilistic branching bisimilarity is preserved by output prefix, action prefix, channel restriction and non-deterministic choice.
 \end{Lemma}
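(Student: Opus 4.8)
The plan is to treat the four operators with a single template: for each, fix a probabilistic branching bisimulation $\mathcal{S}$ witnessing $P \pbsim Q$ (so $(\vec{x}:\vec{T};\sigma;\emptyset;P) \mathrel{\mathcal{S}} (\vec{x}:\vec{T};\sigma;\emptyset;Q)$ for every $\sigma$), pick the corresponding single-step context $\mathcal{C}$ — namely $\outp{c}{\tilde{e}}.[\cdot]$, $\{e\}.[\cdot]$, $(\new c)[\cdot]$, or the guarded sum $\alpha.[\cdot]+S$ of Definition~\ref{def:cong_context} — and exhibit the equivalence relation $\mathcal{R}$ generated by $\mathcal{S}$, the identity on all configurations, the pairs $((\vec{x}:\vec{T};\sigma;\emptyset;\mathcal{C}[P]),(\vec{x}:\vec{T};\sigma;\emptyset;\mathcal{C}[Q]))$ for all $\sigma$, and the pairs obtained from the latter by performing the same evaluation $\tau$-steps on both sides. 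I would then verify that $\mathcal{R}$ is a probabilistic branching bisimulation; since $\mathcal{S}$ and the identity already meet the conditions of Definition~\ref{def:pbb}, only transitions out of the newly added pairs require work. Two observations should be established first and used throughout: (i) the reduction of the guarding expression or action, and of an output value, depends only on that expression and the current quantum state, which are literally the same on the two sides, so the evaluating $\tau$-steps, the resulting weights, the probability distributions appearing in condition~IV, and the per-branch reduced density matrices in condition~II.c all coincide; and (ii) $\pbsim$ is closed under forming mixtures $\oplus$ and distributions $\boxplus$ of componentwise bisimilar configurations, which is needed because a measurement in the guard turns the post-transition configuration into such a combination of copies of $P$ against the matching copies of $Q$ — this is where Lemmas~\ref{lem:lemma5} and~\ref{lem:lemma6} are applied componentwise.

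\textbf{Output and action prefix (and their guarded-sum versions).} From $\mathcal{C}[P]=\outp{c}{\tilde{e}}.P + S$ the only transitions are the $\tau$-steps evaluating $\tilde{e}$ inside the context (rule $\Lexpr$, possibly producing a mixed configuration), the output itself by $\Loutq$ or $\Loutns$ (yielding a distribution over mixed configurations, each running $P$ unchanged — $P$ does not bind the output value — inside an identical surrounding structure), and transitions of $S$. By observation (i), $\outp{c}{\tilde{e}}.Q + S$ does exactly the same evaluation and output, with the same label, value set, and branching probabilities; since the output alters the quantum state only by a permutation of names and a transfer of ownership, the per-branch $\rhoe$ also agree, so the paired post-output configurations differ only in $P$ versus $Q$ and are $\mathcal{R}$-related via observation (ii) and $\mathcal{S}$. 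Transitions of $S$ are matched identically, landing in identity-related configurations. The action-prefix case $\{e\}.P + S$ is identical, with the output replaced by the $\Lact$ step discarding the evaluated action (a measurement or unitary applied identically to the shared quantum state).

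\textbf{Channel restriction.} By rule $\Lnew$, a transition of $(\new c)P$ comes from $P\transition{\alpha}P'$ with $\alpha\notin\{\inp{c}{\cdot},\outp{c}{\cdot}\}$, relabelled to $(\new c)P\transition{\alpha}(\new c)P'$. Since $P\pbsim Q$, this is matched, per Definition~\ref{def:pbb}, by a run of $\tau$-steps from $(\vec{x}:\vec{T};\sigma;\emptyset;Q)$ followed by at most one $\alpha$-step, with the appropriate pairs in $\mathcal{S}$; every step of that run is either a $\tau$ (using no channel) or a visible step carrying the same label $\alpha$, so none of them uses $c$, and the whole run lifts through $(\new c)$ by repeated application of $\Lnew$. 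Restriction touches neither $\omega$ nor the quantum state, so $\rhoe$ and the density-matrix conditions transfer unchanged, and the resulting pairs are $\mathcal{R}$-related by the restriction-closure of $\mathcal{S}$. (The hole in a sum is guarded in Definition~\ref{def:cong_context}, which is what makes the choice case work; unguarded choice does not preserve branching bisimilarity in general.)

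\textbf{Main obstacle.} I expect the case analysis itself to be routine; the delicate points are the probabilistic bookkeeping when the guard is a measurement — assembling the componentwise matchings from Lemmas~\ref{lem:lemma5}--\ref{lem:lemma6} into one matching transition that satisfies conditions~II and~IV of Definition~\ref{def:pbb} with the environment's per-branch reduced density matrices equal — and, for restriction, checking that the \emph{entire} matching run on the $Q$ side avoids the channel $c$ so that it can be pushed under the restriction. It will also be necessary to record, as a preliminary, the closure of $\pbsim$ under mixtures and distributions, so that the process-level hypothesis (stated only for pure configurations with empty $\omega$) lifts to the mixed configurations that appear once the guard has been consumed.
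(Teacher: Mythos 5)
Since the paper explicitly omits the proofs of all appendix lemmas, there is no written proof of Lemma~\ref{lem:lemma7} to compare against line by line; judged against the paper's own methodology (an explicit equivalence relation containing the required pairs and closed under their transitions, as described for Theorem~\ref{thm:cong_parallel_preservation_configurations}), your skeleton is the right one. In particular you correctly exploit that none of the four constructs substitutes the guard's value into the continuation (which is why plain $\pbsim$ suffices here, unlike for input), you read ``choice'' as the guarded sum of Definition~\ref{def:cong_context}, and your restriction case (lifting the matching weak run of $Q$ through $\Lnew$, which is possible because every step of that run is a $\tau$ or carries the already-admissible label $\alpha$) is sound.

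The genuine gap is your observation~(ii), which is both imprecisely stated and left unproved although it carries essentially all the weight in the output- and action-prefix cases. ``Closure of $\pbsim$ under mixtures of componentwise bisimilar configurations'' is not even well-formed: a mixed configuration $\Dist{i}{g_i}(\vec{x}:\vec{T};\sigma_i;\omega;\ltrm{\vec{y}}{P}{\vec{v}_i})$ requires all components to share one process term, so what you actually need is the specific lifting statement: if $(\vec{x}:\vec{T};\sigma;\emptyset;P)\pbsim(\vec{x}:\vec{T};\sigma;\emptyset;Q)$ for all $\sigma$, then $\Dist{i}{g_i}(\vec{x}:\vec{T};\sigma_i;\emptyset;\ltrm{\vec{y}}{P}{\vec{v}_i})\pbsim\Dist{i}{g_i}(\vec{x}:\vec{T};\sigma_i;\emptyset;\ltrm{\vec{y}}{Q}{\vec{v}_i})$ with identical weights, states and values on the two sides (and the corresponding matching of $\boxplus$-distributions for condition~IV of Definition~\ref{def:pbb}). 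This cannot be discharged ``componentwise via Lemmas~\ref{lem:lemma5} and~\ref{lem:lemma6}'': Lemma~\ref{lem:lemma5} covers only input and $\tau$ labels and Lemma~\ref{lem:lemma6} only inputs, whereas the delicate case is precisely an output of the continuation ($\Loutq$/$\Loutns$), where the mixture splits probabilistically according to the value sets and where the individual component matchings provided by $P\pbsim Q$ may involve weak runs of different lengths, yet a mixed configuration must move as a single whole. So the lifting lemma has to be formulated and proved directly (by a further relation construction over mixtures built on $P$ and on $Q$), and it, rather than the case analysis over the four constructs, is the real content of the prefix cases; until it is in place, your argument for output and action prefix is incomplete, while the restriction and guarded-choice cases stand.
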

 \begin{Theorem}[Probabilistic branching bisimilarity is a non-input congruence]
   \label{thm:cong_noninput_congruence}
   If $P \pbsim Q$ and for any non-input, non-qubit or non-number state context $C$ if $\ptyped{\Gamma}{C[P]}$ and $\ptyped{\Gamma}{C[Q]}$ then $C[P] \pbsim C[Q]$.
 \end{Theorem}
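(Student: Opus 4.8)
The plan is to prove the statement by structural induction on the non-input, non-qubit or non-number state context $C$, using the grammar for contexts from Definition~\ref{def:cong_context} together with the side-condition that the hole does not occur beneath an input, qubit, or number state prefix. Throughout, $P$ and $Q$ are fixed with $P \pbsim Q$. In the base case $C = [\cdot]$ we have $C[P] = P$ and $C[Q] = Q$, and the hypotheses $\ptyped{\Gamma}{C[P]}$ and $\ptyped{\Gamma}{C[Q]}$ are exactly $\ptyped{\Gamma}{P}$ and $\ptyped{\Gamma}{Q}$, so $C[P] \pbsim C[Q]$ holds by assumption.

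For the inductive step I would case-split on the outermost context former, noting first that whenever $C$ is built from a smaller context $C'$, that $C'$ is again a non-input, non-qubit, non-number state context (the hole is no more deeply nested in $C'$ than in $C$), so the induction hypothesis applies to it. If $C = C' \parcomp R$, then inverting the typing rule for parallel composition on $\ptyped{\Gamma}{C'[P] \parcomp R}$ and on $\ptyped{\Gamma}{C'[Q] \parcomp R}$ yields typings for $C'[P]$, $C'[Q]$ and $R$ in suitable sub-contexts of $\Gamma$; the induction hypothesis then gives $C'[P] \pbsim C'[Q]$, and Theorem~\ref{thm:cong_parallel_preservation} (parallel preservation) gives $C'[P] \parcomp R \pbsim C'[Q] \parcomp R$, which is $C[P] \pbsim C[Q]$. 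If $C = \alpha.C'$, the non-input, non-qubit, non-number state restriction forces $\alpha$ to be an output prefix $\outp{e}{\tilde{e}}$ or an action $\{e\}$ --- the input, qubit, and number state prefixes are excluded precisely because the hole would then lie beneath them --- and the induction hypothesis together with preservation by output prefix and action prefix (Lemma~\ref{lem:lemma7}) gives $\alpha.C'[P] \pbsim \alpha.C'[Q]$. The case $C = \alpha.C' + R$ (again with $\alpha$ an output or action prefix) follows by combining the previous case with preservation by non-deterministic choice (Lemma~\ref{lem:lemma7}). Finally, for $C = (\new x:\chant{\tilde{T}})C'$ the induction hypothesis and preservation by channel restriction (Lemma~\ref{lem:lemma7}) give $(\new x:\chant{\tilde{T}})C'[P] \pbsim (\new x:\chant{\tilde{T}})C'[Q]$. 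In each case one verifies that the typability premises demanded by the cited preservation result follow by inverting the syntax-directed typing rule of the relevant former applied to $C[P]$ and $C[Q]$.

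I expect the main obstacle to be keeping the typability hypotheses aligned, particularly in the parallel case. The CQP/LOQC type system assigns each qubit and number state a unique owning component, so decomposing $\ptyped{\Gamma}{C'[P] \parcomp R}$ has to be done so that the resulting typings of $C'[P]$, $C'[Q]$ and $R$ share a common ownership split and match exactly the form of the hypotheses of Theorem~\ref{thm:cong_parallel_preservation}; one also has to know that $C'[P]$ and $C'[Q]$ own the same names, which is where the discipline enforced by typability is used. Once these hypotheses are in place the argument is a routine assembly of the preservation results established earlier, and the only genuinely structural point is the observation --- built into Definition~\ref{def:cong_context} and the non-input restriction --- that excluding the input, qubit and number state prefix cases is exactly what is needed, since preservation fails for those.
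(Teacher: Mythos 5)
Your proposal is correct and follows essentially the approach the paper intends: the paper omits the explicit proof (citing space constraints), but its scaffolding---Theorem~\ref{thm:cong_parallel_preservation} for parallel composition and Lemma~\ref{lem:lemma7} for output prefix, action prefix, restriction and choice---is set up precisely so that the theorem follows by your structural induction on non-input, non-qubit, non-number-state contexts. Your attention to decomposing the typability hypotheses at the parallel case is the right place to be careful, and nothing in the argument goes beyond what the cited results supply.
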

  
 \subsection{Execution of $Model_{2}$:}
Let $t = (\emptyset;\emptyset;\emptyset;\pmodelt)$ be the initial configuration. Like in previous case after receiving input qubits, we get the configuration as,
$(q_{1} : \Qbit, q_{2} : \Qbit, q_{1}q_{2} = \ket{\phi}_{q};q_{1},q_{2};(\ppolsect ' \parallel\pcnot\parallel\ppsm))$.
As before the qubits are converted to the number states after some $\tau$ operations and the configuration is now,
 \[
 \begin{array}{l}
(\vec{s} : \vec{\NS};\vec{s} = \ket{\phi}_{s};s_{0},s_{1},s_{2},s_{3};(\ppolsect '' \parallel\pcnot\parallel\ppsm))
\end{array}
\]
After another set of $\tau$ transitions corresponding to the $\pcnot$ process, we get the state $\ket{\phi}_{out}$ which is given by Eq.~\ref{Eq3}. The configuration now becomes
$(\vec{s} : \vec{\NS};\vec{s} = \ket{\phi}_{out};s_{0},s_{1},s_{2},s_{3};(\pcnot' \parallel\ppsm))$.
The execution of $Model_{2}$ is similar to that of $Model_{1}$ and differs only in the measurement. Here the detectors perform a \emph{post-selective} measurement giving rise to the following mixed configuration:
 \[
 \begin{array}{l}
\Dist{\substack{ij \in \{0,1\}, i \neq j\\kl \in \{0,1\}, k \neq l}}{g_{ij}h_{ijkl}}(\vec{s} : \vec{\NS};\vec{s} = \ket{\phi_{ijkl}};s_{0},s_{1},s_{2},s_{3};\ltrm{\vec{y}}{\ppsm'}j,l)
\end{array}
\]
The \emph{post-selective} measurement outcomes ($\vec{y}$) are then given as output to the environment resulting in a probabilistic configuration given as
$\Prob{ij \in \{0,1\},kl \in \{0,1\}}g_{ij}h_{ijkl}(\vec{s} : \vec{\NS};\vec{s} = \ket{\phi_{ijkl}};s_{0},s_{1},s_{2},s_{3};\ltrm{\vec{y}}{\nil}j,l)$.

Another significant difference between the models is in the communication of the measurement outcomes. In $Model_{1}$, the outcomes were communicated internally and hence did not give a probabilistic configuration, which is not the case for $Model_{2}$. 

\subsection{Correctness of $Model_{2}$}

\begin{proposition}
  $\pmodelt \fpbsim \pspect$.
\end{proposition}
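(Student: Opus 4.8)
The plan is to mirror the proof of the previous proposition, $\pmodel \fpbsim \pspec$. First I would establish $\pmodelt \pbsim \pspect$ by constructing an explicit probabilistic branching bisimulation $\mathcal{R}$, and then, since $\pmodelt$ and $\pspect$ are closed, conclude $\pmodelt \fpbsim \pspect$ because bisimilarity of closed configurations is trivially stable under substitution. Combined with Theorem~\ref{thm:congruence}, this shows that the post-selective LOQC CNOT gate behaves correctly in any typable context.

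For $\mathcal{R}$ I would take its equivalence classes to be families $G_i(\sigma)$ of configurations reachable from the initial configuration by a common sequence of \emph{observable} transitions, exactly as the $F_i(\sigma)$ were used in the $\pmodel$ proof: a class reached after $\weaktrans{\inp{a}{q_1}}$, one after $\weaktrans{\inp{a}{q_1}}\weaktrans{\inp{b}{q_2}}$, one after additionally $\weaktrans{\outp{out_1}{c_1}}$, and one after additionally $\weaktrans{\outp{out_2}{t_1}}$, together with the pair of initial configurations of $\pmodelt$ and $\pspect$ for every $\sigma$. Transitions within a class are necessarily $\tau$ --- the $\mathsf{PS}$ conversions in $\ppolsect$, the beam-splitter unitaries and internal communications in $\pcnot$, the unitaries $\qgate{H}$--$\qgate{CZ}$--$\qgate{H}$ in $\pname{OPCNOT}$, and the post-selective measurements inside $\ppsm$ --- and are matched by $\tau$ steps on the other side; using the execution trace recorded above together with the external-state-independence lemmas (Lemmas~\ref{lem:lemma2}--\ref{lem:lemma4b}) one checks that two configurations in the same class have equal $\rhoe$.

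The genuinely new ingredient compared with the $\pmodel$ argument is that here a \emph{probabilistic} configuration appears, so condition~IV of Definition~\ref{def:pbb} is no longer vacuous. In $\pmodel$ the measurement results were consumed internally by $\pname{Counter}$, keeping the configuration mixed ($\oplus$); in $\pmodelt$ the two post-selective detectors apply rule $\Rpsmeasure$ and then output the results on $out_1$ and $out_2$, so rule $\Loutns$ turns the mixed configuration into a probability distribution ($\boxplus$), and likewise $\pname{Output}$ measures $s_1$ and $s_3$ and outputs them. To verify conditions~II(b)--(d) and~IV one reduces the problem to two facts about number states, both from \cite{Myers2005}: (i) the state produced by the $\qgate{H}$--$\qgate{CZ}$--$\qgate{H}$ computation of $\pname{OPCNOT}$ on a dual-rail input equals the post-selected output of the physical circuit, i.e.\ the relation between Eq.~\ref{Eq2} and Eq.~\ref{Eq4} read off from Eq.~\ref{Eq3}; and (ii) the renormalisation carried out by $\Rpsmeasure$ inside $\ppsm$, composed over the two detectors, produces exactly the distribution $|\alpha|^2,|\beta|^2,|\gamma|^2,|\delta|^2$ over the classical outcomes, the same distribution obtained on the specification side from the (unnormalised) measurements of $s_1$ and $s_3$. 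Granting these, the sets of output values match, the per-branch reduced environment density matrices $\rhoe$ coincide, and $\mu(\cdot,D)$ agrees on every terminal class $D$, so $\mathcal{R}$ is a probabilistic branching bisimulation.

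The main obstacle I expect is precisely this probabilistic bookkeeping: tracking the normalisation factors of $\Rpsmeasure$ through the composition of the two post-selective detectors and confirming that the resulting weights coincide with those obtained on the specification side, and then discharging condition~IV for the probabilistic configurations that now arise. Everything else --- matching the input transitions, the ``$\tau$ within a class is matched by $\tau$'' argument, the typability side condition required to apply Theorem~\ref{thm:congruence}, and the passage from $\pbsim$ to $\fpbsim$ --- is routine and runs exactly as in the $\pmodel$ case.
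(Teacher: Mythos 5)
Your proposal follows essentially the same route as the paper: the same equivalence classes $F_1(\sigma,q_1),\dots,F_4(\sigma)$ indexed by the observable sequence $\inp{a}{q_1},\inp{b}{q_2},\outp{out_1}{\cdot},\outp{out_2}{\cdot}$, the same passage from $\pbsim$ to $\fpbsim$ via closedness, and the same appeal to Theorem~\ref{thm:congruence}; the paper simply states that the argument is ``similar to the previous case'' with no transitions out of $F_4(\sigma)$. Your extra attention to condition~IV and the $\Rpsmeasure$ normalisation bookkeeping (which the paper leaves implicit, noting only the resulting output probabilities) is a correct and welcome elaboration rather than a different method.
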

 \begin{proof}
 We have similar equivalence classes as in the previous case:
  \[
 \begin{array}{rcl}
 \pname{F_{1}(\sigma,q_{1})} & = & \{f \mid
 (\vec{x}:\vec{T};\sigma;\emptyset;P)\weaktrans{\inp{a}{q_{1}}}f ~\mbox{and}~ P \in E\}\\
 \pname{F_{2}(\sigma,q_{1},q_{2})} & = & \{f \mid
 (\vec{x}:\vec{T};\sigma;\emptyset;P)\weaktrans{\inp{a}{q_{1}}}\weaktrans{\inp{b}{q_{2}}}f ~\mbox{and}~ P \in E\}\\
 \pname{F_{3}(\sigma,q_{2})} & = & \{f  \mid  (\vec{x}:\vec{T};\sigma;\emptyset;P)
 \weaktrans{\inp{a}{q_{1}}} \weaktrans{\inp{b}{q_{2}}}\weaktrans{\outp{out_1}{c_{1}}}f ~\mbox{and}~ P \in E\}\\
  \pname{F_{4}(\sigma)} & = & \{f  \mid  (\vec{x}:\vec{T};\sigma;\emptyset;P)
 \weaktrans{\inp{a}{q_{1}}} \weaktrans{\inp{b}{q_{2}}}\weaktrans{\outp{out_1}{c_{1}}}\weaktrans{\outp{out_2}{c_{2}}}f ~\mbox{and}~ P \in E\}
 \end{array}
 \]
Here $E$ is $\{\pmodelt, \pspect \}$ and the proof is similar to the previous case. In $\pmodelt$, we will always get a correct output since we do not consider any error and the probability of getting one of the outputs is $\frac{1}{4}$. 
Similar to the previous proof, here we have no transitions from $F_4(\sigma)$.\qed
 \end{proof}

\end{document}